\theoremstyle{plain}
\newtheorem*{theorem*}{Theorem}
\newtheorem{notation}[theorem]{Notation}
\newtheorem*{notation*}{Notation}
\newtheorem{property}[theorem]{Property}
\newtheorem{proposition}[theorem]{Proposition}
\newtheorem{numremark}[theorem]{Remark}
\newcommand{\AutomatonScale}{0.6}
\title{An efficient algorithm to decide periodicity of $b$-recognisable sets using MSDF convention}
\author{Bernard Boigelot}
\author{Isabelle Mainz}
\author{Victor Marsault\thanks{Supported by a Marie Sk{\l}odowska-Curie fellowship,
co-funded by the European Union.}}
\author{Michel Rigo}
\affil{Montefiore Institute \& Department of Mathematics, Universit\'e de Li\`ege, Belgium \\
          \texttt{\{bernard.boigelot, isabelle.mainz, victor.marsault, m.rigo\}@ulg.ac.be}}
\authorrunning{B. Boigelot, I. Mainz, V. Marsault, and M. Rigo}
\titlerunning{An efficient algorithm to decide periodicity of $b$-recognisable sets with MSDF}
\keywords{integer-base systems; automata; recognisable sets; periodic sets.}
  \newcommand{\ZZ}[1][p]{\mathbb{Z}/#1\mathbb{Z}}
\newcommandx{\ARpb}[3][1=R,2=p,3=b]{\mathcal{A}_{(#2,#1)}}
\newcommandx{\BIb}[2][1=I,2=b]{\mathcal{B}_{#1}}
\newcommandx{\CRpIb}[4][1=R,2=p,3=I,4=b]{\mathcal{C}_{(#2,#1,#3)}}
\newcommandx{\intalph}[1][1={b}]{\llbracket #1 \rrbracket}
\newcommandx{\val}[2][2={}]{\overline{\,#1\,}^{#2}}
\newcommandx{\rep}[2][2={}]{\langle#1\rangle_{#2}}
\newcommand{\wlen}[1]{|{#1}|}
\newcommand{\Ac}{\mathcal{A}}
\newcommand{\Bc}{\mathcal{B}}
\newcommand{\Mc}{\mathcal{M}}
\newcommand{\Gc}{\mathcal{G}}
\newcommand{\N}{\mathbb{N}}
\newcommand{\crl}[1]{\langle#1\rangle}
\renewcommand{\mod}{{\sf\%}}
\newsavebox{\vmcommentbox}
  {%
    \newcommand\colboxcolor{#1}%
    \begin{lrbox}{\vmcommentbox}%
      \begin{minipage}{\textwidth-4\parindent}%
        \stepcounter{thm}%
        \textbf{Commentaire \thethm{ }(VM).} \itshape %
  }{%
      \end{minipage}%
    \end{lrbox}%
    \par\noindent\hfill%
    \colorbox{\colboxcolor}{\usebox{\vmcommentbox}}
    \hfill\hspace*{0cm}%
  }
\newcounter{subthm}\newcounter{cond}
\newcommand{\sublabel}[1]{\refstepcounter{cond}\label{#1*}\refstepcounter{subthm}\label{#1}}
\newcommand*{\minwidthbox}[2]{%
  \makebox[{\ifdim#2<\width\width\else#2\fi}][c]{#1}%
}
\newcommand{\rpath}[1]{
\xrightarrow{\minwidthbox{$\scriptstyle\hspace{0.4ex}#1\hspace{0.6ex}$}{2.75ex}}}
\newcommand{\vm@date@separator}{\hspace*{0.15ex}\rule[0.4\vm@date@height]{1ex}{0.07\vm@date@height}\hspace*{0.15ex}}
\newcommand{\vmdatefont}[1]{#1}
\newcommand{\isotoday}{%
  \vmdatefont{
    \newdimen\vm@date@height%
    \setbox0=\hbox{0123456789}%
    \vm@date@height=\ht0 \advance\vm@date@height by -\dp0
    \the\year\vm@date@separator\two@digits{\month}\vm@date@separator\two@digits{\day}%
  }%
}
\begin{document}

  \maketitle

  \begin{abstract}

  Given an integer base $b>1$, a set of integers is represented in base~$b$ by a
  language over~$\{0,1,...,b-1\}$.  The set is said to be $b$-recognisable if its
  representation is a regular language.  It is known that eventually periodic sets
  are $b$-recognisable in every base $b$, and Cobham's theorem implies the converse:
  no other set is $b$-recognisable in every base $b$.

  We are interested in deciding whether a $b$-recognisable set of integers
  (given as a finite automaton) is eventually periodic.  Honkala showed that this
  problem decidable in 1986 and recent developments give efficient decision
  algorithms.  However, they only work when the integers are written with the
  least significant digit first.

  In this work, we consider the natural order of digits (Most Significant
  Digit First) and give a quasi-linear algorithm to solve the problem in this
  case.

  \end{abstract}

%

\section*{Introduction}

Let $b>1$ be an integer base. We let $\intalph = \{ 0,1,\ldots, b-1\}$ denote the canonical alphabet of base-$b$ digits. If $u=u_\ell\cdots u_0$ belongs to $\intalph^*$, we let $\val{u}$ denote the \emph{value} of~$u$ in base~$b$, \emph{i.e.}, $\val{u}=\sum_{i=0}^\ell u_i\, b^i$. Note that the leftmost digit is the most significant one. We let $\rep{n}$ denote the (shortest) \emph{base-$b$ representation} of $n$. We set $\rep{0}$ to be the empty word $\varepsilon$. If reference to the base $b$ is needed, we write $\rep{n}[b]$. Thus $\rep{n}$ is the unique word $u$ over $\intalph$ not starting with $0$ and such that $\val{u}=n$. Moreover, for every $u\in\intalph^*$ such that $\val{u}=n$, there exists $i\ge 0$ such that $u=0^i\rep{n}$.

\subsection*{Our contribution}

Let $b>1$ be an integer base. In this paper, we develop an algorithm to decide whether a given deterministic automaton~$\Ac$ over the alphabet $\intalph$ accepts, by value, an (eventually) periodic set of integers. More precisely, the question is to decide whether there exist integers $p\ge 1$ and $N\ge 0$ such that,
for all words $u\in\intalph^*$, if $\val{u}\ge N$, then $u$ is accepted by $\Ac$ if and only if $\rep{\val{u}+p}$ is accepted as well. \emph{Acceptance by value} means that words sharing the same value are either all accepted or all rejected. Stated otherwise, a word $u$ is accepted by $\Ac$ if and if only if $0u$ is accepted. The main result of this paper is the following one.

\begin{theorem}\label{t.main}
  Given an integer base~$b>1$ and a~$n$-state deterministic automaton~$\Ac$ over the
  alphabet~{$\intalph$}, it is decidable in~$O(b n \log n)$ time
  whether or not~$\Ac$ accepts, by value, some eventually periodic set of integers.
\end{theorem}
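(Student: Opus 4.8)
The plan is to decide whether $L(\Ac)=L(\CRpIb)$ for some period $p$, residue set $R$ and exceptional set $I$ — equivalently, whether $L(\Ac)$ is eventually periodic — by reducing this to the ultimate periodicity of the sequence of states visited along canonical representations, and then checking that sequence by a structural test on the minimised automaton rather than by guessing a (possibly exponentially large) period. First I would normalise and minimise. Acceptance by value forces $u\in L(\Ac)\iff 0u\in L(\Ac)$, so the residual languages of $q_0$ and of $\delta(q_0,0)$ coincide; after minimisation by Hopcroft's algorithm (cost $O(bn\log n)$, which already accounts for the stated bound) the minimal automaton $\Mc$ therefore satisfies $\delta(q_0,0)=q_0$. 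For $n\in\N$ write $q(n)=\delta^*(q_0,\rep{n})$. The backbone is the identity $\rep{bn+a}=\rep{n}\,a$ (valid for $n\ge1$), which gives $q(bn+a)=\delta(q(n),a)$; equivalently, $(q(n))_{n\ge0}$ is the fixed point of the $b$-uniform morphism $\sigma$ on the state set defined by $\sigma(s)=\delta(s,0)\,\delta(s,1)\cdots\delta(s,b-1)$, since $\sigma^k(q_0)=q(0)q(1)\cdots q(b^k-1)$.

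The key reduction lemma is then: $L(\Ac)$ is eventually periodic if and only if $(q(n))_{n}$ is ultimately periodic. The direction ($\Leftarrow$) is immediate, since finality of $q(n)$ inherits the period. For ($\Rightarrow$), if $S=L(\Ac)$ has period $p$ beyond $N$, then for $n,n'\ge N$ with $n\equiv n'\pmod{p}$ every extension satisfies $b^{|v|}n+\val{v}\equiv b^{|v|}n'+\val{v}\pmod{p}$ while both values are $\ge N$; hence the residual languages — and so the states $q(n),q(n')$ of the minimal $\Mc$ — coincide, so $q(n)$ depends only on $n\bmod p$ for $n\ge N$. Thus it remains to decide ultimate periodicity of the $b$-automatic word $(q(n))_{n}$, i.e. of the fixed point of $\sigma$.

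Here the difficulty peculiar to this problem appears: the period may be exponential in the size of $\Mc$ (for instance the multiples of $b^k$ are $p$-periodic with $p=b^k$ yet recognised by $O(k)$ states), so one cannot afford to compute a candidate period and verify it, nor to reverse $\Mc$ and determinise, which costs an exponential blow-up and is exactly what the MSDF convention must avoid. Instead I would characterise ultimate periodicity by a local condition on $\Mc$: after an SCC decomposition, classify each state $s$ according to whether its occurrence set $\{n:q(n)=s\}$ is syndetic-and-periodic or sparse, the latter being detectable from the cycle structure of the digit-actions $\delta(\cdot,a)$. The naive guess ``look only at the bottom strongly connected component'' is insufficient: for the powers of $b$ the accepting state sits in a non-bottom component yet is visited exactly at the sparse set $\{b^k\}$, which destroys periodicity, so the criterion must track, for every recurrently visited state, whether it returns periodically. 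Granting the criterion, the test runs in $O(bn)$ time on $\Mc$, and the overall cost $O(bn\log n)$ is dominated by the single minimisation.

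I expect the heart of the proof — and the main obstacle — to be formulating and proving correct this structural criterion under the MSDF convention. Because adding a period propagates carries through the most significant (last-read) digits, periodicity is not visible locally as it is in the LSDF setting exploited by earlier algorithms; one must instead argue globally about the morphism $\sigma$, separating the states that grow into genuine cyclic patterns from those (like the powers-of-$b$ example) whose recurrences are sparse, and certify all of this by a linear-time inspection of $\Mc$ so as to remain within the $O(bn\log n)$ budget.
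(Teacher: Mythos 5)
Your setup is sound as far as it goes: the reduction of eventual periodicity of $\val{L(\Ac)}$ to ultimate periodicity of the state word $(q(n))_n$ of the minimal automaton is correct, and you rightly identify the two traps (the period can be exponential in $n$, so guess-and-verify fails, and reversing plus determinising blows up). But the argument stops exactly where the theorem's content begins: the ``structural criterion'' separating periodic from sparse recurrence is never formulated, let alone proved, and you say so yourself. Note also that your reduction is essentially a restatement of the problem --- a set is eventually periodic iff its characteristic word is an ultimately periodic $b$-automatic word --- so nothing has been decided yet. The missing idea, which is the heart of the paper, is the identification of a \emph{canonical candidate} for the part of the period coprime to $b$: writing $p=kd$ with $\gcd(k,b)=1$ and $d\mid b^j$, the paper proves that $k$ (respectively $k+1$ in the impurely periodic case) must equal the number $\ell$ of states of the minimal automaton lying on circuits labelled $0$ (Propositions~\ref{p.mini-k-stat} and~\ref{p.mini-k-stat-bis}), and then characterises acceptance of a periodic set by: (a)~existence of a pseudo-morphism $\phi:\Ac\rightarrow\ARpb[?][\ell]$, and (b)~the partition induced by $\phi$ refines the ultimate-equivalence relation (Theorems~\ref{t.pp} and~\ref{t.ip}, the impure case requiring additionally that the initial state carry its $0$-self-loop as only incoming transition). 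The $b$-smooth part $d$ of the period never needs to be guessed: it is absorbed by the uniform bound $m$ in the ultimate-equivalence classes, yielding period $\ell b^m$. Your ``syndetic versus sparse occurrence sets'' classification, as stated, is not a decision procedure and determines no candidate period; without something playing the role of the $0$-circuit count and the pseudo-morphism test, the plan cannot be executed.

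There is also a complexity gap. You claim that after minimisation the structural test runs in $O(bn)$, so that the $O(bn\log n)$ bound is ``dominated by the single minimisation.'' Even granting a criterion of the shape you sketch, this is unsupported: the refinement check (b) requires computing which pairs of states are ultimately-equivalent, and the naive algorithm for that costs $O(bn^2)$; the paper needs the B\'eal--Crochemore state-merging algorithm (Proposition~\ref{p.ue-com}) to bring this to $O(bn\log n)$, which is a second genuine bottleneck alongside Hopcroft minimisation. Finally, the eventual case does not reduce to a single test: the paper must treat purely and impurely periodic sets separately (Corollaries~\ref{c.pp} and~\ref{c.ip}), because the initial state behaves differently in the two cases ($\ell$ versus $\ell+1$ states on $0$-circuits, and the extra incoming-transition condition), a case split absent from your outline.
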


We stress the fact that the input automaton $\Ac$ reads words most significant digit first (MSDF). This is an important difference with other results discussed in the literature. For instance, an efficient algorithm to solve this decision problem is provided for automata reading least significant digits first (LSDF) \cite{MarsaultSakarovitch2013}. One can therefore think that it is enough to take the reversal of $\Ac$ and thus consider entries LSDF. Nevertheless, the reversal of $\Ac$ has first to be determinised. This potentially leads to an exponential blow-up in the number of states and thus to an inefficient procedure.

\subsection*{Motivations and related results}

We say that a set $X\subseteq\N$ is \emph{$b$-recognisable} if $\rep{X}[b]$ is accepted by some finite automaton. One reason why eventually periodic sets of integers play a special role comes from the celebrated theorem of Cobham about the dependence to the base of $b$-recognisability.

\begin{theorem*}[Cobham, \cite{Cobham1969}]
Let $b,c>1$ be two multiplicatively independent integers. A set $X$ of integers is such that the languages $\rep{X}[b]$ and $\rep{X}[c]$ are both accepted by finite automata if and only if $X$ is eventually periodic.
\end{theorem*}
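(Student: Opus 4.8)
The plan is to prove the two implications separately. The easy direction—every eventually periodic set is recognisable in both bases—I would dispatch first and quickly. If $X$ is the union of a finite set with finitely many arithmetic progressions of common difference $p$ beyond a threshold $N$, then for each base $d\in\{b,c\}$ one builds a deterministic automaton reading $\rep{n}[d]$ that maintains the running value modulo $p$ (a finite computation, since reading one more digit replaces the residue $r$ by $d\,r+\text{digit}$ modulo $p$) together with a bounded counter that detects whether $n\ge N$; the accepting states are then chosen according to the admissible residues and the finitely many small exceptional values. This shows that $\rep{X}[b]$ and $\rep{X}[c]$ are both regular.

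For the substantial direction I assume $X$ is both $b$- and $c$-recognisable and aim to prove it eventually periodic. Since I may freely replace $X$ by $\N\setminus X$ and discard a finite prefix, I assume that $X$ and its complement are both infinite. I would work with the characteristic sequence $\mathbf{x}=(\mathbf{1}_{n\in X})_{n\ge 0}$ and use the standard reformulation that $d$-recognisability is equivalent to finiteness of the $d$-kernel $\{(\mathbf{x}(d^{j}n+r))_{n}:j\ge 0,\ 0\le r<d^{j}\}$. Finiteness of the $b$-kernel encodes a self-similarity of $\mathbf{x}$ at every scale $b^{j}$: the restriction of $\mathbf{x}$ to a block $[\,b^{j}k,\ b^{j}(k+1)\,)$ depends only on the state reached after reading $\rep{k}[b]$ in a fixed automaton for $X$, so there are only finitely many block ``types''. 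The analogous statement holds for $c$ at the scales $c^{k}$.

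The number-theoretic heart—and the only place where multiplicative independence is used—is that $\log b/\log c$ is irrational, so by Kronecker's theorem the numbers $j\log b-k\log c$ are dense near $0$; equivalently, for every $\varepsilon>0$ there are arbitrarily large exponents $j,k$ with $1\le b^{j}/c^{k}<1+\varepsilon$. Thus the two families of scales $\{b^{j}\}$ and $\{c^{k}\}$ are incommensurable yet can be made to nearly coincide. The strategy is then to combine the two self-similarities across a pair of almost equal scales $b^{j}\approx c^{k}$: the $b$-structure forces $\mathbf{x}$ to repeat one finite family of block types, while the $c$-structure forces it to repeat another family on an almost identical partition of a long interval, and reconciling the two partitions on their overlap forces $\mathbf{x}$ to be invariant under translation by a fixed period on arbitrarily long ranges—that is, eventually periodic.

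The main obstacle is precisely this final reconciliation, which is the famously hard core of Cobham's theorem. Carrying it out rigorously requires first establishing a dichotomy for recognisable sets—every value occurs either \emph{syndetically}, with bounded gaps, or only along a geometrically sparse set of positions—and then a delicate pigeonhole and counting argument controlling how block types propagate across the two scales. I expect essentially all of the genuine difficulty to reside in proving the syndeticity dichotomy and in quantifying the error incurred when transporting the $c$-partition onto the $b$-partition through the approximation $b^{j}\approx c^{k}$; once both are in hand, promoting ``invariant under a fixed translation on long ranges'' to genuine eventual periodicity is routine bookkeeping.
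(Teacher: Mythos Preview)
The paper does not contain a proof of Cobham's theorem. It is stated in the introduction purely as a background result, with a citation to Cobham's original 1969 paper, in order to motivate why eventually periodic sets are the natural objects to detect. There is therefore nothing in the paper to compare your proposal against.

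As for the proposal itself: your treatment of the easy direction is fine and standard. For the hard direction you have correctly identified the architecture of the classical proof---finiteness of the $b$- and $c$-kernels, the density consequence of the irrationality of $\log b/\log c$, the syndeticity dichotomy, and the reconciliation of two nearly coincident block partitions---and you are candid that the ``reconciliation'' step is where the real work lies and that you have not carried it out. That candour is appropriate: the step you defer is precisely the substance of Cobham's theorem, and nothing short of a full argument (along the lines of Cobham, Hansel, or the presentations in Allouche--Shallit or Rigo) will close it. So what you have written is an accurate outline rather than a proof; it would not be accepted as a proof of the statement, but since the paper itself offers none, no discrepancy with the paper arises.
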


In combinatorics on words, when studying morphic words (for details and definitions, for instance, see \cite{AlloucheShallit2003,BertRigo10-b}), Cobham's theorem can be reformulated as follows. Let $b,c>1$ be two multiplicatively independent integers. An infinite word $\mathbf{x}$ is both $b$-automatic and $c$-automatic if and only if $\mathbf{x}$ is of the form $uv^\omega$ where $u,v$ are finite words. Indeed, a set of integers is $b$-recognisable if and only if its characteristic sequence is $b$-automatic. The decision problem considered in our Theorem~\ref{t.main} is well known to be decidable.

\begin{theorem*}[Honkala, \cite{Honkala1986}]
It is decidable whether or not a given $b$-automatic word is ultimately periodic.
\end{theorem*}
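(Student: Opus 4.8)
The plan is to prove Theorem~\ref{t.main}: given an MSDF deterministic automaton $\Ac$ on $n$ states, decide in $O(bn\log n)$ time whether $\Ac$ accepts (by value) an eventually periodic set.

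The plan is to decide eventual periodicity by reducing it to a \emph{structural property} of the minimal automaton of the accepted set, arranged so that the only superlinear step is a single minimisation. Concretely, let $X$ be the set accepted by $\Ac$ by value; I would first replace $\Ac$ by the minimal deterministic automaton of $X$ with respect to the value semantics (so that $\varepsilon,0,00,\dots$ are identified and, more generally, words of equal value are equivalent), computed by Hopcroft's algorithm. Since Hopcroft's minimisation runs in time $O(|\intalph|\,n\log n)=O(bn\log n)$, this step alone matches the announced bound, and the whole difficulty is then to exhibit a characterisation of eventual periodicity on the minimal automaton that can be checked in linear time $O(bn)$.

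The characterisation I would aim for reads the MSDF dynamics $v\mapsto bv+d$ off the transition structure. Decomposing the minimal automaton into strongly connected components (linear time) splits it into a \emph{transient prefix}, which should account for the threshold $N$, and a \emph{recurrent part} reached by reading arbitrarily long prefixes, which should account for the period $p$. In the cleanest situation, when $\gcd(b,p)=1$, I expect the recurrent part to be exactly a modular automaton over $\ZZ$, that is, a set of states indexed by residues with transitions $r\mapsto (br+d)\bmod p$ and acceptance depending only on the residue. The forward implication is where the real work lies: using the right congruence of $X$ (Myhill--Nerode), I would show that two prefixes whose values are congruent modulo $p$ and both large enough are equivalent, which forces the recurrent states to be indexed by residues and pins down $p$ as the number of such states, whence $p\le n$; the backward implication (a modular recurrent part recognises an eventually periodic set) is immediate.

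The step I expect to be the main obstacle is the interaction between the period and the base, precisely the case $\gcd(b,p)>1$. Here the argument above breaks: the factors that $p$ shares with $b$ constrain only the \emph{least significant} digits of a representation, i.e. a bounded \emph{suffix} condition, and in the MSDF convention this is exactly the information one does not have while reading, since the end of the word is unknown until it arrives. Symptomatically, the minimal automaton may then have strictly fewer than $p$ states (already for $b=2$ and $X=4\N$ one finds a minimal automaton with three states although $p=4$), so the period is \emph{not} bounded by $n$ and cannot simply be enumerated; equivalently, comparing membership of $v$ and of $v+p$ forces one to reason about carries that propagate from the end of the word towards its beginning, against the reading direction. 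My plan is to isolate this phenomenon by treating separately the part of $p$ coprime with $b$ (captured by the modular recurrent core as above) and the part sharing factors with $b$ (a bounded suffix condition grafted onto the transient prefix), and to verify both by local, linear-time checks on transitions and on finality.

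Putting the pieces together gives the algorithm: minimise $\Ac$ by value in $O(bn\log n)$; compute the strongly connected components and the transient/recurrent split in $O(bn)$; and verify, in $O(bn)$, that the recurrent part satisfies the modular transition equations and that the transient prefix together with the suffix condition extends it periodically. The total cost is dominated by minimisation, giving $O(bn\log n)$. I stress that, in contrast with the naive route through the reversal of $\Ac$, this procedure never determinises a reversed automaton and so avoids the exponential blow-up mentioned above.
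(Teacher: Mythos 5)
You were asked about Honkala's theorem, which the paper only cites; what you actually sketch is the paper's Theorem~\ref{t.main}, and up to a point your plan runs parallel to the paper's: minimise by value (the one $O(bn\log n)$ step in your accounting) and then test a structural characterisation on the minimal automaton. Your analysis of the coprime case is correct and matches the paper (when $\gcd(b,p)=1$, i.e.\ $d=1$, Proposition~\ref{p.not-nero-equi} shows all residues mod $p$ stay Nerode-inequivalent, so the minimal automaton is exactly the modular automaton and $p\le n$). But your proposed resolution of the case $\gcd(b,p)>1$, which you rightly identify as the crux, does not work as stated, so the sketch has a genuine gap precisely where the paper's real work lies.

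The factor of $p$ shared with $b$ is \emph{not} ``a bounded suffix condition grafted onto the transient prefix'': your own example $X=4\N$, $b=2$ refutes that picture, since its three-state minimal automaton is strongly connected --- there is no transient part at all, and the mod-$4$ information is carried by states \emph{inside} the unique recurrent component (the states $1$ and $3$ of $\ARpb[\{0\}][4]$ merge, and the quotient is a single scc). In the paper the $d$-part manifests instead as \emph{ultimate-equivalence}: states congruent modulo $k$ collapse after reading $j$ further digits (Lemma~\ref{l.equiv-k-ulti-equiv}), and the characterisation (Theorem~\ref{t.pp}) is (a) the existence of a pseudo-morphism $\phi:\Ac\rightarrow\ARpb[?][\ell]$, where $\ell$ is the number of states lying on $0$-circuits --- this counting device (Lemma~\ref{l.0-circ}, Proposition~\ref{p.mini-k-stat}) is how the coprime part $k$ is read off the automaton, and it is absent from your plan; note that the recurrent scc has in general strictly more than $k$ states ($3$ versus $k=1$ in the $4\N$ example), so ``size of the recurrent core'' cannot play this role --- and (b) the requirement that the $\phi$-classes refine ultimate-equivalence. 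Condition (b) is moreover not a ``local, linear-time check'': deciding ultimate-equivalence costs $O(bn^2)$ by the naive pair-graph scc method and $O(bn\log n)$ via B\'eal--Crochemore (Proposition~\ref{p.ue-com}), so your claim that everything after minimisation is $O(bn)$ is unsubstantiated (the overall $O(bn\log n)$ bound survives, since this step matches the minimisation cost). Finally, separating eventually from purely periodic sets needs an extra condition your transient/recurrent split does not capture: the initial state must bear a $0$-self-loop and have no other incoming transition, with the initial state excluded from the refinement condition (Theorem~\ref{t.ip}). As for the bare decidability statement itself, note that it does not need any of this efficiency: Honkala's original argument bounds the candidate period and preperiod in terms of the automaton and tests each candidate, which is exponential but complete, whereas your sketch, read as a proof, leaves its hardest case as a plan.
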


Complexity issues are however not all considered in Honkala's paper. The decidability of our problem of interest can also be obtained using a first-order logic characterization of $b$-recognisable sets given by B\"uchi's theorem, and the fact that Presburger arithmetic is decidable \cite{BruyereHanselMichauxVillemaire,AlloucheRampersadShallit2009}. These independent approaches all lead to decision procedures with exponential complexity.

Using LSDF convention, efficient decision procedures are known. First, Leroux obtained a quadratic decision procedure \cite{Leroux2005} for utimately-periodic $b$-recognisable sets of integers. Then, the result was improved as follows.

\begin{theorem*}[Marsault, Sakarovitch, \cite{MarsaultSakarovitch2013}]
Given an integer base~$b>1$ and a~$n$-state deterministic automaton~$\Ac$ over the
  alphabet~{$\intalph$}, it is decidable in~$O(b\, n\log n)$ time
  whether or not~$\Ac$ accepts, with LSDF convention, some eventually periodic set of integers.
\end{theorem*}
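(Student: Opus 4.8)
The plan is to decide eventual periodicity by reconstructing, from $\Ac$, a \emph{canonical} automaton for an eventually periodic set and then testing equivalence; the quasi-linear bound will come from performing the only expensive step, minimisation, with Hopcroft's algorithm. I first fix the canonical object. In the LSDF convention a purely periodic set $X = \{n \in \N : (n \mod p) \in R\}$ is accepted, by value, by the deterministic ``clock'' automaton whose states are pairs $(s,w) \in \ZZ \times \ZZ$, with initial state $(0,1)$, transition $(s,w) \rpath{d} ((s + dw)\mod p,\ (bw)\mod p)$ and final states those with $s \in R$: after reading $u$ the first component is $\val{u} \mod p$. Adding a finite automaton for an exceptional set $I$ of small values produces the canonical automaton $\CRpIb$ of an arbitrary eventually periodic set. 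The first task is to prove that a set is eventually periodic if and only if it is accepted by value by some $\CRpIb$, and to pin down the \emph{minimal} such automaton.

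The structural heart is the factorisation $p = p_1 p_2$ with $p_2$ coprime to $b$ and every prime of $p_1$ dividing $b$. In LSDF this split is geometric: the clock component $w = b^i \mod p$ is eventually periodic in $i$ with a transient of length $v \approx \log_b p_1$ (divisibility by the $b$-part $p_1$ is a condition on the $v$ lowest-order digits) followed by a cycle of length $\mathrm{ord}_{p_2}(b)$ (the coprime part $p_2$). I would show that the minimal LSDF automaton of an eventually periodic set therefore has a single terminal cycle, carrying $p_2$ and $R$, fed by a transient tail whose depth is governed by $p_1$ and by the finite exceptional set $I$. Crucially the automaton stays of size $O(n)$ while $p$ itself may be exponential in $n$, so the algorithm must recover $p$ in factored form $(p_1,p_2)$ from the transient depth and the cycle length rather than enumerate residues.

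Granting this characterisation the algorithm has three phases. First, verify acceptance by value (invariance under appending a high-order $0$, i.e.\ preservation of the accepting status by the $0$-transition) and minimise $\Ac$; by Hopcroft's algorithm this costs $O(b\,n\log n)$ and dominates. Second, compute the strongly connected components in time $O(b\,n)$, locate the terminal cycle, read off $p_2$, the cyclic residue data $R$ and, from the transient, the exponent controlling $p_1$ together with the exceptional set $I$; reassemble the candidate $\CRpIb$. Third, test $L(\Ac) = L(\CRpIb)$ with one further equivalence check inside the same budget. Soundness (every automaton equivalent to some $\CRpIb$ accepts an eventually periodic set) is immediate; completeness is the assertion that the extracted data reconstruct a genuine triple $(R,p,I)$.

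I expect the characterisation of the minimal LSDF automaton in phase two to be the main obstacle: one must prove that the recurrent behaviour condenses into exactly one cycle whose length is $\mathrm{ord}_{p_2}(b)$, that the transient depth faithfully encodes $p_1$, and that no two states of the minimal automaton collapse the period --- the delicate point being the interaction through $\gcd(b,p)$, which determines precisely when distinct residues of $\ZZ$ become equivalent. Establishing that the reconstructed $p = p_1 p_2$ is genuinely the \emph{minimal} period, and handling the degenerate cases $p = 1$ (eventually constant characteristic sequence) and an empty transient, are the remaining, largely routine, verifications.
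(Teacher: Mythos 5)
First, a point of order: this paper does not prove the statement you were given; it is quoted as background from the cited work of Marsault and Sakarovitch, so there is no in-paper proof to match yours against. Judged on its own terms, your proposal has a genuine gap at its declared ``structural heart''. The claim that in the minimal LSDF automaton of an eventually periodic set ``the recurrent behaviour condenses into exactly one cycle whose length is $\mathrm{ord}_{p_2}(b)$'' is false. Take $b=2$ and $X=3\N$ (so $p=p_2=3$, $\mathrm{ord}_3(2)=2$): in your clock automaton the Nerode class of a state $(s,w)$ is determined by $sw^{-1}\in\ZZ[3]$, and the minimal automaton has three states, is strongly connected, and every state has \emph{both} letters staying inside the component --- it is a full strongly connected automaton with $6$ transitions, not a cycle, and its size is $3$, not $\mathrm{ord}_3(2)=2$. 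Worse, the recurrent part need not even be a single component: for $b=2$ and $X=2\N$ the minimal LSDF automaton has two terminal sink components (parity is fixed by the first digit read). In general, since from a recurrent state $(s,b^i)$ one can only reach first components congruent to $s$ modulo $p_1$, the recurrent part splits into up to $p_1$ terminal SCCs, each a $b$-ary strongly connected automaton. So the data $(p_2,R)$ that your phase two proposes to ``read off the terminal cycle'' has no well-defined carrier, and the lemma you planned to prove cannot be repaired as stated.

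There is a second, independent problem: your phases two and three are in tension with your own (correct) observation that $p$ may be exponential in $n$. You say $p$ must be recovered ``in factored form $(p_1,p_2)$'', but then propose to ``reassemble the candidate $\CRpIb[R][p][I]$'' and test $L(\Ac)=L(\CRpIb[R][p][I])$; that automaton has size governed by $p$, so it cannot be built, let alone compared, within $O(bn\log n)$, and no equivalence test against an implicitly represented candidate is supplied. Note also that the transient depth only yields the exponent $j$ with $p_1\mid b^j$, $p_1\nmid b^{j-1}$; for non-prime-power $b$ this does not determine $p_1$ (e.g.\ $b=6$, $j=1$ allows $p_1\in\{2,3,6\}$), so even the $p_1$ extraction is underdetermined. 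This is precisely why the known quasi-linear procedures --- Marsault--Sakarovitch in the LSDF case, and the present paper in the MSDF case --- do not reconstruct a canonical automaton and test equivalence, but instead verify a local structural characterisation in place (here: existence of a pseudo-morphism onto $\ARpb[?][\ell]$, with $\ell$ read off the $0$-circuits, refined by the ultimate-equivalence relation computed \`a la B\'eal--Crochemore). Your proposal would need a correct structure theorem of that kind, plus a way to certify periodicity without materialising the period, before the complexity claim can stand.
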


Leroux's result is stated in a multi-dimensional setting, \emph{i.e.}, the problem is to decide whether or not a $b$-recognisable subset of $\N^d$ is semi-linear. In that direction, see \cite{Semenov1977,Muchnik2003,Leroux2005}.



\subsection*{Generalisation to real numbers}

Real numbers can be encoded in a base $b > 1$ by extending
positional encoding to infinite words: A word encoding a real is
composed of a finite prefix corresponding to an integer part, followed
by a single occurrence of a distinguished symbol acting as a
separator, and an infinite suffix representing a fractional
part. Infinite-word automata are then able to recognise sets of
reals. It has been established that \emph{weak deterministic
  automata}, a restricted class of infinite-word automata, are
sufficiently expressive for recognising all sets definable in mixed
integer and real first-order additive
arithmetic~\cite{BoigelotJodogneWolper2005}.

The properties of sets of real numbers that can be recognised by weak
deterministic automata in all bases $b > 1$ have been
investigated~\cite{BoigelotBrusten2009}. Such sets generalise to the
real domain the notion of eventual periodicity; they precisely
correspond to finite combinations of eventually periodic sets of
integers, and intervals of $[0, 1]$. Checking whether an automaton
recognises such a set can be done by first splitting this automaton
into finite-state machines operating on the integer and fractional
parts of encodings. The former are then checked in the same way as
for MSDF integer encodings, and the latter by verifying that they
obey the simple structure documented in~\cite{BoigelotBrusten2009},
which is a simple operation.
As a consequence, the algorithm developed
in this paper also leads to an efficient procedure for checking that
a weak deterministic automaton recognises an eventually periodic set
of reals.

\subsection*{Generalisation to other numeration systems}

Automatic words form a particular class of morphic words. Similarly, integer-base systems are special cases of more general numeration systems such as those built on a linear recurrent sequence. One can define a \emph{numeration system} as a one-to-one map $s$ from $\N$ to a language $L$ over a finite alphabet. The integer $n$ is mapped to its representation $s(n)$ within the considered system. Hence, it is natural to ask, for given a numeration system $s$ and a subset $M$ of $L$ accepted by a finite automaton $\Ac$, whether or not the $s$-recognisable set $s^{-1}(M)\subseteq\N$ is eventually periodic.

On the one hand, Honkala's result is extended as follows. It is decidable whether or not a given morphic word is ultimately periodic \cite{Durand2013,Mitrofanov2011}. On the other hand, B\"uchi's theorem can be extended to linear numeration systems whose characteristic polynomial is the minimal polynomial of a Pisot number. See, for details, \cite{BruyereHansel1995}. In that setting, several decision problems in combinatorics on words, including the ultimate periodicity problem, are decidable \cite{CharlierRampersadShallit2012}. Using Honkala's techniques, the decision problem considered in our Theorem~\ref{t.main} is generalized to a large class of numeration systems in \cite{BellCharlierFraenkelRigo2009}. In particular, there are systems in this class for which the logical setting may not be applied. For all these decidability results presented in a wider context, no efficient procedure is known.

  \section{Preliminaries}


In this paper, we only consider deterministic accessible finite automata with
an input alphabet of the form $\intalph$. We use the acceptance-by-value convention. Thus, we may assume that the initial state bears a loop with label $0$. In particular, this will always be the case after minimisation.
Let $\Ac$ be an automaton. Its
set of states (resp. its initial state, its set of final states) is
denoted by $Q_\Ac$ (resp.~$i_\Ac$, $F_\Ac$). If the considered
automaton is clear from the context, $(s\cdot u)$ is the state~$s'$ such
that~$s\rpath{u}s'$.  The language accepted by $\Ac$ is denoted by
$L(\Ac)$. In this section, we recap basic results about automata.

\subsection{Automaton morphisms and pseudo-morphisms}

\begin{definition}\label{d.morp}
  Given two (accessible) automata~$\Ac$ and~$\Mc$ over $\intalph$, an \emph{automaton morphism} $\Ac\rightarrow\Mc$
  is a function~$\phi:Q_\Ac\to Q_\Mc$ that satisfies:
    \begin{gather}
    \label{eq.defi-morp-init}
    \phi(i_\Ac)=i_\Mc \\
    \label{eq.defi-morp-exis}
    \forall s\in{Q_\Ac},~\forall a\in\intalph\quad (s\cdot a) \text{ exists in }\Ac \iff (\phi(s)\cdot a) \text{ exists in }\Mc \\
    \label{eq.defi-morp-tran}
    \forall s,s'\in Q_\Ac,~ \forall a\in\intalph\quad
    s\rpath{a}s' \text{ in }\Ac \implies \phi(s)\rpath{a} \phi(s') \text{ in }\Mc \\
    \label{eq.defi-morp-fina}
    F_\Ac=\phi^{-1}(F_\Mc)
  \end{gather}
\end{definition}

\begin{definition}
If a function~$\phi$ satisfies~$(\ref{eq.defi-morp-init})$, $(\ref{eq.defi-morp-exis})$
  and $(\ref{eq.defi-morp-tran})$
  but not necessarily $(\ref{eq.defi-morp-fina})$, then we say that we have an \emph{automaton pseudo-morphism}.
\end{definition}
\begin{definition}
  Two states~$s,s'$ of an automaton~$\Ac$ are \emph{Nerode-equivalent} if, for every
  word~$u$,~$(s\cdot u)$ exists and is final if and only if~$(s'\cdot u)$ exists and is final.
\end{definition}
The next result is classical. See, for instance, \cite{Sakarovitch2009}.
\begin{theorem}[Myhill--Nerode]
  Let~$\Ac$ be a complete automaton.
  Among all the complete automata accepting~$L(\Ac)$, up to isomorphism, there exists a unique one with a minimal
  number of states, called the \emph{minimisation of~$\Ac$}. Moreover, if~$\Mc$ denotes the minimisation of~$\Ac$, then there exists an automaton
  morphism~$\phi:\Ac\rightarrow\Mc$ (called the \emph{minimisation morphism}) such that
  \begin{equation}
    \forall s,s'\in\Ac \quad \phi(s)=\phi(s') \iff \text{$s$ and $s'$ are Nerode-equivalent}.
  \end{equation}
\end{theorem}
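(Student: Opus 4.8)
The plan is to realise the minimisation concretely as the quotient of $\Ac$ by Nerode-equivalence, and then to show that every complete accessible automaton accepting $L(\Ac)$ surjects onto this quotient. Write $s\sim s'$ when $s$ and $s'$ are Nerode-equivalent. First I would check that $\sim$ is an equivalence relation and, crucially, a right congruence: since $\Ac$ is complete, $(s\cdot u)$ always exists, and $(s\cdot a)\cdot u = s\cdot(au)$ shows that $s\sim s'$ implies $(s\cdot a)\sim(s'\cdot a)$ for every $a\in\intalph$. I would then form the quotient automaton $\Mc$ whose states are the $\sim$-classes $[s]$, with initial state $[i_\Ac]$, transitions $[s]\cdot a=[s\cdot a]$ (well defined by the congruence property), and final states $\{[s]:s\in F_\Ac\}$ (well defined because taking $u=\varepsilon$ in the definition of $\sim$ shows equivalent states to be simultaneously final or non-final). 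By construction the projection $\phi:s\mapsto[s]$ is an automaton morphism in the sense of Definition~\ref{d.morp}, and it satisfies $\phi(s)=\phi(s')\iff s\sim s'$, which is exactly the required property.

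Next I would verify the two facts that make $\Mc$ a genuine minimisation. Because $\phi$ is a morphism, property~\eqref{eq.defi-morp-fina} yields $u\in L(\Ac)\iff i_\Mc\cdot u\in F_\Mc$, so $L(\Mc)=L(\Ac)$; and $\Mc$ is accessible, being the surjective image of the accessible automaton $\Ac$. The key structural point is that $\Mc$ is \emph{reduced}, i.e.\ no two distinct states are Nerode-equivalent: if $[s]$ and $[s']$ were Nerode-equivalent in $\Mc$, then unfolding $[s]\cdot u=[s\cdot u]$ together with the characterisation of final classes would give $s\sim s'$ in $\Ac$, hence $[s]=[s']$.

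Finally I would establish minimality and uniqueness together. Given any complete accessible automaton $\Bc$ with $L(\Bc)=L(\Ac)$, I would define $\psi:Q_\Bc\to Q_\Mc$ by $\psi(t)=i_\Mc\cdot w$ for any word $w$ with $i_\Bc\cdot w=t$ (such $w$ exists by accessibility). The map $\psi$ is surjective because $\Mc$ is accessible, and one checks directly that it is an automaton morphism. Consequently $\card{Q_\Bc}\ge\card{Q_\Mc}$, so $\Mc$ has the minimal number of states; and if $\Bc$ attains this minimum, then the surjection $\psi$ is a bijective morphism, hence an isomorphism, which gives uniqueness up to isomorphism. The hard part will be the well-definedness of $\psi$: if $i_\Bc\cdot w=i_\Bc\cdot w'$, then $w$ and $w'$ have the same residual language in $\Bc$, hence the same residual in $L(\Bc)=L(\Mc)$, so $i_\Mc\cdot w$ and $i_\Mc\cdot w'$ are Nerode-equivalent in $\Mc$ and therefore equal. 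This step is precisely where the reducedness of $\Mc$ established above is indispensable.
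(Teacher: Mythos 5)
Your proof is correct; the more important observation is that there is nothing in the paper to compare it against. The paper does not prove this theorem: it is quoted as a classical result with a pointer to the literature (Sakarovitch's book), so you have supplied a proof where the authors supplied a citation. Your argument is the standard one and it is sound: you quotient $\Ac$ by Nerode-equivalence after checking it is a right congruence (the only place completeness of $\Ac$ is genuinely used, since $(s\cdot u)$ must always exist for the congruence computation $(s\cdot a)\cdot u = s\cdot(au)$ to go through); the projection $\phi(s)=[s]$ is then a morphism in the sense of Definition~\ref{d.morp} satisfying the required equivalence by construction; the quotient $\Mc$ is reduced; and any complete accessible $\Bc$ with $L(\Bc)=L(\Ac)$ surjects onto $\Mc$ via $\psi(t)=i_\Mc\cdot w$. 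You correctly identify the well-definedness of $\psi$ as the crux, and your reduction of it to reducedness of $\Mc$ (equal residuals in $L(\Bc)=L(\Mc)$ give Nerode-equivalent, hence equal, states of $\Mc$) is exactly right. Two points are worth making explicit in a full write-up: first, the step ``bijective morphism, hence an isomorphism'' deserves a line --- for deterministic complete automata it holds because if $t\rpath{a}t'$ in $\Bc$ then $\psi(t)\rpath{a}\psi(t')$ and determinism of $\Mc$ forces $\psi^{-1}$ to respect transitions as well, while $F_\Bc=\psi^{-1}(F_\Mc)$ transfers final states both ways under a bijection. Second, your construction of $\psi$ assumes $\Bc$ accessible; this matches the paper's standing convention that all automata are accessible, and in any case costs nothing, since a state-minimal complete automaton is necessarily accessible (discarding unreachable states preserves completeness and the language). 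Neither point is a gap, only a spot where your ``one checks directly'' compresses a verification the reader should see once.
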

%

%
%
If $\Ac$ is an automaton and $u$ is a word, we write $(\Ac\cdot u)$ as a shorthand for $(i_\Ac\cdot u)$, i.e., the state reached by the run of~$u$ in~$\Ac$.
\begin{lemma}\label{l.cara-pseu-morp}
  Let~$\Ac$ and~$\Mc$ be two complete (and accessible) automata.
  There exists a pseudo-morphism~${\Ac\rightarrow \Mc}$
  if and only if every pair of words~$u$, $u'$ such that~$(\Mc\cdot u)\neq(\Mc\cdot u')$
  also satisfies~$(\Ac\cdot u)\neq(\Ac\cdot u')$.
\end{lemma}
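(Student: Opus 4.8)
The plan is to reduce the statement to a single \emph{fundamental identity} and then read off both implications from it. First I would note that, since $\Ac$ and $\Mc$ are complete, condition~$(\ref{eq.defi-morp-exis})$ is vacuously satisfied, so a pseudo-morphism $\phi\colon Q_\Ac\to Q_\Mc$ is merely a map satisfying $(\ref{eq.defi-morp-init})$ and $(\ref{eq.defi-morp-tran})$. In a deterministic complete automaton, $(\ref{eq.defi-morp-tran})$ states exactly that $\phi(s\cdot a)=\phi(s)\cdot a$ for every state $s$ and letter $a\in\intalph$; a straightforward induction on $\wlen{u}$ then extends this to $\phi(s\cdot u)=\phi(s)\cdot u$ for every word $u$. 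Combined with $(\ref{eq.defi-morp-init})$ and evaluated at $s=i_\Ac$, this yields the identity $\phi(\Ac\cdot u)=(\Mc\cdot u)$ for all $u$, which is the pivot of the argument.

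For the forward implication I would argue by contraposition. Assuming a pseudo-morphism $\phi$ exists, suppose $(\Ac\cdot u)=(\Ac\cdot u')$. Applying $\phi$ and using the fundamental identity on both sides gives $(\Mc\cdot u)=\phi(\Ac\cdot u)=\phi(\Ac\cdot u')=(\Mc\cdot u')$. Hence every pair $u,u'$ distinguished by $\Mc$ is also distinguished by $\Ac$, as required.

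For the backward implication I would construct $\phi$ explicitly from the hypothesis. Since $\Ac$ is accessible, every state $s\in Q_\Ac$ is of the form $(\Ac\cdot u)$ for some word $u$, and I set $\phi(s)=(\Mc\cdot u)$. The crucial point---and the only step where the hypothesis intervenes---is well-definedness: if $(\Ac\cdot u)=(\Ac\cdot u')$, then the contrapositive of the hypothesis forces $(\Mc\cdot u)=(\Mc\cdot u')$, so the value assigned to $s$ does not depend on the chosen access word. Once $\phi$ is well-defined, $(\ref{eq.defi-morp-init})$ follows by taking $u=\varepsilon$, $(\ref{eq.defi-morp-exis})$ holds trivially by completeness, and $(\ref{eq.defi-morp-tran})$ follows from $\phi((\Ac\cdot u)\cdot a)=\phi(\Ac\cdot ua)=(\Mc\cdot ua)=(\Mc\cdot u)\cdot a=\phi(\Ac\cdot u)\cdot a$.

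I expect the main (indeed, the only) obstacle to be this well-definedness check in the backward direction; everything else is bookkeeping with the fundamental identity. The hypothesis is precisely tailored so that distinct access words to the same $\Ac$-state land on the same $\Mc$-state, which is exactly what makes $\phi$ a genuine function rather than a relation.
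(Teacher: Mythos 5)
Your proof is correct and follows essentially the same route as the paper: the forward direction via the identity $\phi(\Ac\cdot u)=(\Mc\cdot u)$, and the backward direction by defining $\phi$ on access words and checking well-definedness from the hypothesis. The only cosmetic difference is that the paper fixes a representative word $u_s$ per state and then shows independence of the choice, whereas you define $\phi(\Ac\cdot u)=(\Mc\cdot u)$ directly and verify well-definedness up front --- the same argument in different clothing.
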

\begin{proof}
  Forward direction.
  Since a pseudo-morphism $\phi$ respects transitions and the initial state, it follows
  that, for every word~$u$,~$(\Mc \cdot u) = \phi(\Ac\cdot u)$.
  The statement follows immediately.

  \medskip

  Backward direction.
  For every state~$s$, we choose a word~$u_s$ such that~$(\Ac \cdot u_s) = s$ (such a word exists because $\Ac$ is accessible).
  We define a function~$\phi: Q_\Ac \rightarrow Q_\Mc$ as follows.
  For every state~$s\in Q_\Ac$,~$\phi(s)=(\Mc\cdot u_s)$.
  Let us show that~$\phi$ is an automaton pseudo-morphism.

  Let~$s$ be a state of~$\Ac$ and let~$u$ be a word such that~$(\Ac\cdot u)=s$.
  Since~$(\Ac\cdot u) = (\Ac\cdot u_s)$, the hypothesis implies~$(\Mc\cdot u) = (\Mc \cdot
  u_s)$. The definition of~$\phi$ is therefore independent of the choice of the words~$u_s$.

  In particular,~$\phi(i_\Ac)=(\Mc\cdot u_{i_\Ac}) = (\Mc \cdot \varepsilon) = i_\Mc$
  hence~$\phi$ satisfies (\ref{eq.defi-morp-init}).
  Moreover, since both~$\Ac$ and~$\Mc$ are complete, and since~$\phi$ is a
  total function,~$\phi$ also satisfies (\ref{eq.defi-morp-exis}).
  Let~$t\rpath{a} t'$ be a transition of~$\Ac$.
  By definition~$\phi(t)=(\Mc \cdot u_t)$ and since the definition
  of~$\phi$ does not depend on the choice of the words $u_s$, we may assume that~$u_{t'}=u_{t}a$.
  It then follows that
  \begin{equation*}
      \phi(t')=(\Mc \cdot (u_ta)) = ((\Mc\cdot u_t)\cdot a)= \phi(t)\cdot a~.
  \end{equation*}
  In other words, $\phi(t)\rpath{a}  \phi(t')$ is a transition of $\Mc$.
\end{proof}
%

\subsection{Ultimately-equivalent states}

Our decision procedure involves the determination of
ultimately-equivalent states defined as follows.

\begin{definition}
  Let~$\Ac$ be an automaton over~$\intalph$. Let $m\ge 1$ be an integer.
Two states~$s,s'$ of $\Ac$ are \emph{$m$-ultimately-equivalent} if
\begin{equation*}
    \forall u\in\intalph^* \quad \wlen{u}\geq m \implies (s\cdot u) = (s' \cdot u) ~.
\end{equation*}
Two states are \emph{ultimately-equivalent} if they are $m$-ultimately-equivalent for some $m\ge 1$.
\end{definition}
\begin{numremark}
Note that ultimate-equivalence is indeed an equivalence relation:
if~$s$ and~$s'$ are $m$-ultimately-equivalent while~$s'$
and~$s''$ are $m'$-ultimately-equivalent, then~$s$ and~$s''$ are $\max(m,m')$-ultimately-equivalent.
\end{numremark}
Considering an automaton~$\Ac$ over~$\intalph$, the computation of this relation is easy.
Let us build a directed graph~$\Gc=(V,E)$ as follows.
  The vertex-set is~$V = {Q_\Ac}\times {Q_\Ac}$ and the edge set is:
  \begin{multline}\label{eq.defi-Gc}
    \forall (s,t),(s',t') \in V,~s\neq t \\ (s,t) \rightarrow (s',t')  \text{ in } \Gc \quad \iff\exists a\in\intalph \text{ such that
    $\Ac$ features }\left\{\begin{array}{l}
        s\rpath{a}s'\\
        t\rpath{a}t'
                                                            \end{array}\right.~.
  \end{multline}
  In particular, vertices of the form~$(s,s)$
  never qualify for the above condition and thus never have outgoing edges. Observe that two distinct states~$s,t$ of~$\Ac$ are ultimately-equivalent if and
  only if~$(s,t)$ may not reach in~$\Gc$ a strongly connected component.

Computing the strongly connected components of a graph is done in linear time (see, for instance, Tarjan's algorithm \cite{Tarjan1972}). Hence, the set of the pairs of states of~$\Ac$ that are ultimately-equivalent may be decided in time~$O(bn^2)$. This complexity can be improved as follows.

\begin{proposition}[B\'eal, Crochemore, \cite{Beal&Crochemore2007}]\label{p.ue-com}
   Let~$\Ac$ be an automaton over~$\intalph$.
   We write~$n$ the number of states of~$\Ac$.
   The set of the pairs of states of~$\Ac$ that are ultimately-equivalent may be decided in time~$O(bn \log n)$.
 \end{proposition}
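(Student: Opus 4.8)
The plan is to replace the quadratic pair-graph computation by a partition-refinement procedure in the spirit of Hopcroft's minimisation algorithm, whose $O(bn\log n)$ running time is exactly the bound we are after. First I would reformulate ultimate-equivalence as the limit of a coarsening sequence of partitions of $Q_\Ac$. For $m\ge 0$, declare $s\equiv_m s'$ when $(s\cdot u)=(s'\cdot u)$ for every word $u$ with $\wlen{u}=m$. Since in a complete deterministic automaton agreement on all words of length $m$ forces agreement on all words of length $m+1$ (append one letter at the end), the relations coarsen, $\equiv_0\subseteq\equiv_1\subseteq\cdots$, and ultimate-equivalence is their union $\equiv_\infty$. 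The sequence stabilises after at most $n-1$ steps, because every strict step merges at least two classes, and $\equiv_\infty$ is governed by the local rule $s\equiv_{m+1}s'\iff(s\cdot a)\equiv_m(s'\cdot a)$ for all $a\in\intalph$; equivalently, $\equiv_\infty$ is the least equivalence containing the diagonal that is closed under ``all letter-successors equivalent implies equivalent''. Read off a partition, this is exactly the complement of the pair-graph reachability condition already stated for $\Gc$: a pair fails to be ultimately-equivalent precisely when it reaches a non-trivial strongly connected component (a cycle).

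Next I would compute $\equiv_\infty$ without ever materialising the $n^2$ vertices of $\Gc$. I would maintain the current partition with a union--find structure, together with, for each letter $a$, the reverse adjacency lists $\{s : (s\cdot a)=t\}$. Starting from the discrete partition, one repeatedly detects pairs of classes all of whose $a$-successors have already been identified, fuses them, and propagates each new fusion backwards to the predecessors that it may newly render co-blocked. The efficiency rests on Hopcroft's ``process the smaller half'' discipline: when two classes are merged, one rescans only the smaller of the two when updating the signatures of predecessors, so that each state is rescanned $O(\log n)$ times and each of its incoming transitions is touched $O(\log n)$ times overall, across the $bn$ transitions, giving the claimed $O(bn\log n)$ bound.

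The two points that must actually be proved are correctness and the amortised bound. For correctness one shows that the fusion process neither identifies two states separated by arbitrarily long words (soundness) nor halts before identifying all ultimately-equivalent pairs (completeness); both follow by induction on $m$ from the fixed-point characterisation of $\equiv_\infty$ above. The main obstacle is the complexity analysis: one must organise the data structures so that the smaller-half charging argument applies in the \emph{merging} direction rather than the usual splitting direction — in particular, detecting after each fusion which predecessor pairs have just become co-blocked, without re-examining whole classes — and then show that the elementary operations telescope to $O(bn\log n)$, exactly as in Hopcroft's algorithm. This bookkeeping is the delicate part, and it is the content of the cited result of B\'eal and Crochemore.
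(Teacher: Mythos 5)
Your proposal is correct and takes essentially the same route as the paper: both start from the discrete partition and run the B\'eal--Crochemore merging algorithm (Hopcroft's smaller-half discipline applied in the fusion direction, with each step merging states that are $1$-ultimately-equivalent in the current quotient), and both defer the delicate data-structure bookkeeping and amortised $O(bn\log n)$ analysis to the cited reference, exactly as the paper's proof sketch does. Your coarsening sequence $\equiv_0\subseteq\equiv_1\subseteq\cdots$ and the least-fixed-point characterisation of $\equiv_\infty$ simply make explicit the observation that the paper states without proof, namely that the classes produced by the merging process are precisely the ultimate-equivalence classes.
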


 \begin{proof}[Sketch]
We take verbatim the algorithm in \cite{Beal&Crochemore2007}. Start from the trivial partition and iteratively merge states. Each step of the algorithm consists in merging two states that are $1$-ultimately-equivalent. The purpose of B\'eal and Crochemore was to show that starting with a so-called AFT automaton $\Ac$, the result is the minimisation of $\Ac$. Starting with any automaton $\Ac$, the resulting automaton is not necessarily minimal. However, one can observe that its states are precisely the ultimate-equivalence classes of $\Ac$.
 \end{proof}

As a direct consequence of the definition of an automaton morphism,  ultimate-equivalence commutes with automaton morphisms.

\begin{lemma}\label{l.ulti-equi-comm-morp}
  Let~$\Ac$ and~$\Mc$ be two automata such that there is an automaton
  morphism~$\phi:\Ac\rightarrow \Mc$.
  Let~$s$ and~$s'$ be two states of~$\Ac$ that are ultimately-equivalent (w.r.t.\@~$\Ac$),
  then~$\phi(s)$ and~$\phi(s')$ are also ultimately-equivalent (w.r.t.\@~$\Mc$).
\end{lemma}

\section{Purely periodic $b$-recognisable sets}

\label{s.pp}

\begin{notation}\label{n.nota}
Let $p>0$ and $b>1$ be two integers.
Throughout this section, the quantities $k,d,j,\psi$ are fixed as follows.
\begin{itemize}
 \item Let $k,d$ be the unique integers such that~$p=k\,d$ where~$k$ is the greatest divisor of $p$ coprime with~$b$. In particular, the prime factors occurring in the prime decomposition of $d$ all appear in the prime decomposition of $b$. Moreover, $(k,d)=1$.
 \item Let~$j$ be the least integer such that $d$ is a divisor of~$b^j$.
    \item Since $(k,b)=1$, the order of~$b$ in~$\ZZ[k]$ is well defined and denoted by $\psi$, i.e., $b^\psi \equiv 1~[k]$.
\end{itemize}
\end{notation}

 Let $s<k$ and $t<d$ be two integers. We let $\crl{s,t}$ denote the (unique) integer of~$\ZZ[p]$ congruent to~$s$ modulo~$k$ and~$t$ modulo~$d$. Note that if $n$ is an integer less than $p$, then $n=\crl{n\mod{k},n\mod{d}}$ where $n\mod{k}$ denote the remainder of the division of $n$ by $k$.

\subsection{The automaton~$\ARpb$ and its minimisation}

\begin{definition}\label{d.purely-periodic}
    A subset $P$ of integers is \emph{purely periodic}, if there exist
    $p\ge 1$ and a subset $R\subseteq\{0,\ldots,p-1\}$ such that
    $P=R+p\N$.
\end{definition}
For instance, $\{0,1\}+4\N$ is purely periodic but $\{4,5\}+4\N$ is not.
Let $p\ge 1$ be an integer and $R$ be a subset of $\{0,\ldots,p-1\}$.
  We say that the parameter~$(p,R)$ is \emph{proper}, if~$p$
  is the smallest period of the purely periodic set~$R+p\N$. For instance, $(4,\{0,1\})$ is proper but $(4,\{0,2\})$ is not because $\{0,2\}+4\N=\{0\}+2\N$.

The following definition is ubiquitous when dealing with periodic sets of integers. It is an easy exercise to show that this automaton accepts base-$b$ representations of integers whose remainder modulo $p$ belongs to $R$.
\begin{definition}
  We let~$\ARpb$ denote the automaton $\ARpb = \langle \intalph, \ZZ, \delta, 0, R  \rangle$ where~$\delta$ is defined as
  \begin{equation*}
      \forall n\in\ZZ,~\forall a\in\intalph \quad n \rpath{a}nb+a~.
  \end{equation*}
\end{definition}
When we are only interested in the transitions of the automaton $\ARpb[R][p]$, it is sometimes convenient to leave the set of final states unspecified. In that case, we write $\ARpb[?][p]$ for the automaton where the final/non-final status of the states is not set.
\begin{example}\label{exa:12-57}
Figure~\ref{f.A-57-12-main} shows~$\ARpb[\{5,\,7\}][12]$ in base $2$. Transitions with label $1$ (resp. $0$) are represented with bold (resp. thin) edges.
\begin{figure}[ht]
  \hfill%
  \hspace*{16.8mm}
  \hspace*{16mm}%
  \begin{subfigure}{0.4\textwidth}
    \includegraphics[scale=\AutomatonScale]{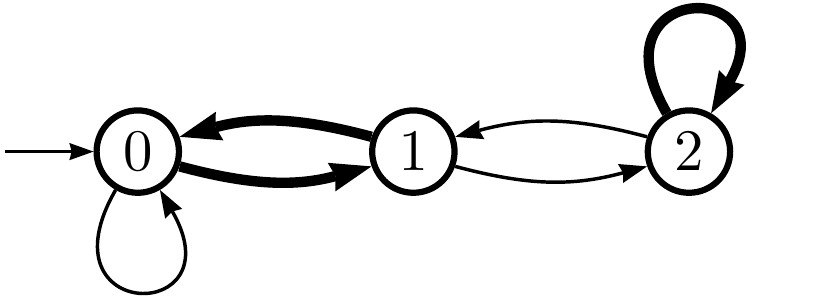}
    \caption{$\ARpb[?][3]$}
  \end{subfigure}
  \hfill\hspace*{0cm}%

  \hfill
  \begin{subfigure}{0.4\textwidth}
    \includegraphics[scale=\AutomatonScale]{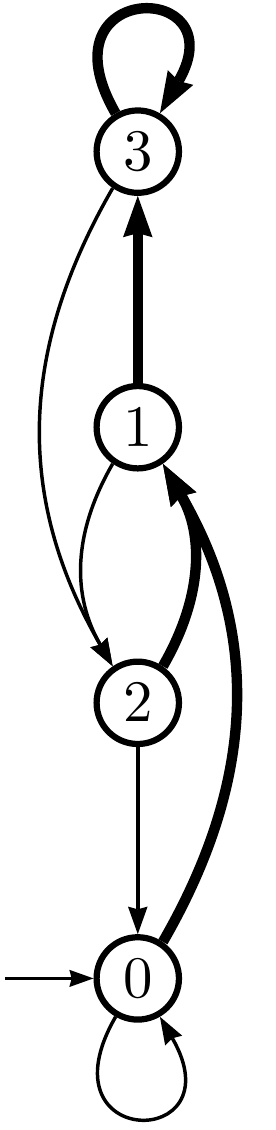}
    \caption{$\ARpb[?][4]$}
  \end{subfigure}\hspace*{0mm plus -1fill}
  \begin{subfigure}{0.4\textwidth}
    \includegraphics[scale=\AutomatonScale]{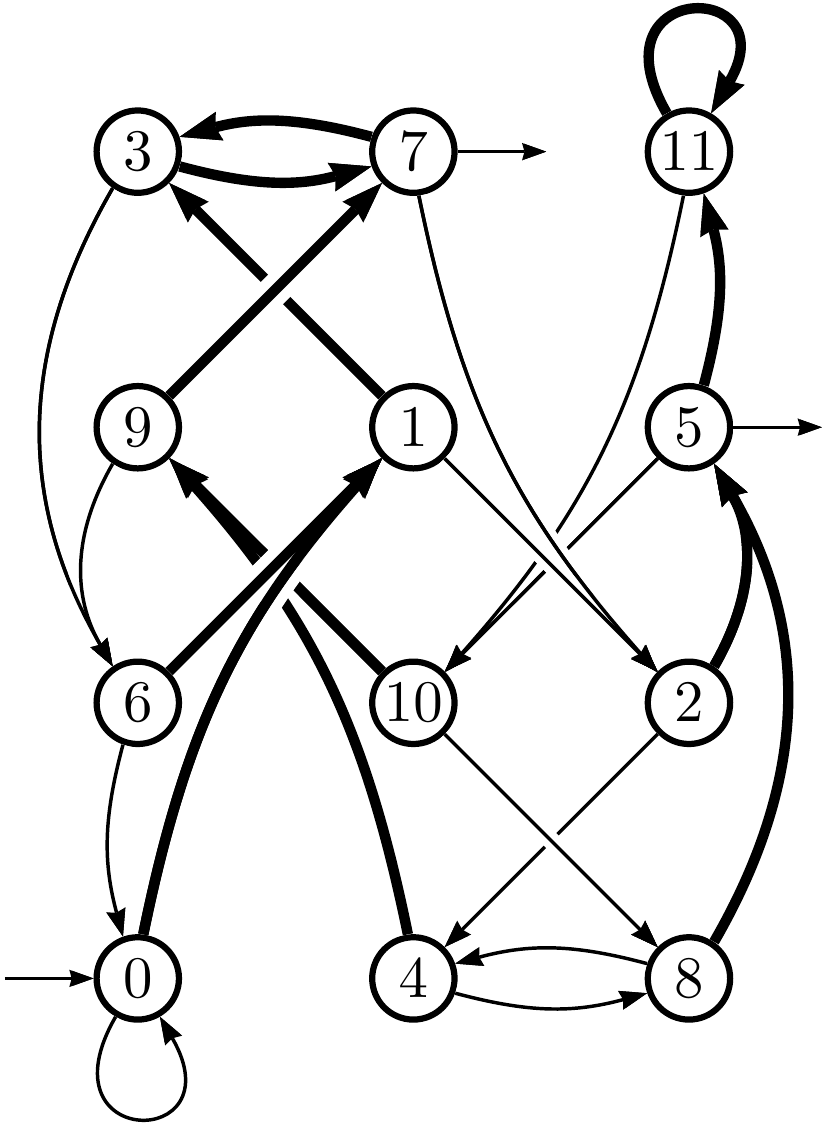}
    \caption{${{\ARpb[\{5,7\}][12]}}$}
    \label{f.A-57-12-main}
  \end{subfigure}
  \hfill\hspace*{0cm}

\caption{The automaton~${{\ARpb[\{5,7\}][12]}}$, as the product automaton of~${{\ARpb[?][4]}}$ by~${{\ARpb[?][3]}}$}
\label{f.A-57-12}
\end{figure}
\end{example}
As can be seen, for instance, in Figure~\ref{fig:minimisation57-12}, the automaton $\ARpb$ is not necessarily minimal.
\begin{lemma}\label{l.ARpb-corr}
  For every word~$u\in\intalph^*$,~$(\ARpb\cdot u)=(\val{u}\mod{p})=\crl{\val{u}\mod{k},\val{u}\mod{d}}$.
\end{lemma}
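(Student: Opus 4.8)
The plan is to establish the two equalities separately. The second one, $(\val{u}\mod{p})=\crl{\val{u}\mod{k},\val{u}\mod{d}}$, is immediate: since $\val{u}\mod{p}$ is by construction an integer less than $p$, this is exactly the observation recorded just after the definition of $\crl{\cdot,\cdot}$, itself an instance of the Chinese Remainder Theorem applied to the coprime factorisation $p=kd$ fixed in Notation~\ref{n.nota}. So the real content lies in the first equality, $(\ARpb\cdot u)=\val{u}\mod{p}$.

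I would prove this by induction on the length of $u$, reading digits most significant first. For the base case, the empty word $\varepsilon$ leaves the automaton in its initial state $0$; since $\val{\varepsilon}=0$, we indeed have $(\ARpb\cdot\varepsilon)=0=\val{\varepsilon}\mod{p}$.

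For the inductive step, I would write $u=wa$ with $a\in\intalph$ the last (least significant) digit and $w$ the prefix of more significant digits. The arithmetic heart of the MSDF convention is that appending a digit on the right multiplies the value by the base and adds the digit, i.e. $\val{wa}=b\,\val{w}+a$. Using the induction hypothesis $(\ARpb\cdot w)=\val{w}\mod{p}$ together with the transition rule $n\rpath{a}nb+a$, one gets
\begin{equation*}
(\ARpb\cdot wa)=\bigl((\ARpb\cdot w)\cdot a\bigr)=\bigl(b\,(\val{w}\mod{p})+a\bigr)\mod{p}=(b\,\val{w}+a)\mod{p}=\val{wa}\mod{p},
\end{equation*}
which closes the induction.

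I do not expect a genuine obstacle here: the statement is essentially a translation, via the reading direction, of the Horner-scheme recurrence $\val{wa}=b\,\val{w}+a$ into the transition rule $n\mapsto nb+a$. The only point demanding care is to match that reading direction with the correct decomposition of the word—appending the least significant digit on the right—so that the recurrence on values stays compatible with the recurrence describing the run of the automaton.
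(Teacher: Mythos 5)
Your proof is correct and is essentially the paper's argument spelled out: the paper dispatches this lemma with ``follows directly from the definition of the transition function,'' and your induction via the Horner recurrence $\val{wa}=b\,\val{w}+a$ matched against the rule $n\rpath{a}nb+a$ is precisely the routine verification being appealed to, with the second equality correctly reduced to the Chinese-remainder observation following Notation~\ref{n.nota}.
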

\begin{proof} This follows directly from the definition of the transition function of $\ARpb[?]$.
\end{proof}
\begin{property}\label{pp.sc}
The automaton $\ARpb$ is strongly connected.
\end{property}
\begin{proof}
Let $n,m$ be two states. The state $n$ is of the form $\crl{i,i'}$. Let $u$ be a word satisfying
\begin{equation*}
\val{u}\equiv\crl{k-i,0}[p]~,\quad \wlen{u}\ge j\quad \text{and}\quad \wlen{u}\equiv 0[\psi]~.
\end{equation*}
The last two conditions are easily satisfied by adding a suitable number of leading zeroes.
Reading $u$ from $n$ leads to the initial state $0$. Obviously, reading $\rep{m}$ from $0$ leads to~$m$.
\end{proof}
The next lemma states that the automaton~$\ARpb[?]$
is the product automaton~$\ARpb[?][k] \times \ARpb[?][d]$. This easily follows from the Chinese remainder theorem and
Lemma~\ref{l.ARpb-corr}.
\begin{lemma}\label{l.ARpb-x}
  For all integers~$s,s'\in\ZZ[k]$,~$t,t'\in\ZZ[d]$ and every word~$u\in\intalph^*$,
  \begin{equation*}
 \crl{s,t} \rpath{u} \crl{s',t'} \text{ in }\ARpb[?] \iff \left\{ \begin{array}{l}
                                              s \rpath{u} s' \text{ in }\ARpb[?][k] \\
                                              t \rpath{u} t' \text{ in }\ARpb[?][d]
                                            \end{array}\right.
\end{equation*}
\end{lemma}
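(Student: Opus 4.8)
The plan is to exhibit a closed form for the action of a word on a state and then to read off the statement from the Chinese remainder theorem. First I would generalise Lemma~\ref{l.ARpb-corr} from runs starting at the initial state~$0$ to runs starting at an arbitrary state: for any modulus~$m>1$, any state~$n\in\ZZ[m]$ and any word~$u\in\intalph^*$, one has in~$\ARpb[?][m]$ the identity
\begin{equation*}
  (n\cdot u) = n\,b^{\wlen{u}} + \val{u} \quad\text{in } \ZZ[m]~.
\end{equation*}
This follows by a one-line induction on~$\wlen{u}$, since each letter-step applies the affine map~$x\mapsto xb+a$; the base case~$n=0$ is precisely Lemma~\ref{l.ARpb-corr}.

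With this formula in hand, the lemma becomes a bookkeeping exercise on congruences. Recall that~$\crl{s,t}$ is, by definition, the unique residue modulo~$p$ that reduces to~$s$ modulo~$k$ and to~$t$ modulo~$d$; this is well defined precisely because~$p=kd$ with~$(k,d)=1$, so that reduction modulo~$k$ and modulo~$d$ realise the Chinese remainder isomorphism~$\ZZ\cong\ZZ[k]\times\ZZ[d]$. Applying the formula above with~$m=p$ and~$n=\crl{s,t}$ gives~$(\crl{s,t}\cdot u)=\crl{s,t}\,b^{\wlen{u}}+\val{u}$ in~$\ZZ$. Reducing this integer modulo~$k$ yields~$s\,b^{\wlen{u}}+\val{u}$, which by the same formula (now with~$m=k$) equals~$(s\cdot u)$ in~$\ARpb[?][k]$; reducing it modulo~$d$ likewise yields~$(t\cdot u)$ in~$\ARpb[?][d]$.

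Both directions then follow at once. By the uniqueness in the Chinese remainder theorem, the equality~$(\crl{s,t}\cdot u)=\crl{s',t'}$ holds if and only if the common residue reduces to~$s'$ modulo~$k$ and to~$t'$ modulo~$d$, that is, if and only if~$(s\cdot u)=s'$ in~$\ARpb[?][k]$ and~$(t\cdot u)=t'$ in~$\ARpb[?][d]$; this is exactly the asserted equivalence. I do not expect a genuine obstacle here: the only point deserving care is the passage from the initial-state statement of Lemma~\ref{l.ARpb-corr} to arbitrary starting states, together with checking that reduction modulo~$k$ and modulo~$d$ intertwine the transition functions — which is automatic, since these reductions are ring homomorphisms while each transition is a polynomial (indeed affine) map of the state.
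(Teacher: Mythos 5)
Your proof is correct and follows essentially the same route as the paper, which states the lemma without a detailed proof, remarking only that it ``easily follows from the Chinese remainder theorem and Lemma~\ref{l.ARpb-corr}''. Your write-up simply fills in the routine details left implicit there---the induction giving the closed form~$(n\cdot u)=n\,b^{\wlen{u}}+\val{u}$ in~$\ZZ[m]$ for an arbitrary starting state, and the observation that reduction modulo~$k$ and~$d$ intertwines the affine transition maps---so there is nothing to object to.
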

The fact that $k$ is coprime with $b$ implies the following result.
\begin{lemma}\label{l.A?kb-grou}
  With the definition introduced in Notation~\ref{n.nota}, the automaton~$\ARpb[?][k]$ is a group automaton:  each letter induces a
  permutation on the set of states.
\end{lemma}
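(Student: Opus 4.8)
The plan is to show directly that, for each fixed letter $a \in \intalph$, the transition map $\delta_a : \ZZ[k] \to \ZZ[k]$ given by $\delta_a(n) = nb + a$ is a bijection of the finite set $\ZZ[k]$. By the very definition recalled in the lemma, an automaton in which every letter induces a permutation on the states is exactly a group automaton, so establishing this for each $a$ finishes the proof.

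First I would fix $a$ and recall from Notation~\ref{n.nota} that $(k,b)=1$, so that $b$ is a unit in the ring $\ZZ[k]$; write $b^{-1}$ for its multiplicative inverse modulo $k$. The map $\delta_a$ is then the composition of the translation $n \mapsto n + a$ with multiplication by $b$. Multiplication by $b$ is a bijection of $\ZZ[k]$ precisely because $b$ is invertible, and the translation is trivially a bijection; hence $\delta_a$, being a composition of two bijections of $\ZZ[k]$, is itself a bijection.

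If one prefers an elementary argument avoiding ring-theoretic vocabulary, I would instead verify injectivity and invoke finiteness. Assume $\delta_a(n) = \delta_a(n')$, that is $nb + a \equiv n'b + a \pmod{k}$; subtracting $a$ yields $(n - n')b \equiv 0 \pmod{k}$, and coprimality of $b$ and $k$ allows cancelling $b$ to obtain $n \equiv n' \pmod{k}$. Thus $\delta_a$ is injective on the finite set $\ZZ[k]$ and therefore a permutation. Equivalently, one checks that $m \mapsto (m - a)\,b^{-1}$ is a two-sided inverse of $\delta_a$, which exhibits the permutation explicitly.

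There is no genuine obstacle here: the sole point that carries the argument is the coprimality $(k,b)=1$ built into the choice of $k$ in Notation~\ref{n.nota}, which is exactly what makes multiplication by $b$ invertible modulo $k$. Once this is observed, the permutation property holds for every letter $a$, and hence $\ARpb[?][k]$ is a group automaton.
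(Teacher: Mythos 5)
Your proof is correct and follows essentially the same route as the paper: both decompose the transition map $n \mapsto nb + a$ into multiplication by $b$ (a permutation of $\ZZ[k]$ because $(k,b)=1$) composed with a translation, which is trivially a permutation. Your additional elementary injectivity argument and the explicit inverse $m \mapsto (m-a)\,b^{-1}$ are fine but not needed beyond the paper's argument.
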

\begin{proof}
  Since~$k$ is coprime with~$b$, the function~$f_0:\ZZ[k]\rightarrow\ZZ[k]$
  defined by~$s\mapsto sb$ is a permutation of~$\ZZ[k]$.
  Hence, so is the function~$f_a$ defined by~$s\mapsto (sb+a)$, for
  every letter~$a\in\intalph$.
  The action of~$a$ in~$\ARpb[?][k]$ is exactly~$f_a$, a
  permutation of the states.
\end{proof}
%

\subsection{Nerode-equivalence and ultimate-equivalence in~$\ARpb$}

Within the setting of Example~\ref{exa:12-57} where rows (resp. columns) of the product automaton $\ARpb[R][p]\approx \ARpb[?][d]\times \ARpb[?][k]$ correspond to the equivalence classes modulo $d$ (resp. modulo $k$), the forthcoming Proposition~\ref{p.not-nero-equi} shows that Nerode-equivalent states in $\ARpb[R][p]$ must belong to the same column. See, for instance, Figure~\ref{fig:minimisation57-12}.
Then, we show that all states belonging to the same column are ultimately-equivalent.

\begin{lemma}\label{l.not-nero-equi}
  If~$(p,R)$ is proper, then for all distinct integers~$i$ and~$i'$,~$0\leq i,i'<k$, the
  states~$i d$ and~$i'd$
  are not Nerode-equivalent.
\end{lemma}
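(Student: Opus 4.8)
The plan is to prove the contrapositive: assuming that $id$ and $i'd$ are Nerode-equivalent for two distinct indices $i,i'$ with $0\le i,i'<k$, I will produce a nonzero element $c\in\ZZ$ under which $R$ (viewed as a subset of $\ZZ$) is translation-invariant, thereby contradicting the properness of $(p,R)$.

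First I would record the run formula for an arbitrary starting state: iterating the transition $n\rpath{a}nb+a$ gives $(n\cdot u)\equiv n\,b^{\wlen{u}}+\val{u}\pmod p$ for every state $n$ and every word $u$. Since the automaton is complete, the Nerode-equivalence of $id$ and $i'd$ means exactly that, for every word $u$, one has $id\,b^{\wlen{u}}+\val{u}\in R$ if and only if $i'd\,b^{\wlen{u}}+\val{u}\in R$ (all congruences modulo $p$).

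The crux is to choose the length of $u$ so as to neutralise the factor $b^{\wlen{u}}$. Using Notation~\ref{n.nota}, I would fix a length $L$ that is a multiple of $\psi$, is at least $j$, and satisfies $b^L\ge p$. Then $b^L\equiv 1\pmod k$ (because $\psi$ is the order of $b$ modulo $k$) while $b^L\equiv 0\pmod d$ (because $d\mid b^j\mid b^L$); by the Chinese remainder theorem this reads $b^L\equiv\crl{1,0}\pmod p$. Because $id$ and $i'd$ are multiples of $d$, i.e.\ of the form $\crl{\cdot,0}$, componentwise multiplication yields $id\,b^L\equiv id$ and $i'd\,b^L\equiv i'd\pmod p$. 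Restricting the equivalence above to words of length exactly $L$ therefore collapses it to $id+\val{u}\in R\iff i'd+\val{u}\in R\pmod p$.

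Finally, since $b^L\ge p$, the quantity $\val{u}\bmod p$ ranges over all of $\ZZ$ as $u$ runs over the length-$L$ words; setting $w=id+\val{u}$ gives $w\in R\iff w+(i'-i)d\in R$ for every $w$, so $R$ is invariant under the translation $c=(i'-i)d$. Here $c\not\equiv 0\pmod p$, because $0<|i-i'|<k$ forces $(i'-i)d$ to be a nonzero multiple of $d$ strictly between $-p$ and $p$; hence $\gcd(c,p)$ is a period of $R+p\N$ strictly smaller than $p$, contradicting properness. I expect the only delicate point to be the length bookkeeping — checking that a single $L$ can simultaneously be a multiple of $\psi$, exceed $j$, and be large enough for $\val{u}\bmod p$ to be surjective onto $\ZZ$ — while the arithmetic with $\crl{1,0}$ and the final translation argument are routine.
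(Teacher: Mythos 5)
Your proof is correct and is essentially the paper's argument run in the contrapositive direction: the paper extracts from properness a witness $m$ with $id+m\in R+p\N$ and $i'd+m\notin R+p\N$, then pads $\rep{m}$ to a length divisible by $\psi$ so that $b^{\wlen{u}}\equiv 1 \pmod{k}$ (together with $id\equiv i'd\equiv 0 \pmod{d}$) collapses the two runs to the translated states $id+m$ and $i'd+m$ --- exactly the arithmetic you perform with your length $L\equiv 0 \pmod{\psi}$ and $b^L\equiv\crl{1,0}\pmod{p}$. The only difference is packaging: you additionally require $b^L\ge p$ to get surjectivity of $\val{u}\bmod p$ and then spell out the translation-invariance and $\gcd$/smaller-period contradiction, reasoning the paper compresses into its opening existence claim (and your condition $L\ge j$ is harmless but unnecessary, since $id\equiv 0\pmod{d}$ already forces $id\,b^L\equiv 0\pmod{d}$).
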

\begin{proof}
  Since~$(p,R)$ is proper and~$id\neq i'd$, there exists an integer~$m$ such that~$(i d+m)\in R+p\N$
  and~$(i' d+m)\notin R+p\N$.
  We let~$u$ denote a word such that~$\val{u} = m$ and~$\wlen{u} \equiv 0 [\psi]$
  (in other words,~$u$ is the word~$\rep{m}$ padded with an appropriate number of 0's);
  it thus holds that~$b^{\wlen{u}}\equiv 1~[k]$.
  Reading the word~$u$ respectively from the states~$id$ and~$i'd$ leads to the states:
  \begin{equation*}
      id\cdot u = i d b^{\wlen{u}}+m\quad\text{and}\quad i'd\cdot u = i' d b^{\wlen{u}}+m~.
  \end{equation*}
  The integer~$(i d b^{\wlen{u}}+m)$ is congruent to~$(i d+m)$
  modulo~$k$ (since~$b^{\wlen{u}}\equiv 1~[k]$) as well as modulo~$d$ (since both are obviously
  congruent to~$m$) hence modulo~$p$.
  The same reasoning also applies to the second state, finally yielding:
  \begin{equation*}
      id\cdot u = i d+m\quad\text{and}\quad i'd\cdot u = i' d+m~.
  \end{equation*}
  The first state belongs to~$R$ and is thus final while the second does not belong to~$R$
  and thus is not final.
  The word~$u$ is then a witness of the fact that~$id$ and~$i'd$ are not Nerode-equivalent.
\end{proof}

\begin{proposition}\label{p.not-nero-equi}
    Let~$(p,R)$ be proper. If~$i$ and~$i'$ are Nerode-equivalent states, then they are congruent modulo~$k$.
\end{proposition}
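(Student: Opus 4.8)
The plan is to reduce the statement to Lemma~\ref{l.not-nero-equi}, which already settles the case of the multiples of~$d$, that is, of the states whose $\ZZ[d]$-component is~$0$. The first ingredient I would isolate is that Nerode-equivalence is preserved by reading a prefix: if $i$ and $i'$ are Nerode-equivalent and $u$ is any word, then $(i\cdot u)$ and $(i'\cdot u)$ are Nerode-equivalent as well, because $(i\cdot u)\cdot w = i\cdot(uw)$ and $(i'\cdot u)\cdot w = i'\cdot(uw)$ for every word~$w$, and $\ARpb$ is complete.

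Write $i=\crl{s,t}$ and $i'=\crl{s',t'}$ with $s,s'\in\ZZ[k]$ and $t,t'\in\ZZ[d]$; the goal is then to prove $s=s'$, i.e.\@ $i\equiv i'~[k]$. The key move is to read the word $0^j$, where $j$ is the integer fixed in Notation~\ref{n.nota}, characterised by $d\mid b^j$. In the factor $\ARpb[?][d]$ this sends every state to~$0$, since reading $0^j$ from $t$ yields $t\,b^j\equiv 0~[d]$; by the product description of Lemma~\ref{l.ARpb-x} we therefore obtain $i\cdot 0^j=\crl{s\,b^j\mod{k},\,0}$ and $i'\cdot 0^j=\crl{s'\,b^j\mod{k},\,0}$. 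By the stability observation these two states are Nerode-equivalent, and both are of the form $\crl{\sigma,0}$, hence both are multiples of~$d$ of the kind covered by Lemma~\ref{l.not-nero-equi}.

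I would then conclude by contraposition. Since $(b,k)=1$, the integer $b^j$ is invertible modulo~$k$, so $s\neq s'$ in $\ZZ[k]$ forces $s\,b^j\neq s'\,b^j$ modulo~$k$. Writing these two multiples of~$d$ as $\ell d$ and $\ell'd$ with $0\le\ell,\ell'<k$, the bijection $\ell\mapsto\ell d$ (valid because $(k,d)=1$) gives $\ell\neq\ell'$, whence Lemma~\ref{l.not-nero-equi} asserts that $\crl{s\,b^j\mod{k},0}$ and $\crl{s'\,b^j\mod{k},0}$ are \emph{not} Nerode-equivalent, a contradiction. Therefore $s=s'$.

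The step I expect to require the most care is precisely the identification of the collapsed states $\crl{\sigma,0}$ with the states $id$ appearing in Lemma~\ref{l.not-nero-equi}: one must verify that $\ell\mapsto\ell d$ enumerates the multiples of~$d$ in~$\ZZ[p]$ exactly once, and that the invertibility of $b^j$ modulo~$k$ genuinely transports the inequality $s\neq s'$ through the reading of $0^j$. Once this is in place, the rest is routine manipulation of the Chinese remainder decomposition.
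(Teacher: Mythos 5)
Your proof is correct and follows essentially the same route as the paper's: read the word $0^j$, observe that both resulting states are multiples of~$d$ and remain distinct modulo~$k$ because multiplication by~$b$ (hence by~$b^j$) permutes~$\ZZ[k]$, then invoke Lemma~\ref{l.not-nero-equi} together with the stability of Nerode-equivalence under reading a common word. The paper merely phrases this as a contrapositive and leaves implicit the stability observation and the identification of $\crl{\sigma,0}$ with~$\ell d$, both of which you rightly spell out.
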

\begin{proof}
  Proof by contrapositive.
  Let~$i$ and~$i'$ be two states that are not congruent modulo~$k$.
  By definition of $j$, see Notation~\ref{n.nota}, the states~$(i\cdot 0^j)$ and~$(i'\cdot 0^j)$ are both congruent to $0$
  modulo~$d$.
  However the operation~$i\mapsto ib$ is a permutation of~$\ZZ[k]$,
  hence~$(i\cdot 0^j)$ and~$(i'\cdot 0^j)$ are not congruent modulo~$k$.
  It follows that~$(i\cdot 0^j)=ld$ and~$(i'\cdot 0^j)=l'd$ for some
  distinct~$l,l'\in\ZZ[k]$.
  Lemma \ref{l.not-nero-equi} then yields that these
  states are not Nerode-equivalent, hence that~$i$ and~$i'$ are not either.
\end{proof}

\begin{lemma}\label{l.equiv-k-ulti-equiv}
  Let~$s$ and~$s'$ be two states of~$\ARpb$. With the definition introduced in Notation~\ref{n.nota}, if~$s\equiv s'[k]$,
  then~$s$ and~$s'$ are $j$-ultimately-equivalent.
\end{lemma}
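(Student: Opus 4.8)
The plan is to reduce the claim to a single congruence and then dispatch it with the Chinese remainder theorem. First I would record the effect of reading a word: if $u$ is a word of length $\wlen{u}$ and value $\val{u}$, then from any state $n$ the automaton reaches $n\cdot u = n\,b^{\wlen{u}}+\val{u}~[p]$, which follows immediately from iterating the transition rule $n\rpath{a}nb+a$. Consequently, for the two states $s$ and $s'$ and any word $u$,
\[
(s\cdot u) - (s'\cdot u) \equiv (s-s')\,b^{\wlen{u}}~[p]~,
\]
so that $(s\cdot u)=(s'\cdot u)$ holds precisely when $(s-s')\,b^{\wlen{u}}\equiv 0~[p]$. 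It therefore suffices to prove that this last congruence holds for every word $u$ with $\wlen{u}\ge j$.

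Since $p=kd$ with $(k,d)=1$, I would check the congruence separately modulo~$k$ and modulo~$d$. Modulo~$k$, the hypothesis $s\equiv s'~[k]$ gives $s-s'\equiv 0~[k]$, hence $(s-s')\,b^{\wlen{u}}\equiv 0~[k]$ for any length whatsoever. Modulo~$d$, I would instead discard the factor $(s-s')$ entirely: by the very definition of~$j$ (Notation~\ref{n.nota}), $d$ divides~$b^j$, and since $\wlen{u}\ge j$ we have $b^{\wlen{u}}=b^{j}\,b^{\wlen{u}-j}\equiv 0~[d]$, so that $(s-s')\,b^{\wlen{u}}\equiv 0~[d]$ regardless of the value of $s-s'$. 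Combining the two residues via the Chinese remainder theorem yields $(s-s')\,b^{\wlen{u}}\equiv 0~[p]$, which is exactly what was needed, and establishes that $s$ and $s'$ are $j$-ultimately-equivalent.

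Conceptually, this is the product decomposition of Lemma~\ref{l.ARpb-x} at work: in the $k$-component $\ARpb[?][k]$ the states $s$ and $s'$ start equal (as $s\equiv s'~[k]$) and, being images under a deterministic action, remain equal; in the $d$-component $\ARpb[?][d]$ the factor $b^{\wlen{u}}$ is absorbed to~$0$ as soon as $\wlen{u}\ge j$, so both states collapse to the common value $\val{u}\mod{d}$ irrespective of their starting point. I do not expect a genuine obstacle here, since the argument is a short direct computation; the only point requiring care is to invoke the defining property of~$j$ to guarantee that the $d$-component is annihilated exactly from length~$j$ onward, which is precisely what pins the constant of ultimate-equivalence to~$j$.
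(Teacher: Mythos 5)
Your proof is correct and takes essentially the same route as the paper's: both rest on the explicit formula $s\cdot u \equiv s\,b^{\wlen{u}}+\val{u}~[p]$ together with the Chinese-remainder split of $\ZZ$ into its $k$- and $d$-components, where the hypothesis $s\equiv s'~[k]$ handles the $k$-part and $d\mid b^{j}$ annihilates the $d$-part. The paper phrases this in the coordinates $\crl{i,l}$ via Lemma~\ref{l.ARpb-x} (treating words of length exactly~$j$), while you work with the difference $s-s'$ in $\ZZ$ for arbitrary $\wlen{u}\geq j$; the difference is purely presentational.
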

\begin{proof}
  Let~$u$ be any word of length~$j$.
  Since~$s$ and~$s'$ are congruent modulo~$k$, there exists~$i\in\ZZ[k]$
  and~$l,l'\in\ZZ[d]$ such that~$s=\crl{i,l}$ and~$s'=\crl{i,l'}$.
  Then, from Lemma~\ref{l.ARpb-x} and using the fact that $l b^j\equiv 0~[d]$, we get
  \begin{equation*}
 (s\cdot u) = \crl{ i b^j + \val{u},\, l b^j +\val{u}}
                = \crl{ i b^j + \val{u},\, \val{u}}~.
  \end{equation*}
  Similarly~$(s'\cdot u) = \crl{ i b^j + \val{u},\, \val{u}} = (s\cdot u)$.
\end{proof}

\begin{figure}[ht]

  \hspace*{0cm plus 1fill}
  \begin{subfigure}{0.4\linewidth}
    \includegraphics[scale=\AutomatonScale]{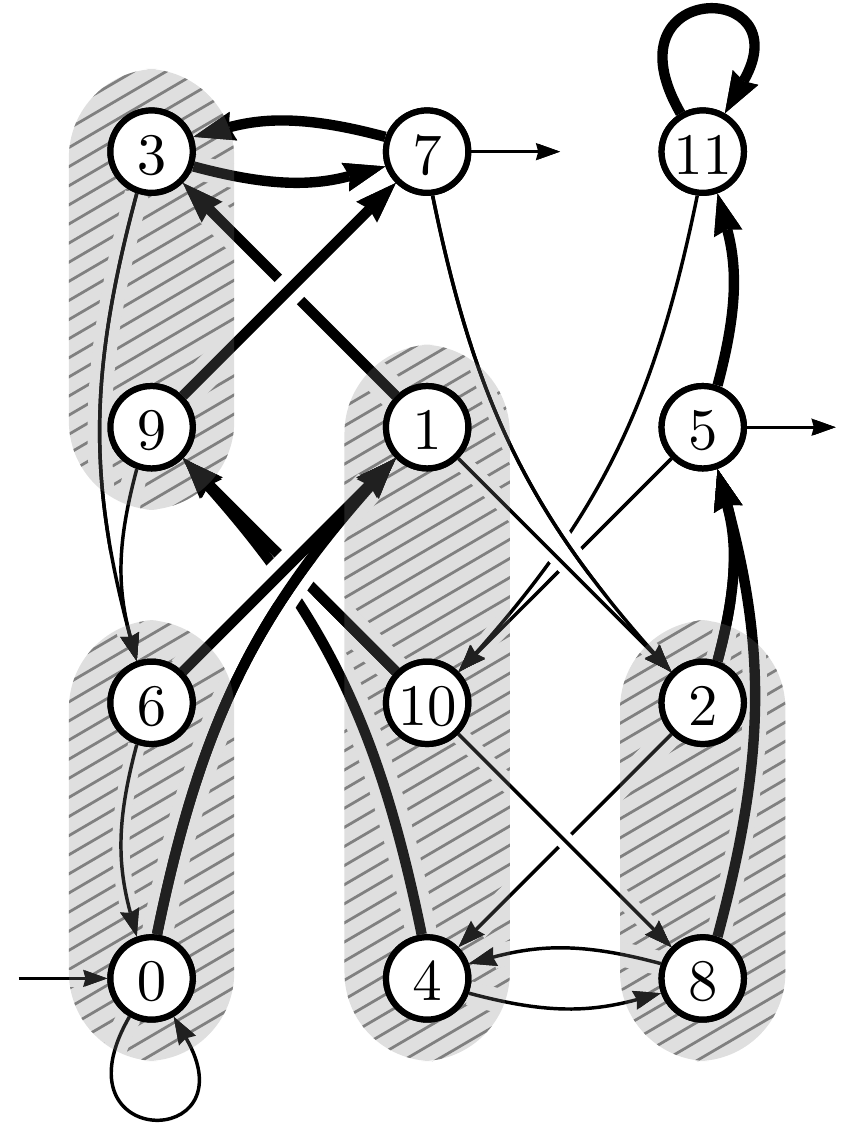}
    \caption{Nerode-equivalence classes of~{{{$\ARpb[\{5,\,7\}][12]$}}}}
  \end{subfigure}
  \hspace*{0cm plus 2fill}%
  \begin{subfigure}{0.4\linewidth}
    \includegraphics[scale=\AutomatonScale]{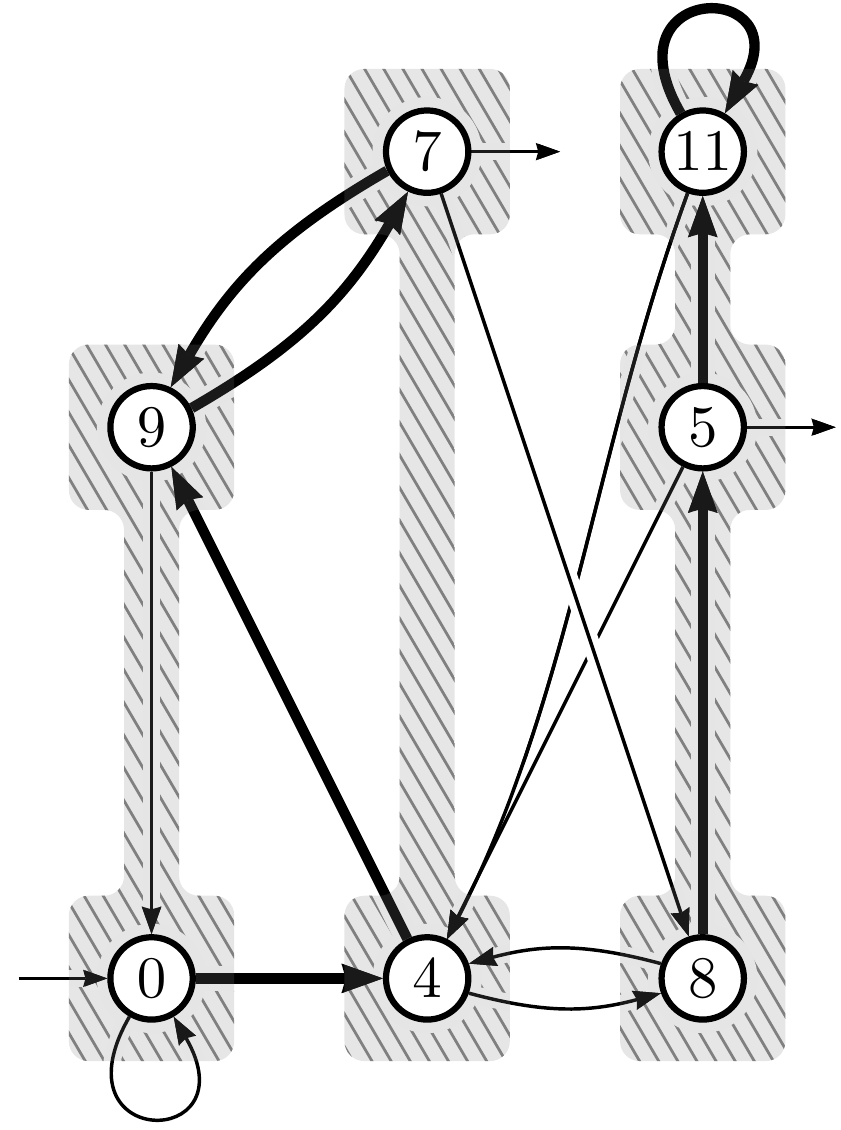}%
    \caption{Pseudo-morphism equivalence classes in the minimisation of~{{{$\ARpb[\{5,\,7\}][12]$}}}}
  \end{subfigure}
  \hspace*{0cm plus 1fill}

  \caption{Minimisation morphism of~$\ARpb[\{5,\,7\}][12]$ and pseudo-morphism of its minimisation}\label{fig:minimisation57-12}
\end{figure}

\subsection{Circuits labelled by the digit~$0$}

  A circuit whose every arc is labelled by the digit~$0$ is called for short a~\emph{0-circuit}. For instance, the automaton $\ARpb[\{5,7\}][12]$ depicted in Figure~\ref{f.A-57-12} has two such circuits: one reduced to the state $0$ and one made of the states $4$ and $8$. We will see that the number of states belonging to $0$-circuits has a special meaning.
  \begin{lemma}\label{l.0-circ}
    A state of~$\ARpb$ is a multiple of~$d$ if and only if it belongs
    to a~$0$-circuit.
  \end{lemma}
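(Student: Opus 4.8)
The plan is to reformulate membership in a $0$-circuit as a periodicity condition for the deterministic action of the digit $0$. Reading $0$ from a state $n$ leads to $nb~[p]$, so reading $\ell$ zeroes sends $n$ to $nb^\ell~[p]$. Because this action is a \emph{function}, the state $n$ lies on a $0$-circuit if and only if it is a periodic point of this map; that is, if and only if there exists $\ell\ge 1$ with $nb^\ell\equiv n~[p]$ (the case $\ell=1$ being the $0$-self-loop at $n$). The lemma thus amounts to proving the equivalence
\begin{equation*}
  d\mid n \quad\iff\quad \exists\,\ell\ge 1,\ nb^\ell\equiv n~[p]~.
\end{equation*}

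Next I would split this congruence along the product decomposition of Lemma~\ref{l.ARpb-x}, writing $n=\crl{s,t}$ with $s=n~[k]$ and $t=n~[d]$; by the Chinese remainder theorem, $nb^\ell\equiv n~[p]$ is equivalent to the conjunction of $sb^\ell\equiv s~[k]$ and $tb^\ell\equiv t~[d]$. The coordinate modulo $k$ is never an obstruction: since the action of $0$ on $\ARpb[?][k]$ is a permutation (Lemma~\ref{l.A?kb-grou}), and more concretely since $b^\psi\equiv 1~[k]$, every $\ell$ that is a multiple of $\psi$ satisfies $sb^\ell\equiv s~[k]$ for all $s$. All the content therefore lies in the coordinate modulo $d$, where I must show that $tb^\ell\equiv t~[d]$ holds for some $\ell\ge 1$ precisely when $t=0$, i.e.\@ when $d\mid n$.

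For the forward direction, if $d\mid n$ then $t=0$, so $tb^\ell\equiv 0\equiv t~[d]$ holds trivially; taking $\ell=\psi$ simultaneously settles the $k$-coordinate, so $n$ lies on a $0$-circuit. The backward direction is the only delicate point. Assume $nb^\ell\equiv n~[p]$ for some $\ell\ge1$, hence in particular $tb^\ell\equiv t~[d]$. Iterating this relation gives $tb^{m\ell}\equiv t~[d]$ for every $m\ge 0$. The key trick is to choose $m$ large enough that $m\ell\ge j$: since $d$ divides $b^j$, and hence divides $b^{m\ell}$, we obtain $tb^{m\ell}\equiv 0~[d]$, and therefore $t\equiv 0~[d]$, that is $d\mid n$.

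I expect the main obstacle to be exactly this last step: a single witness $\ell$ may be too small to invoke $d\mid b^j$ directly, so the argument must amplify the period by iteration until the exponent exceeds $j$. This is precisely where the defining property of $j$ as the least integer with $d\mid b^j$ is used, together with the fact (from Notation~\ref{n.nota}) that every prime factor of $d$ divides $b$, which guarantees that iterating the $0$-action on the $d$-coordinate eventually collapses every state onto $0$.
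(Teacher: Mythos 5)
Your proof is correct, but it takes a genuinely different route from the paper's. You reformulate membership in a $0$-circuit as a periodicity condition $nb^\ell\equiv n~[p]$ for the deterministic $0$-action, split it through the Chinese remainder decomposition $n=\crl{s,t}$, and handle both directions arithmetically: the forward direction exhibits the explicit return exponent $\ell=\psi$ (using $b^\psi\equiv 1~[k]$ together with $t=0$), and the backward direction amplifies a given witness $\ell$ to some $m\ell\geq j$ so that $d\mid b^{m\ell}$ forces $t\equiv 0~[d]$. The paper instead argues the forward direction graph-theoretically: it shows that every state $id$ has a $0$-predecessor of the same form, namely $(b^{-1}i)d$ where $b^{-1}$ is the inverse of $b$ in $\ZZ[k]$, so that on the finite set of multiples of $d$ the $0$-action is surjective, hence bijective, and all its points lie on circuits; and it proves the backward direction by contrapositive, observing that $(s\cdot 0^i)$ is a multiple of $d$ for every $i\geq j$ and invoking determinism to exclude shorter returns --- which is exactly the exponent-amplification idea you make explicit. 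Both proofs ultimately rest on the same two facts, $\gcd(k,b)=1$ and $d\mid b^j$; your version buys a concrete circuit length (a divisor of $\psi$) through each multiple of $d$ and a uniform algebraic framing via Lemma~\ref{l.ARpb-x}, while the paper's predecessor argument avoids introducing $\psi$ into the forward direction altogether.
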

  \begin{proof}
    Forward direction.
    It is enough to show that every state of the form~$i d$, for~$i\in\ZZ[k]$, has a predecessor
    by~$0$ of the form~$i'd$,~$i'\in\ZZ[k]$.
    Simple arithmetic yields that~$(b^{-1}i) d $ is suitable, where~$b^{-1}$ is the inverse
    of~$b$ in~$\ZZ[k]$.
    Backward direction. Proof by contrapositive.
    Let~$s$ be a state which is not a multiple of~$d$.
    The state~$(s\cdot 0^j)$ is a multiple of~$d$. Therefore, for every integer~$i\geq j$, the
    state~$(s\cdot 0^{i})$
    is a multiple of~$d$, hence is not equal to~$s$.
    Since~$\ARpb$ is deterministic,~$(s\cdot 0^i)$ cannot be equal to~$s$ for~$i<j$ either.
  \end{proof}
  The next proposition follows from Lemmas \ref{l.0-circ} and \ref{l.not-nero-equi}. Recall that $k$ is the largest integer coprime with $b$ such that $p=k\, d$ and $d\ge 1$ (see Notation~\ref{n.nota}).
  \begin{proposition}\label{p.mini-k-stat}
      If~$(p,R)$ is proper, the minimisation of~$\ARpb$ possesses
      exactly~$k$ states that belong to $0$-circuits.
  \end{proposition}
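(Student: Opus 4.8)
The plan is to realise the $0$-circuit states of the minimisation $\Mc$ as the image, under the minimisation morphism $\phi \colon \ARpb \to \Mc$, of the $0$-circuit states of $\ARpb$, and then simply to count. By Lemma~\ref{l.0-circ} the latter are exactly the $k$ multiples of $d$, namely $0, d, 2d, \ldots, (k-1)d$. So it suffices to show that $\phi$ restricts to a bijection between these $k$ states and the $0$-circuit states of $\Mc$.

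First I would record two standing facts. Since $\ARpb$ is complete, so is its minimisation $\Mc$, and since $\phi(\ARpb \cdot u) = (\Mc \cdot u)$ for every word $u$ while $\Mc$ is accessible, the morphism $\phi$ is surjective. In particular the $0$-labelled transitions of $\Mc$, being deterministic and complete, form a functional graph, so every state reaches a $0$-circuit by reading $0$'s, and a state $q$ on a $0$-circuit of length $\ell$ satisfies $(q \cdot 0^{\ell}) = q$.

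Next I would prove the two inclusions. For the forward one, if a state $s$ lies on a $0$-circuit of $\ARpb$, then $\phi$ sends that circuit to a closed $0$-walk through $\phi(s)$; in a functional graph any vertex lying on a closed walk lies on a cycle, so $\phi(s)$ is a $0$-circuit state of $\Mc$. For the converse, take any $0$-circuit state $q$ of $\Mc$, of circuit length $\ell$, and write $q = \phi(s)$ using surjectivity. Picking $m$ a multiple of $\ell$ with $m \ge j$, the morphism property gives $q = (q \cdot 0^{m}) = \phi(s \cdot 0^{m})$, and $(s \cdot 0^{m})$ is a multiple of $d$ by the choice of $j$ in Notation~\ref{n.nota} (as $d \mid b^{j} \mid b^{m}$). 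Hence the $0$-circuit states of $\Mc$ are precisely the images $\phi(id)$, $0 \le i < k$.

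It remains to count these images. For distinct $i, i' \in \{0, \ldots, k-1\}$, Lemma~\ref{l.not-nero-equi} (using that $(p,R)$ is proper) says that $id$ and $i'd$ are not Nerode-equivalent, so the minimisation morphism separates them, $\phi(id) \neq \phi(i'd)$. Thus the $k$ multiples of $d$ have $k$ pairwise distinct images, and these images are exactly the $0$-circuit states of $\Mc$, which proves the claim. I expect the one genuinely delicate point to be the functional-graph argument certifying that the $\phi$-image of a $0$-circuit is itself a $0$-circuit of $\Mc$, and not merely a closed walk; the remaining steps are bookkeeping built on top of the two cited lemmas.
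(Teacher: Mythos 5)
Your proof is correct and takes essentially the same route as the paper, whose entire argument is the remark that the proposition follows from Lemmas~\ref{l.0-circ} and~\ref{l.not-nero-equi}: the $0$-circuit states of~$\ARpb$ are the $k$ multiples of~$d$, these remain pairwise distinct in the minimisation by Nerode-inequivalence, and they account for all $0$-circuit states of the quotient. You merely make explicit the routine steps the paper leaves implicit (surjectivity of the minimisation morphism, and the two inclusions showing that $0$-circuit states of the minimisation are exactly the images of $0$-circuit states of~$\ARpb$), and each of these is carried out correctly.
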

  %

\section{Characterisation of  automata accepting purely periodic sets}

The next result will allow us to decide whether a deterministic automaton $\Ac$ over $\intalph$, given as input, is such that $\val{L(\Ac)}$ is a purely periodic set of integers, i.e., whether or not it is of the form $R+p\N$ for some $R$ and $p$.

\begin{theorem}\label{t.pp}
  Let~$b>1$ be a base and~$\Ac$ be a minimal automaton over~$\intalph$ bearing a self-loop labelled by~$0$ on the initial state.
  Let~$\ell$ be the number of states in~$\Ac$ that belong to $0$-circuits.
  The automaton~$\Ac$ accepts by value a purely periodic set of integers if and only if
  the following two conditions are fulfilled.
  \begin{enumerate}
    \renewcommand{\theenumi}{\alph{enumi}}
    \item\sublabel{t.pp-pseu-morp} There exists a pseudo-morphism~$\phi:\Ac\rightarrow \ARpb[?][\ell]$.
    \item\sublabel{t.pp-ulti-equi} The equivalence relation induced by~$\phi$ is a refinement
    of the ultimate-equivalence relation.
  \end{enumerate}
\end{theorem}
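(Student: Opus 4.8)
The plan is to prove both implications by connecting the structure of a general automaton $\Ac$ accepting a purely periodic set to the canonical automaton $\ARpb[R][p]$, using the minimisation morphism as the bridge. I would start by establishing the key dictionary entry: since $\Ac$ is minimal, complete, and accepts $\val{L(\Ac)}$ by value, if this set is purely periodic and equal to $R+p\N$ with $(p,R)$ proper, then $\Ac$ is exactly the minimisation of $\ARpb[R][p]$. By Proposition~\ref{p.mini-k-stat}, the minimisation of $\ARpb[R][p]$ has exactly $k$ states on $0$-circuits, so the quantity $\ell$ in the hypothesis coincides with $k$ (the largest divisor of $p$ coprime with $b$). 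This identification of $\ell$ with $k$ is what lets the target automaton $\ARpb[?][\ell]$ in condition~(\ref*{t.pp-pseu-morp}) be precisely $\ARpb[?][k]$, which by Lemma~\ref{l.A?kb-grou} is the group automaton governing the congruence class modulo $k$.

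For the forward direction, suppose $\Ac$ accepts $R+p\N$ with $(p,R)$ proper, so $\ell=k$. To build the pseudo-morphism $\phi:\Ac\rightarrow\ARpb[?][k]$, I would invoke Lemma~\ref{l.cara-pseu-morp}: it suffices to check that whenever $(\ARpb[?][k]\cdot u)\neq(\ARpb[?][k]\cdot u')$, i.e.\ $\val{u}\not\equiv\val{u'}\ [k]$, the states $(\Ac\cdot u)$ and $(\Ac\cdot u')$ differ. Because $\Ac$ is the minimisation of $\ARpb[R][p]$, two words reach the same state of $\Ac$ iff the corresponding residues modulo $p$ are Nerode-equivalent in $\ARpb[R][p]$; Proposition~\ref{p.not-nero-equi} says Nerode-equivalent residues are congruent modulo $k$, so distinct residues modulo $k$ give distinct states of $\Ac$, exactly as required. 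For condition~(\ref*{t.pp-ulti-equi}), I must show the fibres of $\phi$ are contained in ultimate-equivalence classes: if $\phi(s)=\phi(s')$ then $s,s'$ correspond to residues congruent modulo $k$, and Lemma~\ref{l.equiv-k-ulti-equiv} gives that congruent-mod-$k$ states of $\ARpb[R][p]$ are $j$-ultimately-equivalent; pushing this through the minimisation morphism via Lemma~\ref{l.ulti-equi-comm-morp} yields ultimate-equivalence of $s$ and $s'$ in $\Ac$.

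For the converse, I would assume conditions~(\ref*{t.pp-pseu-morp}) and~(\ref*{t.pp-ulti-equi}) and reconstruct a period. The pseudo-morphism $\phi$ assigns to each state of $\Ac$ a residue modulo $\ell$, compatible with the base-$b$ multiplication action; combined with the self-loop labelled $0$ on the initial state this lets me read off, for each word $u$, a residue depending only on $\val{u}\bmod\ell$ together with a bounded amount of low-order information controlled by the ultimate-equivalence classes. The strategy is to set a candidate period $p=\ell\,b^{j}$ (or a suitable multiple), use condition~(\ref*{t.pp-ulti-equi}) to guarantee that the finitely many ``transient'' distinctions between states collapse once words are long enough, and then verify directly that $\val{u}$ and $\val{u}+p$ land in the same state of $\Ac$ whenever $\val{u}$ is large, so acceptance is periodic with period $p$ and the accepted set is purely periodic. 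The main obstacle I expect is precisely this converse bookkeeping: turning the abstract refinement condition into a concrete period requires carefully separating the coprime-to-$b$ part of the dynamics (handled by $\phi$ and the group structure modulo $\ell$) from the eventually-collapsing part coming from the factor $d$ dividing a power of $b$, and checking that ultimate-equivalence supplies exactly the uniform bound $m$ needed so that all words of length at least $m$ behave periodically. Establishing that these two mechanisms together force pure periodicity—rather than mere eventual periodicity—is the delicate heart of the argument.
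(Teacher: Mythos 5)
Your forward direction is essentially the paper's own proof, step for step: you identify $\ell=k$ through Proposition~\ref{p.mini-k-stat}, obtain the pseudo-morphism from Lemma~\ref{l.cara-pseu-morp} combined with Proposition~\ref{p.not-nero-equi}, and derive condition (b) from Lemma~\ref{l.equiv-k-ulti-equiv} pushed through the minimisation morphism $\ARpb\rightarrow\Ac$ via Lemma~\ref{l.ulti-equi-comm-morp}. That half is correct and complete.

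The backward direction, however, has a genuine gap --- one you partly flag yourself. First, your candidate period $\ell b^{j}$ is not well-defined: $j$ is introduced in Notation~\ref{n.nota} as a function of a period $p$, and in this direction no period is available yet; the exponent must instead come from condition (b), namely $m=\max_i m_i$, where $m_i$ is the ultimate-equivalence bound of the fibre $\phi^{-1}(i)$. Second, and more seriously, your verification plan --- check that $\val{u}$ and $\val{u}+p$ reach the same state ``whenever $\val{u}$ is large'' --- can only ever produce \emph{eventual} periodicity, and you then defer the pure-versus-eventual issue as ``the delicate heart of the argument'' without resolving it; the proposal thus stops exactly where the real work begins. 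The paper closes this with a threshold-free argument in which the hypothesised $0$-self-loop on the initial state is the crucial tool (you mention it only in passing): given any two words $u,u'$ with $\val{u}\equiv\val{u'}~[\ell b^m]$, pad them with leading zeros --- legitimate precisely because of the self-loop, which changes neither the value nor the reached state --- so that both have length at least $m$, and factor $u=vw$, $u'=v'w'$ with $\wlen{w}=\wlen{w'}=m$. Congruence modulo $b^m$ forces $w=w'$; congruence modulo $\ell b^m$ then gives $\val{v}\equiv\val{v'}~[\ell]$, so by Lemma~\ref{l.ARpb-corr} the runs of $v$ and $v'$ in $\ARpb[?][\ell]$ agree, condition (a) yields $\phi(\Ac\cdot v)=\phi(\Ac\cdot v')$, and condition (b) makes these states $m$-ultimately-equivalent, whence $(\Ac\cdot vw)=(\Ac\cdot v'w)$. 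Acceptance therefore depends only on $\val{u}\bmod \ell b^m$ for \emph{every} value, with no largeness assumption, so the accepted set equals $R+\ell b^m\N$ with $R=\{\,i\in\ZZ[\ell b^m]\mid (\Ac\cdot\rep{i})\text{ is final}\,\}$ --- purely periodic outright. Supplying this padding-and-suffix-splitting step is exactly what your sketch is missing.
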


\begin{proof}[Proof of forward direction]
    Since~$\Ac$ accepts by value a purely periodic set of integers,
    there exists a smallest period~$p$ and a
    remainder-set~$R\subseteq\{0,\ldots,p-1\}$ such that~$L(\Ac)=0^*\rep{R+p\N}$.
    Note that $(p,R)$ is proper by choice of~$p$. We make use of
    Notation~\ref{n.nota}. In particular, $k$ is the greatest divisor
    of~$p$ that is coprime with~$b$.
  Since~$\Ac$ is minimal, it is isomorphic to the minimisation of any automaton accepting~$L(\Ac)$, in particular, to the minimisation of~$\ARpb$.
  It then follows from Proposition~\ref{p.mini-k-stat} that~$\ell=k$.

  \medskip

To prove that there exists a pseudo-morphism~$\phi:\Ac\rightarrow \ARpb[?][k]$,
we will apply Lemma~\ref{l.cara-pseu-morp}. Let~$u$,$u'$ be two words such
that~$(\ARpb[?][k]\cdot u)\neq(\ARpb[?][k]\cdot u')$. Let us show that
$(\Ac\cdot u) \neq (\Ac\cdot u')$.
Since~$(\ARpb[?][k]\cdot u)\neq(\ARpb[?][k]\cdot u')$, we have that~$\val{u}\not\equiv \val{u'}~[k]$. Due to Lemma~\ref{l.ARpb-corr},~$(\ARpb\cdot u)$ and~$(\ARpb\cdot u')$ are not congruent
  modulo~$k$.
  It then follows from Proposition~\ref{p.not-nero-equi} that the states~${(\ARpb\cdot u)}$
  and~${(\ARpb\cdot u')}$ are not Nerode-equivalent, which implies
  that~$(\Ac\cdot u) \neq (\Ac\cdot u')$ because~$\Ac$ is the minimisation
  of~$\ARpb$.

  \medskip

  Let~$s$ and~$s'$ be two states of~$\Ac$ such that~$\phi(s)=\phi(s')$. We have to show that~$s$ and $s'$ are ultimately-equivalent.
  Let~$u$ and~$u'$ be two words such $(\Ac\cdot u)=s$ and $(\Ac\cdot u')=s'$. Since $\phi$ is a pseudo-morphism, we get that
  \begin{equation*}
      (\ARpb[?][k]\cdot u)=\phi(s)=\phi(s')= (\ARpb[?][k]\cdot u')
  \end{equation*}
  and so~$\val{u}\equiv\val{u'}~[k]$.
  Applying Lemma~\ref{l.ARpb-corr} yields that the states~$(\ARpb\cdot u)$
  and~$(\ARpb\cdot u')$ are congruent modulo~$k$,
  and by Lemma~\ref{l.equiv-k-ulti-equiv}, these states
  are ultimately-equivalent.
Since $\Ac$ is the minimisation of~$\ARpb$, we have an automaton morphism $\ARpb\to\Ac$.
  Finally, since ultimate-equivalence commutes with automaton morphism
  (Lemma \ref{l.ulti-equi-comm-morp}),~$(\Ac\cdot u)=s$ and~$(\Ac\cdot u')=s'$ are
  ultimately-equivalent.
\end{proof}

\begin{proof}[Proof of backward direction]
  By assumption, for all $i\in\ZZ[\ell]$, the states in~$\phi^{-1}(i)$ are ultimately-equivalent.
  For every integer~$i\in\ZZ[\ell]$, we let~$m_i$ denote the least integer such that, for all $s,s'$ in $\phi^{-1}(i)$, $(s\cdot u)=(s'\cdot u)$ whenever $|u|\ge m_i$. Let $m=\max\{m_i\mid{i\in\ZZ[\ell]}\}$.

  \medskip

  Let~$u,u'$ be two words with respective values that are congruent modulo~$\ell b^m$. Note that, in particular, $\val{u}$ and $\val{u'}$ are thus congruent modulo~$b^m$.
  Let us show that~$u$ and~$u'$ reach the same state in~$\Ac$.
  Since~$\Ac$ bears a self-loop labelled by~$0$ on the initial state,
  the word~$0^mu$ is such that~$\val{0^mu}=\val{u}$ and~$\Ac\cdot 0^mu =\Ac\cdot u$.
  We may thus assume that~$u$ and~$u'$ are longer than~$m$.
  There exist factorisations~$u=vw$ and~$u'=v'w'$ such that the
  lengths of~$w$ and~$w'$ are both equal to~$m$.
  Since~$\val{u}$ and~$\val{u'}$ are congruent modulo~$b^m$,~$w$ and~$w'$ are equal: $u=vw$,~$u'=v'w$.

Assume without loss of generality that $\val{u}\ge \val{u'}$. Hence $\val{u}-\val{u'}=(\val{v}-\val{v'})b^m$ is congruent to $0$ modulo $\ell b^m$. We deduce that $\val{v}$ and $\val{v'}$ are congruent modulo $\ell$.
By Lemma~\ref{l.ARpb-corr}, the respective runs of $v$ and $v'$ in~$\ARpb[?][\ell]$
  reach the same state:~$(\ARpb[?][\ell]\cdot v) = (\ARpb[?][\ell]\cdot v')$.
From assumption \ref{t.pp-pseu-morp*}, we get $\phi(\Ac\cdot v)=\phi(\Ac\cdot v')$.
  %
  %
  %
  In other words, the states~$(\Ac\cdot v)$ and~$(\Ac\cdot v')$ are~$\phi$-equivalent. Hence,
  by assumption~\ref{t.pp-ulti-equi*}, they are $m_i$-ultimately-equivalent.
  Since~$\wlen{w}=m\geq m_i$ (by choice of~$m$), we get that~$(\Ac\cdot v \cdot w)=(\Ac\cdot v'\cdot w)$:
  the run in~$\Ac$ of the words~$u=vw$ and~$u'=v'w$ indeed reach the same state.

  \medskip

  We have just shown that words whose values are congruent modulo~$\ell b^m$
  have runs in~$\Ac$ reaching the same states, hence either all are accepted by~$\Ac$
  or none of them are.
  The run of a word~$u$ is then accepted by~$\Ac$ if and only if~$\rep{\val{u}\mod(\ell b^m)}$ is.
  Finally, a word~$u$ is accepted by~$\Ac$
  if and only if~$\val{u}\mod(\ell b^m)$ belongs to the set~$R\subseteq\{0,\ldots,\ell b^m-1\}$, defined by
  \begin{equation*}
      R=\{~i\in\ZZ[\ell b^m]~|~(\Ac\cdot\rep{i})\text{ is final}~\}~.\qedhere
  \end{equation*}
\end{proof}

\begin{numremark}\label{r.l=k}
  In the proof of the forward direction, it was stated that~$\ell=k$ (where~$k$ is the greatest divisor of the period which is coprime with the base).
  It is also the case in the backward direction. Indeed, the automaton~$\Ac$ is shown to
  accept a purely periodic set of integers.
  Let~$(p,R)$ denotes the \textbf{proper} parameter of this set (it is not necessarily the one given in the proof).
  Since~$\Ac$ is minimal, it is the quotient of~$\ARpb[R][p]$.
  It then follows from Proposition~\ref{p.mini-k-stat} that,~$\ell$, the number of states belonging to
  $0$-circuits, is equal to~$k$, the greatest divisor of the period which is coprime with the base.
\end{numremark}

\subsection{Complexity and algorithmic issues}
\label{s.comp-algo}

Theorem \ref{t.pp} yields an algorithm to decide whether a given deterministic automaton~$\Ac$ accepts
by value a purely periodic set of integers:
\begin{enumerate}
  \renewcommand{\theenumi}{\arabic{enumi}}
  \addtocounter{enumi}{-1}
  \item if necessary, minimise $\Ac$ and make it complete;
  \item count the number $\ell$ of states of~$\Ac$ that belong to $0$-circuits;
  \item build the automaton~$\ARpb[?][\ell]$;
  \item construct, if it exists, the pseudo morphism~$\phi:\Ac \rightarrow\ARpb[?][\ell]$;
  \item check whether, for all~$x\in\ZZ[\ell]$, the states of~$\phi^{-1}(x)$ are ultimately-equivalent.
\end{enumerate}
Let us denote by~$n$ the number of states of~$\Ac$.
Step~$(0)$ can be carried out in~$O(bn\log n)$ time.
Steps~$(1)$,~$(2)$ can obviously be performed in~$O(bn)$ time.
A morphism between deterministic automata, if it exists, can be computed by a
single traversal of the bigger automaton; the same algorithm also works for pseudo-morphisms:
Step~$(3)$ also runs in~$O(bn)$ time.
The ultimate-equivalence classes of~$\Ac$ can be computed in time~$O(b n \log n)$ from
Proposition~\ref{p.ue-com}, hence so is the execution of Step~$(4)$.
\begin{corollary}\label{c.pp}
  Let~$b>1$ be a base and~$\Ac$ be a $n$-state deterministic automaton over~$\intalph$. It is decidable in~$O(b n \log n)$ time whether~$\Ac$ accepts by value a purely periodic set of integers.
\end{corollary}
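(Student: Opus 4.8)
The plan is to turn Theorem~\ref{t.pp} into a decision procedure and then account for the cost of each step. The theorem requires its input to be minimal, complete, and to carry a self-loop labelled by~$0$ on its initial state, so the first move is to bring the arbitrary input~$\Ac$ into this shape. I would minimise and complete~$\Ac$; by the acceptance-by-value convention the initial state and the state reached by~$0$ are Nerode-equivalent, so the minimisation automatically carries the required $0$-loop, and the remaining hypotheses hold by construction. Crucially, minimisation preserves $L(\Ac)$, hence $\Ac$ accepts by value a purely periodic set if and only if its minimisation does, so it suffices to run the test of Theorem~\ref{t.pp} on the latter.

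Next I would mechanise the two conditions of Theorem~\ref{t.pp}. Let $\ell$ be the number of states lying on a $0$-circuit, which a single scan of the $0$-labelled subgraph determines. I then build $\ARpb[?][\ell]$ and attempt to construct the pseudo-morphism $\phi:\Ac\to\ARpb[?][\ell]$: by Lemma~\ref{l.cara-pseu-morp} such a $\phi$ exists precisely when any two words distinguished by~$\ARpb[?][\ell]$ are also distinguished by~$\Ac$, and it can be produced—or shown not to exist—by a single traversal of the larger automaton~$\Ac$. Finally, condition~\ref{t.pp-ulti-equi*} is checked by computing the ultimate-equivalence classes of~$\Ac$ and verifying, for each $x\in\ZZ[\ell]$, that the fibre $\phi^{-1}(x)$ lies inside a single class.

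For the running time I would tally the steps. Minimisation costs $O(bn\log n)$; counting~$\ell$, building $\ARpb[?][\ell]$ (which has $\ell\le n$ states), and constructing~$\phi$ each cost only $O(bn)$; and the ultimate-equivalence classes are obtained in $O(bn\log n)$ by Proposition~\ref{p.ue-com}, after which the refinement test is linear. The two $O(bn\log n)$ contributions dominate, yielding the claimed bound.

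The step I expect to be the real bottleneck—and the only genuinely nontrivial ingredient—is the $O(bn\log n)$ computation of ultimate-equivalence: the straightforward analysis of the product graph~$\Gc$ of Section~\ref{s.pp} costs $O(bn^2)$, and it is precisely the B\'eal--Crochemore partition-refinement algorithm invoked in Proposition~\ref{p.ue-com} that brings this down to match the target complexity. The remaining care is in the pseudo-morphism step, where one must correctly report failure when the separation condition of Lemma~\ref{l.cara-pseu-morp} is violated, rather than silently producing an ill-defined map.
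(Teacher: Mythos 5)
Your proposal is correct and follows essentially the same route as the paper: minimise and complete~$\Ac$ (the $0$-loop on the initial state coming for free from the acceptance-by-value convention), count the states~$\ell$ on $0$-circuits, build~$\ARpb[?][\ell]$, construct the pseudo-morphism by a single traversal (reporting failure via Lemma~\ref{l.cara-pseu-morp}), and test the refinement condition using the $O(bn\log n)$ ultimate-equivalence computation of Proposition~\ref{p.ue-com}, which dominates alongside minimisation. Your complexity accounting matches the paper's step-by-step tally exactly.
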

%

%
%

%
\begin{numremark}
  Remark~\ref{r.l=k} gives a very fast rejection test.
  Indeed, before Step~(2) we may check whether the integer~$\ell$
  (computed by Step (1))  is coprime with~$b$.
  If it is not the case,~$\Ac$ may be rejected already.
\end{numremark}
%


\begin{example}
  We start with the minimal automaton~$\Ac$ depicted in Figure~\ref{fig3a}.
  Step (1) is shown in Figure~\ref{fig3b}: $\Ac$ has five states
  belonging $0$-circuits and thus, $\ell=5$.
  Step (2) then consists in constructing~$\ARpb[?][5]$, shown in~$\ref{f.A?b}$.
  There is a pseudo-morphism~$\Ac\rightarrow \ARpb[?][5]$, whose
  equivalence classes are represented in Figure~\ref{fig4}.
  Finally, one could check that Step (4) holds:
  all states belonging to the same class are $3$-ultimately-equivalent.
  Hence $\Ac$ accepts an eventually periodic set of period $2^3\times 5$.
  It is indeed the minimisation of~$\ARpb[\{0,3\}][40]$.
  \begin{figure}[ht!]
    \begin{minipage}[t]{0.5\linewidth}
      \centering
      \includegraphics[scale=\AutomatonScale]{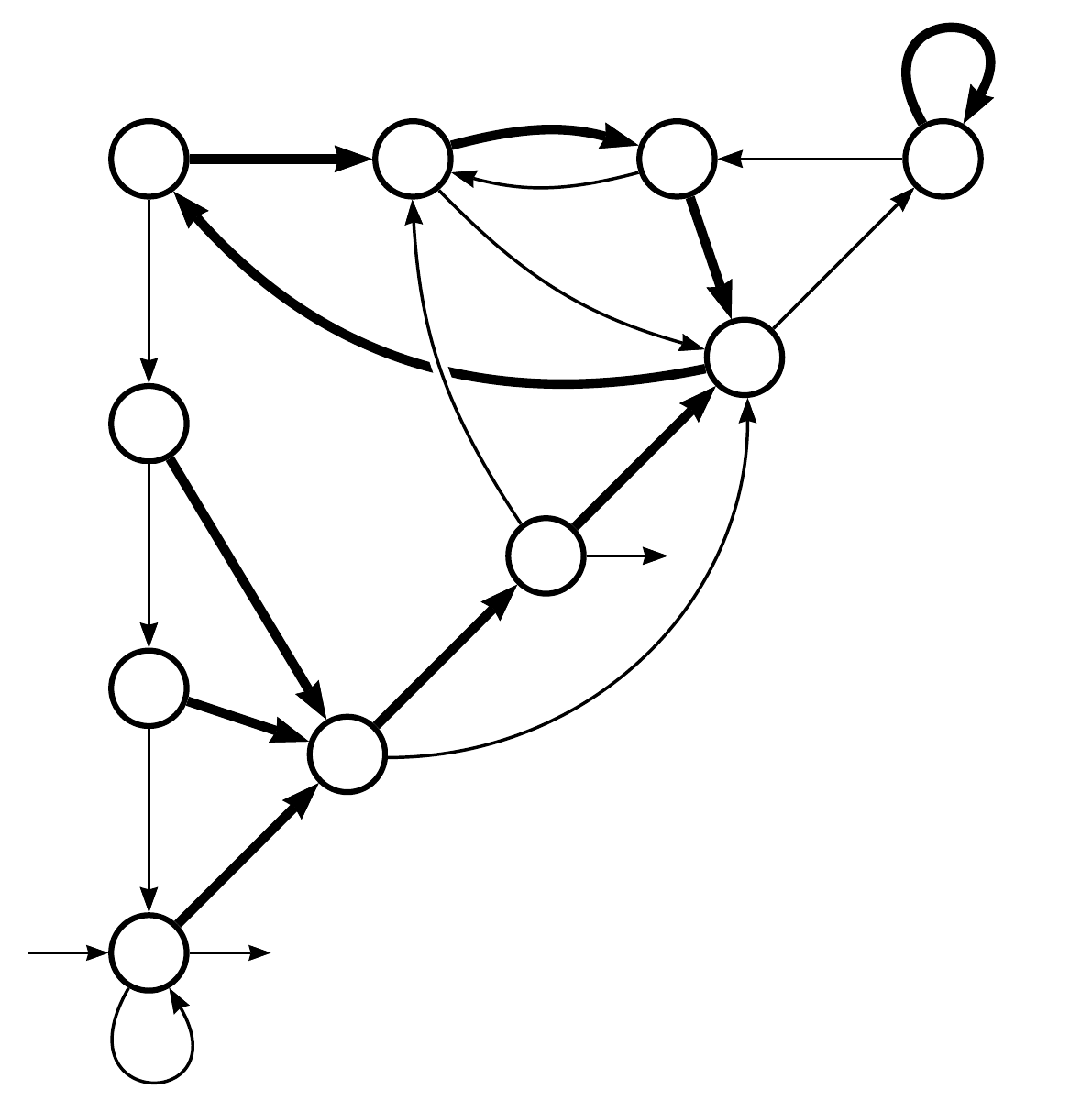}
      \captionof{figure}{An automaton~$\Ac$}
      \label{fig3a}
    \end{minipage}%
    \begin{minipage}[t]{0.5\linewidth}
      \centering
      \includegraphics[scale=\AutomatonScale]{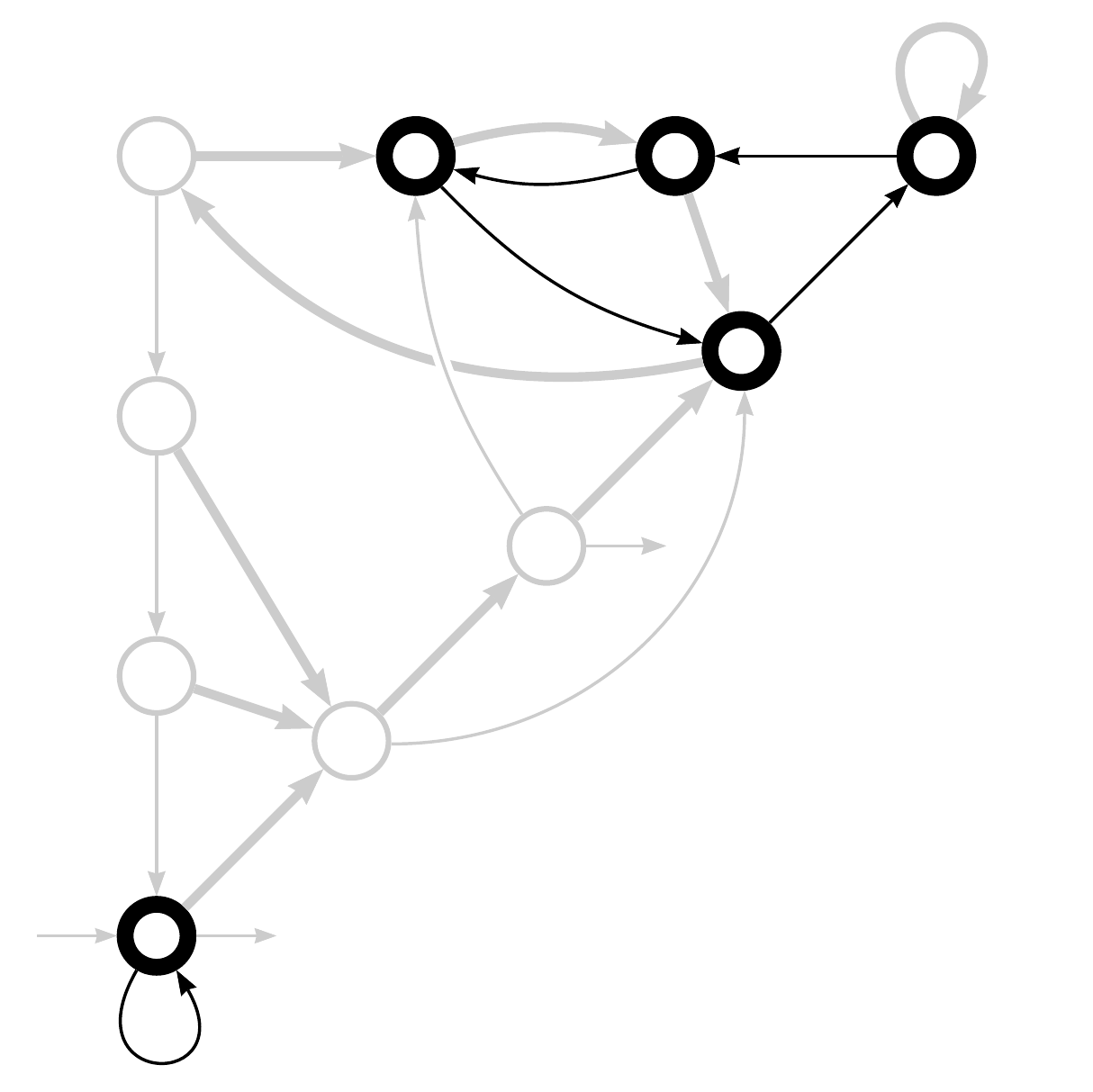}
      \captionof{figure}{The 0-circuits of~$\Ac$ have~$5$ states in total}
      \label{fig3b}
    \end{minipage}%
    \hspace*{0cm plus 1fill}
  \end{figure}
  \begin{figure}[ht!]
    \begin{minipage}[t]{0.5\linewidth}
      \centering
      \includegraphics[scale=\AutomatonScale]{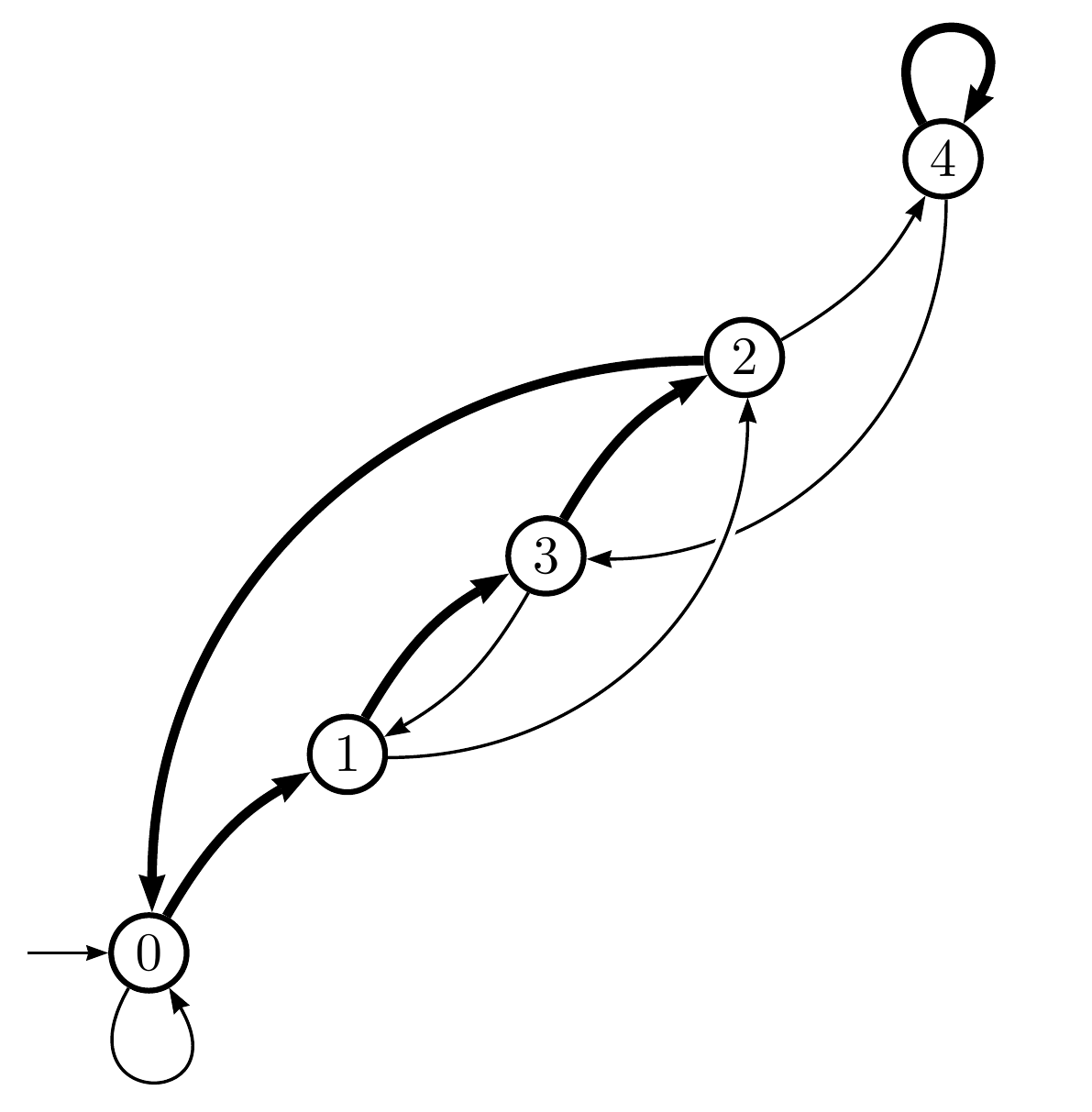}
      \captionof{figure}{The automaton~$\ARpb[?][5]$}
      \label{f.A?b}
    \end{minipage}%
    \begin{minipage}[t]{0.5\linewidth}
      \centering
      \includegraphics[scale=\AutomatonScale]{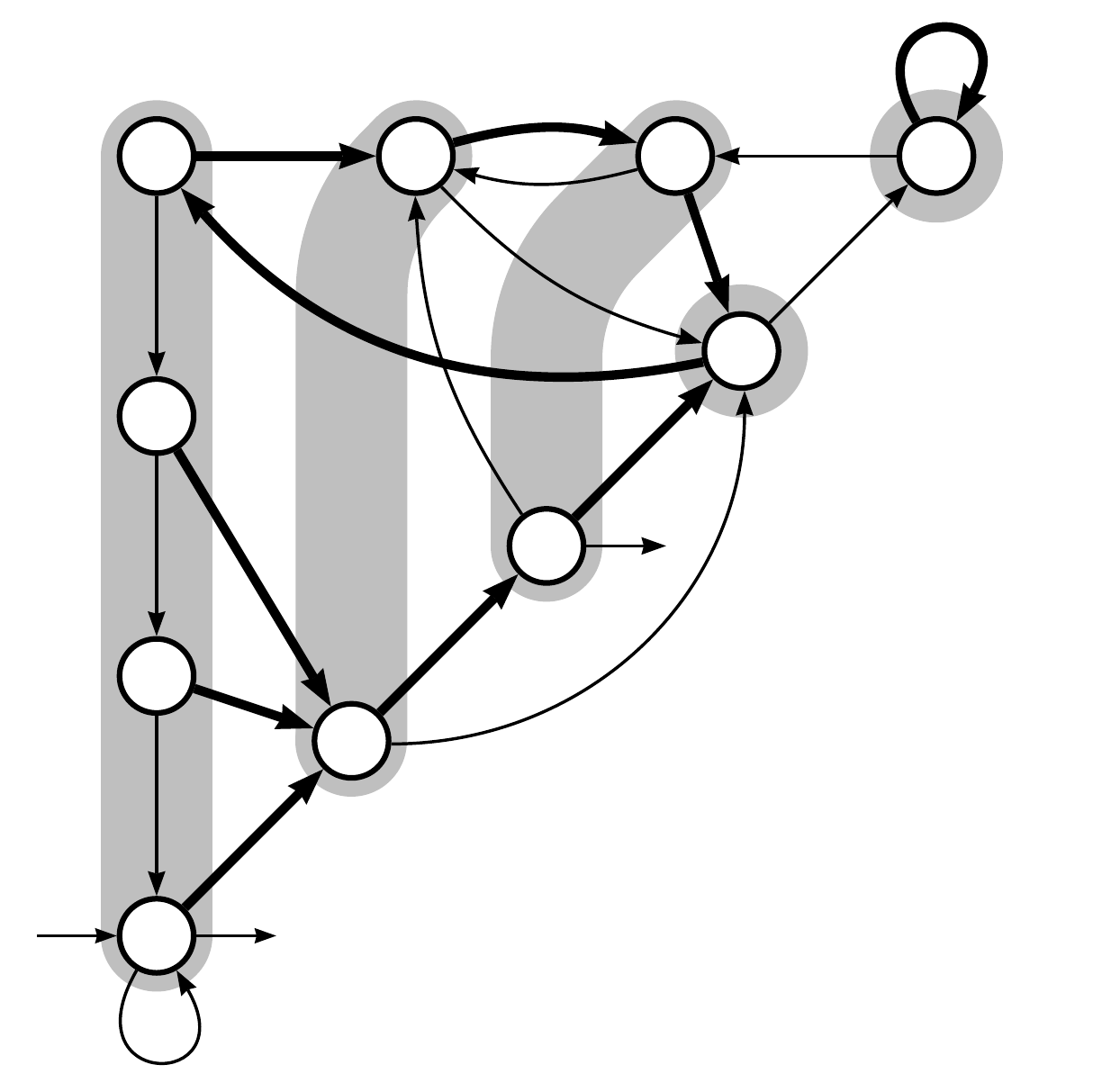}
      \captionof{figure}{Equivalence classes of the relation induced by the
            pseudo-morphism~$\Ac\rightarrow \ARpb[?][5]$}
      \label{fig4}
    \end{minipage}%
  \end{figure}
\end{example}

\section{Impurely periodic b-recognisable sets}

In this section, we will study the eventually periodic sets of integers that are not purely
periodic (see Definition~\ref{d.purely-periodic}). We say that such sets are \emph{impurely periodic}.
For denotational reasons, we will describe eventually periodic sets~$S$ with three parameters:
a period~$p$, a remainder-set~$R\subseteq \{0,\ldots,p-1\}$ and a finite set~$I\subseteq\N$ of ``mismatches'' with a purely periodic set.
Such a triplet~$(p,R,I)$ is a \emph{parameter} of~$S$ if
\begin{equation*}
    S = (R+p\N) \oplus I~,
\end{equation*}
where $\oplus$ is the \emph{exclusive disjunction} operation: an integer belongs to~$S$ if it belongs
either to~$(R+p\N)$ or to~$I$, but not both. One can also find the terminology \emph{symmetric difference} or \emph{disjunctive union} (and the notation $\Delta$).
\begin{example}
    The set $S=\{0,6\}\cup(\{4,5\}+4\N)$ can be described by the parameter $(4,\{0,1\},\{1,6\})$. Indeed, the purely periodic set $P=\{0,1\}+4\N$ and the set $S$ differ only by the fact that $1\in P\setminus S$ and $6\in S\setminus P\}$.
\end{example}
This way of describing eventually periodic sets has several advantages:
the parameter has only three components,
the set~$I$ is uniquely defined
and it allows to determine if~$S$ is purely periodic (Lemmas~\ref{l.exis-uniq-I} and~\ref{l.ip<->I-empt}),

\begin{lemma}\label{l.exis-uniq-I}
  Let~$S$ be an eventually periodic set of integers.
  There is a unique purely periodic set~$P\subseteq\N$ and a unique finite set~$I\subseteq \N$ of mismatches
  such that~$S=I\oplus P$.
\end{lemma}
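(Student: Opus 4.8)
The plan is to treat existence and uniqueness separately, with the real content concentrated in a short rigidity property of purely periodic sets.

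For existence, I would unwind the definition of eventual periodicity: there are integers $p\ge 1$ and $N\ge 0$ with $n\in S\iff n+p\in S$ for every $n\ge N$. Consequently, among the integers that are at least $N$, membership in $S$ depends only on the residue modulo~$p$. I would then set
$R=\{\,r\in\{0,\ldots,p-1\}\ :\ \text{some }n\ge N\text{ with }n\equiv r\ [p]\text{ lies in }S\,\}$
and take $P=R+p\N$, which is purely periodic by Definition~\ref{d.purely-periodic}. A direct check gives $n\in P\iff n\in S$ for all $n\ge N$: indeed $n\in P\iff (n\bmod p)\in R$, and for $n\ge N$ the defining property of $R$ together with eventual periodicity makes this equivalent to $n\in S$. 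Hence the symmetric difference $I=S\oplus P$ is contained in $\{0,\ldots,N-1\}$ and is finite, and since $\oplus$ is involutive we recover $S=P\oplus I$.

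For uniqueness, suppose $S=P\oplus I=P'\oplus I'$ with $P,P'$ purely periodic and $I,I'$ finite. Outside the finite set $I\cup I'$ one has $P=S=P'$, so $P$ and $P'$ coincide on a cofinite set. The crux is then the following rigidity statement: \emph{two purely periodic sets that agree on a cofinite set are equal}. To prove it, write $P=R+p\N$ and $P'=R'+p'\N$; both are periodic with the common period $q=p\,p'$, meaning $n\in P\iff n+q\in P$ for \emph{every} $n\ge 0$, and likewise for $P'$. Given any $n$, I would choose $t$ large enough that $n+tq$ lies beyond the threshold where $P$ and $P'$ coincide; then $n\in P\iff n+tq\in P\iff n+tq\in P'\iff n\in P'$, so $P=P'$. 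Finally $I=S\oplus P=S\oplus P'=I'$, again using that $\oplus$ is its own inverse.

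The only step with genuine content is the rigidity claim; the rest is bookkeeping with~$\oplus$. What makes rigidity work—and what I expect to be the subtle point to state carefully—is that pure periodicity holds from~$0$ onward rather than merely eventually, so membership at an arbitrary~$n$ can be pulled back from an arbitrarily large integer by repeatedly subtracting a common period. This is exactly the feature that a purely periodic set has but a general eventually periodic set lacks, and it is what forces the decomposition to be unique.
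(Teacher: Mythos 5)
Your proof is correct; note that the paper states Lemma~\ref{l.exis-uniq-I} without any proof, and your argument---defining $R$ from the residues occupied beyond the periodicity threshold $N$ for existence, and for uniqueness the rigidity fact that two purely periodic sets agreeing on a cofinite set agree everywhere, pulled back via the common period $pp'$---is precisely the routine argument the authors leave implicit. You also correctly isolate the one point of substance: pure periodicity holds from $0$ onward (unlike mere eventual periodicity), which is exactly what forces $P$, and hence $I=S\oplus P$, to be unique.
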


We then say that the triplet~$(p,R,I)$ is \emph{the proper parameter} of an eventually
periodic set~$S$ if~$S = (R+p\N) \oplus I$ and~$p$ is the smallest positive period
for which such~$R,I$ exist.
We take the convention that the proper parameter of a finite set~$S$ is~$p,R,I=(1,\emptyset, S)$, (instead of considering that the period equals~$0$); this is why the smallest period is assumed to be positive in the previous sentence.

\begin{lemma}\label{l.ip<->I-empt}
  An eventually periodic set~$S$ of parameter~$(p,R,I)$ is purely periodic if and
  only if~$I$ is empty.
\end{lemma}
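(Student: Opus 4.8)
The plan is to obtain both implications as immediate consequences of the uniqueness statement in Lemma~\ref{l.exis-uniq-I}, invoking only the elementary fact that the symmetric difference satisfies $A\oplus\emptyset=A$ and that the empty set is a (finite) set of mismatches. The key observation, which I would state first, is that for any parameter $(p,R,I)$ of $S$ the set $P=R+p\N$ is purely periodic by Definition~\ref{d.purely-periodic}; hence $S=P\oplus I$ is already a decomposition of exactly the kind to which Lemma~\ref{l.exis-uniq-I} applies, so $I$ is necessarily the unique mismatch set attached to $S$.

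For the backward direction, assuming $I=\emptyset$, the defining relation $S=(R+p\N)\oplus I$ collapses to $S=R+p\N$, which is purely periodic directly from Definition~\ref{d.purely-periodic}. For the forward direction, suppose $S$ is purely periodic. Then $S=S\oplus\emptyset$ exhibits $S$ as the symmetric difference of a purely periodic set (namely $S$ itself) and the finite mismatch set $\emptyset$. By the uniqueness asserted in Lemma~\ref{l.exis-uniq-I}, the purely periodic part and the mismatch set of $S$ depend only on $S$; comparing with the decomposition $S=P\oplus I$ from the first paragraph forces $P=S$ and, in particular, $I=\emptyset$.

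I expect no genuine obstacle: all the substantive work has been done in Lemma~\ref{l.exis-uniq-I}, and what remains is purely formal. The only point meriting a moment's attention is to verify that the $I$ occurring in an arbitrary parameter coincides with the canonical mismatch set of $S$, which is precisely the observation recorded in the first paragraph. Once that is in place, both directions reduce to one-line applications of uniqueness.
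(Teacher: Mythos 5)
Your proof is correct and takes exactly the route the paper intends: the lemma is stated there without proof, as an immediate consequence of the uniqueness in Lemma~\ref{l.exis-uniq-I}, which is precisely your argument. Your preliminary observation that the~$I$ occurring in an arbitrary parameter must coincide with the canonical mismatch set, combined with the decomposition $S=S\oplus\emptyset$ for the forward direction, is all that is needed.
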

\begin{notation}
  In what follows, we consider impurely periodic sets of integers,
  hence a finite \textbf{non-empty} set~$I\subseteq\N$ of mismatches is given.
  Moreover, we still follow the convention of Notation~\ref{n.nota} recapped hereafter.
  A period~$p$ and  a remainder-set~${R\subseteq\{0,\ldots,p-1\}}$ are given.
  We let~$k$ denote the greatest
  divisor of~$p$ that is coprime with the base~$b$,~$d$ is the integer such that~$kd=p$
  and~$j$ is the smallest integer such that~$d$ divides~$b^j$.
\end{notation}

\subsection{The automata~$\BIb$ and~$\CRpIb=\ARpb\oplus\BIb$}

We will describe an automaton accepting, by value, an eventually periodic set with parameter $(p,R,I)$. We first have to deal with the set $I$ of mismatches.
\begin{definition}
  We denote by~$m$ the greatest element of~$I$.
  We denote by~$\BIb$ the automaton:
  \begin{equation*}
      \BIb=\langle \intalph ,~
             \{ 0,\ldots,m\}\cup\{\bot\} ,~
             \delta ,~
             0 ,~
             I  \rangle~,
  \end{equation*}
  where~$\delta$ if defined as follows.
  \begin{gather}
    \forall i\in\{ 0,\ldots,m\},~\forall a\in\intalph \quad
      \left\lbrace\begin{array}{ll}
        i\rpath{a} (ib+a) \quad &\text{if  }ib+a\leq m \\
        i\rpath{a} \bot \quad &\text{otherwise}
      \end{array}\right. \\
    \forall a\in\intalph \quad \bot \rpath{a} \bot
  \end{gather}

\end{definition}
Simple and formal verification yields the following properties of~$\BIb$.
We write \emph{scc} for strongly connected component. A \emph{trivial} scc is a
state belonging to no circuit.
\begin{lemma}\label{l.BIb-prop}
  The automaton~$\BIb$
  \begin{enumerate}
    \renewcommand{\theenumi}{\alph{enumi}}
    \item is deterministic, complete and trim;
    \item has exactly two non-trivial sccs:~$\{0\}$ and~$\{\bot\}$;
    \item has exactly two 0-circuits, the respective self-loops on~$0$ and~$\bot$;
    \item accepts a word~$u\in\intalph^*$ if and only if~$\val{u}\in I$.
  \end{enumerate}
\end{lemma}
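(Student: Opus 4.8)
The plan is to derive all four items from a single \emph{value-tracking} observation and otherwise read everything off the transition rule. The observation is: for every word $u\in\intalph^*$, the run of $u$ from the initial state $0$ ends in state $\val{u}$ when $\val{u}\le m$, and in $\bot$ when $\val{u}>m$. First I would note that the rule $i\rpath{a} ib+a$ is exactly Horner's evaluation, so as long as the running value stays in $\{0,\ldots,m\}$ the current state \emph{equals} the value of the prefix read so far. Next, because each step sends a value $v$ to $vb+a\ge v$ (using $b>1$ and $a\ge 0$), partial values never decrease; since every prefix of $u$ has value at most $\val{u}$, the two cases are exhaustive, and once the value first exceeds $m$ the run is absorbed into $\bot$ and remains there.

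Item (d) is then immediate: as $I\subseteq\{0,\ldots,m\}$ and $\bot\notin I$, the run of $u$ ends in $I$ exactly when $\val{u}\le m$ and $\val{u}\in I$, i.e.\ exactly when $\val{u}\in I$. For item (a), determinism and completeness are visible directly in $\delta$, which assigns each pair $(s,a)$ a single target and leaves none undefined. Accessibility follows from the observation: state $i\in\{0,\ldots,m\}$ is reached by reading $\rep{i}$ (its prefixes have value at most $i\le m$), while $\bot$ is reached from $m$ by any digit since $mb+a\ge mb>m$ for $m\ge 1$ (and from $0$ by a nonzero digit when $m=0$), both using $b>1$.

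For items (b) and (c) the governing fact is that a state $i\ge 1$ has only strictly larger successors: $ib+a\ge ib\ge 2i>i$, and in particular its $0$-successor $ib>i$ (or $\bot$ on overflow). Hence no state of $\{1,\ldots,m\}$ lies on a circuit, so the only states on circuits are $0$, via the self-loop $0\rpath{0}0$, and $\bot$, via its self-loops. This yields exactly the two non-trivial sccs $\{0\}$ and $\{\bot\}$ of~(b); and, both of these circuits being labelled $0$ and no other $0$-circuit being possible, exactly the two $0$-circuits of~(c).

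The proof is mostly bookkeeping; the one real step is the value-tracking observation, and inside it the monotonicity of partial values that makes absorption into $\bot$ permanent. The genuinely delicate point is the word \emph{trim}: since $\bot$ is a non-co-accessible sink (and for some choices of $I$ further states below $m$ reach only $\bot$), the automaton is not trim in the strict accessible-and-co-accessible sense, so I would state explicitly that here \emph{trim} is read as \emph{accessible}, with $\bot$ serving as the designated absorbing rejecting state.
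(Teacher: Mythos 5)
Your proof is correct and takes the only natural route here, which is also the paper's: the paper offers no explicit argument beyond declaring the verification ``simple and formal'', and your value-tracking invariant (the run of $u$ from $0$ ends in $\val{u}$ if $\val{u}\leq m$ and is absorbed by $\bot$ otherwise, by monotonicity of partial values under $v\mapsto vb+a$) is precisely that verification, from which (a)--(d) follow as you describe. Your caveat about \emph{trim} is well taken and is in fact a small inaccuracy in the statement itself: under the strict accessible-and-co-accessible definition, the sink $\bot$ is never co-accessible (and other states may fail co-accessibility too, e.g.\ state $3$ for $b=2$, $I=\{4\}$), so item (a) must be read as asserting accessibility with $\bot$ kept only for completeness, which your argument establishes.
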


In the next definition, the exclusive disjunction $\oplus$ is extended to sets of pairs of states.
\begin{definition}\label{d.excl-disj}
  Given two complete automata~$\Ac$ and~$\Bc$ over the alphabet~$\intalph$.
  We define the \emph{exclusive disjunction}~$\Ac\oplus\Bc$ as usual:
  \begin{equation*}
      \Ac \oplus \Bc = \langle \intalph ,~
                              Q_\Ac\times Q_\Bc ,~
                              \delta ,~
                              (i_\Ac,i_\Bc),~
                              (F_\Ac \times Q_\Bc) \oplus (Q_\Ac \times F_\Bc)
                              \rangle~,
  \end{equation*}
  where~$\delta$ is defined as follows.
  \begin{equation*}
      \forall s,s'\in Q_\Ac,~\forall t,t'\in Q_\Bc,~\forall a\in\intalph\quad
  (s,t)\rpath{a} (s',t') \iff \left\lbrace\begin{array}{l}
                                  s \rpath{a} s' \text{ in } \Ac \\
                                  t\rpath{a} t' \text{ in } \Bc\\
                                  \end{array}\right.~.
  \end{equation*}
\end{definition}
It is quite obvious that a word~$u$ is accepted by~$\Ac\oplus\Bc$ if and only if
it is accepted by~$\Ac$ or~$\Bc$, but not by both of them.
\begin{notation}
  We let~$\CRpIb$ denote the exclusive disjunction~$\CRpIb=\ARpb\oplus\BIb$.
\end{notation}

The next lemma gives properties of $\CRpIb=\ARpb\oplus\BIb$ that follow
from Lemma~\ref{l.BIb-prop}
and Definition~\ref{d.excl-disj}.
Recall that we have seen in Property~\ref{pp.sc} that $\ARpb$ is strongly connected.

\begin{lemma}\label{l.CRpIb-props}
  The following properties hold.
  \begin{enumerate}
    \renewcommand{\theenumi}{\alph{enumi}}
    \item $\CRpIb$ is deterministic, complete and trim.
    \item \sublabel{l.CRpIb-sccs}
      $\CRpIb$ possesses exactly two non-trivial sccs:%
      \vspace{-0.5\topsep}
      \begin{itemize}%
        \itemsep0em
        \item the singleton made of the initial state,~$\{(0,0)\}$,
        \item and~$\{\,(s,\bot)~|~s \text{ is a state of }\ARpb\,\}$.
      \end{itemize}
      \vspace{-0.5\topsep}
      Moreover this second scc is isomorphic to~$\ARpb$ by projecting to the first component,
      hence complete.
    \item $\CRpIb$  accepts a word~$u\in\intalph^*$ if and only if~$\val{u}\in ((R+p\N)\oplus I)$.
  \end{enumerate}
\end{lemma}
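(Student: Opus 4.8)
The plan is to verify each of the three properties of $\CRpIb = \ARpb \oplus \BIb$ by tracing how the exclusive-disjunction construction (Definition~\ref{d.excl-disj}) combines the already-known structure of the two factors, namely that $\ARpb$ is strongly connected (Property~\ref{pp.sc}) and that $\BIb$ has the structure recorded in Lemma~\ref{l.BIb-prop}. The product automaton has state set $Q_{\ARpb}\times Q_{\BIb}$, and its transition structure is coordinatewise, so I would repeatedly exploit that a run of a word $u$ in $\CRpIb$ is exactly the pair of runs of $u$ in the two factors.

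For property (a), I would argue that determinism, completeness and trimness are all inherited from the factors. Both factors are complete and deterministic (Property~\ref{pp.sc} gives completeness of $\ARpb$ since it is strongly connected with every letter defined; Lemma~\ref{l.BIb-prop}(a) gives both for $\BIb$), so the coordinatewise transition function is again total and single-valued. Trimness requires a short accessibility/co-accessibility check: accessibility follows because $\ARpb$ is strongly connected and $\BIb$ is trim, and co-accessibility follows similarly; here I would only sketch that every product state lies on an accepting path by combining paths in the two components.

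For property (\ref{l.CRpIb-sccs*}), which is the crux, I would determine the non-trivial strongly connected components of the product. A pair $(s,t)$ can sit in a non-trivial scc of $\CRpIb$ only if both $s$ lies in a non-trivial scc of $\ARpb$ and $t$ lies in a non-trivial scc of $\BIb$ \emph{and} the components are compatible under synchronised reading. Since $\ARpb$ is entirely one strongly connected component, the constraint comes from $\BIb$, whose only non-trivial sccs are $\{0\}$ and $\{\bot\}$ (Lemma~\ref{l.BIb-prop}(b)). The component sitting over $t=0$ collapses to the single initial state $\{(0,0)\}$: reading any nonempty word from $0$ in $\BIb$ either stays at $0$ only via the single $0$-self-loop or leaves $0$ forever, and one checks that in the product only the self-loop on $(0,0)$ survives as a circuit. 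The component sitting over $t=\bot$ is $\{(s,\bot)\mid s\in\ARpb\}$, and since $\bot$ carries a self-loop on every letter (so the second coordinate is frozen at $\bot$), reading $u$ from $(s,\bot)$ moves the first coordinate exactly as in $\ARpb$; strong connectedness of $\ARpb$ then transfers verbatim, and the projection onto the first coordinate is the claimed isomorphism with $\ARpb$, whence this scc is also complete.

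For property (c), I would invoke the general remark already made after Definition~\ref{d.excl-disj}: a word $u$ is accepted by $\Ac\oplus\Bc$ iff it is accepted by exactly one of $\Ac,\Bc$. Combining Lemma~\ref{l.ARpb-corr} (which identifies $L(\ARpb)$ as the representations of integers congruent to a residue in $R$ modulo $p$, i.e.\ of $R+p\N$) with Lemma~\ref{l.BIb-prop}(d) (which gives $L(\BIb)=\rep{I}$), acceptance of $u$ by $\CRpIb$ is equivalent to $\val{u}\in (R+p\N)\oplus I$, as required. The main obstacle is the scc analysis in part (\ref{l.CRpIb-sccs*}): one must argue carefully that no \emph{additional} non-trivial sccs arise in the product and that the two expected ones really are strongly connected, which hinges on the observation that the $\bot$-self-loops in $\BIb$ let the $\ARpb$-dynamics run unimpeded while over $t=0$ the product circuits are forced to be trivial except for the self-loop on the initial state.
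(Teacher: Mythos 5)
Your overall route is the one the paper intends: the paper gives no written proof of this lemma, stating only that it follows from Lemma~\ref{l.BIb-prop}, Definition~\ref{d.excl-disj} and the strong connectedness of~$\ARpb$ (Property~\ref{pp.sc}), and your handling of items (a), (c) and of the $\bot$-layer in (b) fleshes out exactly that. But there is a genuine gap, located at the two places you gloss over, and it concerns accessibility. Accessibility of a synchronised product is \emph{not} obtained ``by combining paths in the two components'': a state~$(s,t)$ of $\ARpb\oplus\BIb$ is accessible only if a \emph{single} word~$u$ satisfies both $\val{u}\equiv s~[p]$ and $(0\cdot u)=t$ in~$\BIb$. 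Since $(0\cdot u)=t$ with $t\leq m$ forces $\val{u}=t$, the accessible states whose second component differs from~$\bot$ are exactly the pairs $(t\bmod p,\,t)$ with $0\leq t\leq m$; every $(s,t)$ with $t\leq m$ and $t\not\equiv s~[p]$ is unreachable. So your claim in (a) that every state of $Q_{\ARpb}\times Q_{\BIb}$ lies on an accepting path is false; the trick of using strong connectedness of $\ARpb$ to adjust the first coordinate works only on the $\bot$-layer, where the second coordinate is frozen. The lemma is true only under the paper's standing convention (Section~1) that all automata considered are accessible, i.e., with $\CRpIb$ denoting the accessible part of the product---a restriction your proof never invokes.

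This is not cosmetic, because your scc analysis of the layer over $t=0$ fails on the full product precisely where you write ``one checks''. A circuit of the product lying over $t=0$ keeps its second component equal to~$0$, hence reads only the digit~$0$, and its first component then traces a $0$-circuit of~$\ARpb$; by Lemma~\ref{l.0-circ} those circuits pass through \emph{every} multiple of~$d$, not only through~$0$. Concretely, for $b=2$ and $p=3$ the full product contains the circuit $(1,0)\rpath{0}(2,0)\rpath{0}(1,0)$, so $\{(1,0),(2,0)\}$ would be a third non-trivial scc and item~\ref{l.CRpIb-sccs*} would be false under your reading. The missing idea is again accessibility: reaching $(s,0)$ from the initial state forces the input word to lie in $0^*$, hence $s=0$, and this---not any property of the circuits of the full product---is why the layer over~$0$ collapses to $\{(0,0)\}$. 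Once $\CRpIb$ is restricted to its accessible part, the rest of your argument (the $\bot$-layer, frozen in its second coordinate, isomorphic to $\ARpb$ by first projection and strongly connected by Property~\ref{pp.sc}; the language computation for (c)) goes through as in the paper. A last, shared caveat: co-accessibility of the $\bot$-layer requires $R\neq\emptyset$ (if $R=\emptyset$, no final state is reachable from any $(s,\bot)$); this corner case is not excluded by the paper's statement either, so it does not count against you specifically.
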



%

\subsection{The~$0$-circuits of~$\CRpIb$ and of its minimisation}

The next statement gives a description of the~$0$-circuits of~$\CRpIb$ and follows from
Lemmas~\ref{l.0-circ} and~\ref{l.CRpIb-props}.
\begin{lemma}\label{l.0-circ-bis}
  A state~$s$ belongs to a~$0$-circuit of~$\CRpIb$ if and only if either
  \begin{enumerate}
    \renewcommand{\theenumi}{\alph{enumi}}
    \item the state $s$ is initial, or
    \item there exists~$x\in\ZZ[k]$ such that~$s=(x d, \bot)$.
  \end{enumerate}
\end{lemma}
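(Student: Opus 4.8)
The plan is to exploit the fact that $\CRpIb=\ARpb\oplus\BIb$ has componentwise transitions, so that the $0$-action on a pair factorises: $(s,t)\cdot 0^i=(s\cdot 0^i,\,t\cdot 0^i)$ for every $i$. Since $\CRpIb$ is complete, the $0$-action is a total function, and a state lies on a $0$-circuit precisely when some positive power of this function fixes it. I would therefore first establish that $(s,t)$ belongs to a $0$-circuit of $\CRpIb$ if and only if $s$ belongs to a $0$-circuit of $\ARpb$ and $t$ belongs to a $0$-circuit of $\BIb$. The forward implication is immediate from the factorisation; for the converse, if $s\cdot 0^{i}=s$ and $t\cdot 0^{i'}=t$, then $0^{\mathrm{lcm}(i,i')}$ fixes both coordinates simultaneously.

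Next I would identify the two factors separately. By Lemma~\ref{l.0-circ}, the states of $\ARpb$ lying on a $0$-circuit are exactly the multiples of $d$, that is, the states $xd$ with $x\in\ZZ[k]$. By Lemma~\ref{l.BIb-prop}, the only $0$-circuits of $\BIb$ are the self-loops on $0$ and on $\bot$, so the second coordinate must lie in $\{0,\bot\}$. Combining these two facts, the pairs satisfying the circuit condition are exactly those of the form $(xd,0)$ or $(xd,\bot)$ with $x\in\ZZ[k]$.

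The point that needs care — and the main obstacle — is that $\CRpIb$ is \emph{trim} (Lemma~\ref{l.CRpIb-props}), so I must check which of these candidate pairs actually occur as states. For the second coordinate $\bot$, Lemma~\ref{l.CRpIb-props} tells us that $\{(s,\bot)\mid s\text{ a state of }\ARpb\}$ is a non-trivial scc isomorphic to $\ARpb$; hence every pair $(xd,\bot)$ is present, which yields case~(b). For the second coordinate $0$, I would observe that in $\BIb$ the state $0$ can be reached only by reading a word in $0^*$, since $ib+a=0$ forces $i=a=0$; and reading zeros from the initial state $0$ of $\ARpb$ leaves its first coordinate equal to $0$. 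Consequently the only accessible state with second coordinate $0$ is the initial state $(0,0)$, which is exactly case~(a). This eliminates the spurious candidates $(xd,0)$ with $x\neq 0$ and delivers the stated characterisation.
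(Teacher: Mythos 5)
Your proof is correct. The paper gives no written proof of this lemma --- it only asserts that the statement ``follows from Lemmas~\ref{l.0-circ} and~\ref{l.CRpIb-props}'' --- and your argument is a valid way of filling in that one-liner, using exactly the ingredients the paper points to, plus Lemma~\ref{l.BIb-prop} for the $0$-circuits of~$\BIb$. The organisation differs slightly from the intended shortcut, which is presumably: every circuit lies inside an scc, the non-trivial sccs of~$\CRpIb$ are classified by Lemma~\ref{l.CRpIb-props} as $\{(0,0)\}$ and the $\bot$-scc, and inside the $\bot$-scc one transports Lemma~\ref{l.0-circ} through the first-component isomorphism with~$\ARpb$. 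You instead factorise the $0$-action componentwise (a pair is fixed by some positive power of the $0$-action if and only if each coordinate is, via the lcm of the two periods) and then discharge the spurious candidates $(xd,0)$ with $x\neq 0$ by an explicit reachability computation showing that the only accessible state with second coordinate~$0$ is the initial state $(0,0)$. That accessibility step is precisely what the scc classification of Lemma~\ref{l.CRpIb-props} silently encapsulates --- recall the paper's standing convention that all automata considered are accessible, so those pairs are simply not states of~$\CRpIb$ --- and you were right to single it out as the point needing care: on the full, untrimmed product of Definition~\ref{d.excl-disj} the lemma as stated would be false. Both routes establish the same fact; yours is more elementary and self-contained, while the paper's leans on the already-established structural lemma.
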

The relationship of~$\CRpIb$ with its minimisation is stated by the next lemma. It is similar to the one of~$\ARpb$ with its minimisation.
\begin{lemma}\label{l.not-nero-equi-bis}
  If~$(p,R,I)$ is proper, the following statements hold.
  \begin{enumerate}
    \renewcommand{\theenumi}{\alph{enumi}}

    \item
      For every distinct integers~$x,x'\in\ZZ[k]$, the
      states~$(x d,\bot)$ and~$(x' d,\bot)$ of~$\CRpIb$ are not Nerode-equivalent.

    \item
      The states of~$\CRpIb$ that belong to~$0$-circuits are pairwise
      Nerode-inequivalent.

    \item  \sublabel{l.CRpIb-init-not-Nero-equi}
      The initial state of~$\CRpIb$ is not Nerode-equivalent
      to any other state.
  \end{enumerate}
\end{lemma}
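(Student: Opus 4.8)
The plan is to treat the three items together, noting first that item (b) is an immediate consequence of (a) and (c). Indeed, by Lemma~\ref{l.0-circ-bis} the states on $0$-circuits of $\CRpIb$ are exactly the initial state $(0,0)$ together with the states $(xd,\bot)$, $x\in\ZZ[k]$; item (a) separates the latter among themselves, while item (c) separates $(0,0)$ from each of them. So I would concentrate on proving (a) and (c).

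For (a), I would reduce Nerode-equivalence in $\CRpIb$ to Nerode-equivalence in $\ARpb$. Since $\bot$ is absorbing in $\BIb$ and is not final, reading any word $u$ from a state $(s,\bot)$ keeps the run inside the second non-trivial strongly connected component isolated in Lemma~\ref{l.CRpIb-sccs}, reaching $(s\cdot u,\bot)$, which is final if and only if $s\cdot u\in R$. Because that component is isomorphic to $\ARpb$ by first projection and this isomorphism preserves finality, the states $(xd,\bot)$ and $(x'd,\bot)$ are Nerode-equivalent in $\CRpIb$ exactly when $xd$ and $x'd$ are Nerode-equivalent in $\ARpb$. It then remains to invoke Lemma~\ref{l.not-nero-equi}, for which I must first check that $(p,R)$ is proper; this follows from the properness of $(p,R,I)$, since by the uniqueness in Lemma~\ref{l.exis-uniq-I} the purely periodic part $R+p\N$ is determined by $S$, so the smallest period of the triple is precisely the smallest period of $R+p\N$.

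For (c), I would argue directly. Let $(s,t)$ be a state of $\CRpIb$ distinct from $(0,0)$. The key point is accessibility (Lemma~\ref{l.CRpIb-props}): the only words driving $\BIb$ into state $0$ are those of the form $0^i$, so any reachable state with second component $0$ has first component $0$; hence $t\neq 0$, that is, $t=\bot$ or $t\in\{1,\dots,m\}$. Now fix some $v_0\in I$ (possible since $I\neq\emptyset$) and pick $N$ large enough that $v_0+pN>m$. For a length $\ell$ large enough to write both values with leading zeros and to force the $\BIb$-component to $\bot$, take words $u,u'$ of length $\ell$ with $\val{u}=v_0$ and $\val{u'}=v_0+pN$. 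Since $t\neq 0$, reading either word from $(s,t)$ ends with second component $\bot$, so $(s,t)$ accepts $u$ iff $s\cdot u\in R$, and likewise for $u'$; as $\val{u}\equiv\val{u'}~[p]$ we get $s\cdot u\equiv s\cdot u'~[p]$, whence $(s,t)$ treats $u$ and $u'$ alike. From $(0,0)$, by contrast, $u$ and $u'$ are accepted iff their values lie in $S=(R+p\N)\oplus I$, and exactly one of $v_0,\,v_0+pN$ lies in $I$ while both share the same residue modulo $p$; hence $(0,0)$ treats $u$ and $u'$ differently. The two states therefore cannot be Nerode-equivalent.

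The main obstacle is item (c), and specifically the indispensable role of accessibility. Without trimming, the state $(\crl{0,t},0)$ with $t\neq 0$ could be Nerode-equivalent to $(0,0)$ whenever the initial state of $\ARpb$ is equivalent to $\crl{0,t}$ in $\ARpb$, which can genuinely occur for proper $(p,R)$; the statement would then be false. Recognising that accessibility forces $t\neq 0$, and that the nonempty mismatch set $I$ then supplies the separating pair $u,u'$, is the crux of the argument; the remaining verifications (that long words send the $\BIb$-component to $\bot$, and the residue bookkeeping modulo $p$) are routine.
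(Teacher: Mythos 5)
Your proof is correct, but it routes the logical dependencies differently from the paper. Item (a) is handled the same way in both texts (reduction to~$\ARpb$ through the $\bot$-scc of Lemma~\ref{l.CRpIb-sccs}, then Lemma~\ref{l.not-nero-equi}), though you make explicit two points the paper leaves implicit: that properness of~$(p,R,I)$ transfers to~$(p,R)$ via the uniqueness in Lemma~\ref{l.exis-uniq-I}, and that~$\CRpIb$ must be taken accessible. The divergence is in (b) and (c). The paper proves (b) \emph{first}: after (a), it rules out Nerode-equivalence between the initial state~$i$ and a state~$(xd,\bot)$ by a pumping argument --- from~$i\cdot v=(xd,\bot)$ with~$\wlen{v}\equiv 0~[\psi]$, iterating yields that~$i$ is Nerode-equivalent to~$i\cdot v^k0^j=(0,\bot)$, contradicting~$I\neq\emptyset$ --- and then derives (c) from (b) by noting that the Nerode class of the initial state is stable under reading~$0$, so it would otherwise contain a whole non-initial $0$-circuit. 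You instead prove (c) directly and in full strength: accessibility forces~$t\neq 0$ for any state~$(s,t)\neq(0,0)$, so sufficiently long words drive the second component to~$\bot$, and your explicit equal-length pair~$u,u'$ with values~$v_0\in I$ and~$v_0+pN>\max I$ is treated alike by~$(s,t)$ (both runs end at the same state of the $\bot$-scc, the values being congruent modulo~$p$) but differently by the initial state (by the exclusive-disjunction structure of acceptance); (b) is then a corollary of (a) and (c) via Lemma~\ref{l.0-circ-bis}. Your route avoids the order-$\psi$ pumping, uses only~$I\neq\emptyset$ rather than properness for the initial-state separation, and your closing remark is apt: in the untrimmed product a state~$(\crl{0,t},0)$ can genuinely be Nerode-equivalent to~$(0,0)$ --- for instance with~$b=2$,~$p=8$,~$R=\{0,1,4\}$ (which is proper), the states~$0$ and~$4$ of~$\ARpb$ are Nerode-equivalent, so~$(4,0)$ mimics the initial state --- hence the accessibility hypothesis you isolate is load-bearing, whereas the paper relies on it silently through Lemma~\ref{l.CRpIb-props}. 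The paper's ordering buys a two-line derivation of (c) once (b) is known; yours buys a constructive distinguishing witness and a cleaner dependency structure.
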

\begin{proof}
    Item~$(a)$ follows directly from Lemmas~\ref{l.not-nero-equi} and~\ref{l.CRpIb-sccs}.

    \medskip

    $(b)$.
    From item~$(a)$ and Lemma~\ref{l.0-circ-bis}, it suffices to show that
    there is no state~$s=(x d, \bot)$ which is Nerode-equivalent to the
    initial state.
    For the sake of contradiction let us assume that such a state exists.
    We denote by~$i$ the initial state of~$\CRpIb$.
    Let~$v$ be any word whose run reaches~$s$, hence satisfying~$i\cdot v=s$.
    Since~$i$ bears a loop labelled by~$0$, the run of~$0v$ reaches~$s$,  hence, without
    loss of generality, we may
    assume that~$\wlen{v}\equiv 0~[\psi]$.
    Since~$s$ and~$i$ are Nerode-equivalent, so are~$s\cdot v$
    and~$(i\cdot v)=s$.
    By iterating this reasoning, we obtain that~$i$ is Nerode-equivalent
    to~$(i\cdot v^k)$.
    Similarly,~$i$ is Nerode-equivalent
    to the state~$(i\cdot v^k 0^j)$, that we denote by~$s'$.
    Moreover, since~$s'$ is reachable from~$s$ and since~$s$ belongs to the~$\bot$-scc
    (complete from Lemma~\ref{l.CRpIb-sccs}),~$s'$ belongs to the~$\bot$-scc as well.
    Since~$\wlen{v}\equiv 0~[\psi]$, $\val{v^k}\equiv k\val{v}\equiv0~[k]$, hence $\val{v^k 0^j}\equiv0~[k]~$. Since~$\val{v^k 0^j}$ is obviously a multiple of~$d$, it is also a multiple of~$p=kd$.
    In other, words~$s'=(0, \bot)$ and the initial state is Nerode-equivalent to~$(0, \bot)$.
    This contradicts the fact that~$I$ is non-empty.

    \medskip

    $(c)$.
    Let us denote by~$X$ the Nerode-equivalence class of the initial state.
    Since the initial state bears a loop labeled by~$0$, the set~$X$ is stable by
    reading the digit~$0$.
    Therefore, if~$X$ were containing a non-initial state, then it would contain a whole
    $0$-circuit (distinct from the initial state), contradicting item~$(b)$.
\end{proof}
The next statement can then be established using Lemma~\ref{l.not-nero-equi-bis}$(b)$
much like Proposition~\ref{p.not-nero-equi} was shown using Lemma \ref{l.not-nero-equi}.
\begin{proposition}\label{p.not-nero-equi-bis}
  If~$(p,R,I)$ is proper, then two Nerode-equivalent
  states~$(s',t')$ and~$(s',t')$ are necessarily such that~$s$ and~$s'$ are congruent
  modulo~$k$.
\end{proposition}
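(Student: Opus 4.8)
The plan is to argue by contrapositive, mirroring the proof of Proposition~\ref{p.not-nero-equi} with Lemma~\ref{l.not-nero-equi} replaced by Lemma~\ref{l.not-nero-equi-bis}$(b)$ (the statement evidently intends two states $(s,t)$ and $(s',t')$). So I assume $s\not\equiv s'~[k]$ and I must show that $(s,t)$ and $(s',t')$ are not Nerode-equivalent. The driving idea is that a sufficiently long block of~$0$'s sends every state of~$\CRpIb$ into a~$0$-circuit, and that distinct~$0$-circuit states are not Nerode-equivalent by Lemma~\ref{l.not-nero-equi-bis}$(b)$; since Nerode-equivalence is preserved by reading a common word, it then suffices to map the two states onto distinct~$0$-circuit states.

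First I would fix an integer $r\geq j$ with $b^r>m$ and read~$0^r$ from both states. On the first component, reading~$0^r$ behaves as in~$\ARpb$: by the definition of~$j$ it lands on a multiple of~$d$, and since $x\mapsto xb$ is a permutation of~$\ZZ[k]$ (Lemma~\ref{l.A?kb-grou}), the two first components stay incongruent modulo~$k$; writing them as $ld$ and $l'd$ with $l,l'\in\ZZ[k]$, the coprimality of~$k$ and~$d$ forces $l\neq l'$. On the second component I would record the behaviour of~$\BIb$: a word~$u$ reaches state~$0$ only when $\val{u}=0$, so the sole reachable state of~$\CRpIb$ with~$\BIb$-component~$0$ is the initial state $(0,0)$; from any nonzero~$\BIb$-state the accumulated value exceeds~$m$ within~$r$ zeroes and the run falls into the sink~$\bot$. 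Hence each image of~$0^r$ is a~$0$-circuit state in the sense of Lemma~\ref{l.0-circ-bis}: it is $(0,0)$ if the original~$\BIb$-component was~$0$, and of the form $(ld,\bot)$ otherwise.

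Next I would check that the two image states are distinct. They cannot both equal $(0,0)$, as that would force $s\equiv s'\equiv 0~[k]$, contradicting the hypothesis. In every other configuration they differ: either both are of type $(ld,\bot)$ with $l\neq l'$, so that their first components $ld\neq l'd$ differ, or one is the initial state and the other is not. Lemma~\ref{l.not-nero-equi-bis}$(b)$ then gives that the two images are not Nerode-equivalent. Finally, since $(s,t)$ and $(s',t')$ being Nerode-equivalent would imply $(s,t)\cdot 0^r$ and $(s',t')\cdot 0^r$ Nerode-equivalent, the originals are not Nerode-equivalent either, which closes the contrapositive.

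The main obstacle is the bookkeeping on the second component, which has no counterpart in Proposition~\ref{p.not-nero-equi}: one must be sure that a~$\BIb$-component equal to~$0$ pins down the initial state, and that every other~$\BIb$-component is flushed to~$\bot$ by a long enough~$0$-block, so that reading~$0^r$ genuinely reaches a~$0$-circuit in all cases. Once this is secured, the first-component reasoning is verbatim that of Proposition~\ref{p.not-nero-equi} and the conclusion follows at once from Lemma~\ref{l.not-nero-equi-bis}$(b)$.
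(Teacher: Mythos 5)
Your proof is correct and is precisely the argument the paper intends: the paper only sketches this proposition, saying it follows from Lemma~\ref{l.not-nero-equi-bis}$(b)$ ``much like Proposition~\ref{p.not-nero-equi} was shown using Lemma~\ref{l.not-nero-equi}'', i.e., by contrapositive, reading a long block of $0$'s to land on two distinct $0$-circuit states and invoking their pairwise Nerode-inequivalence. Your extra bookkeeping on the $\BIb$-component---choosing $r\geq j$ with $b^r>m$ so that every non-initial second component is flushed to $\bot$, and observing that a $\BIb$-component equal to $0$ forces the (reachable) state to be the initial one---correctly supplies exactly the details the paper leaves implicit, so nothing is missing.
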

%


%

%
It follows from Lemma~\ref{l.0-circ-bis} that~$\CRpIb$ has~$(k+1)$
states that belong to~$0$-circuits and from
Lemma~\ref{l.not-nero-equi-bis}(b) that such states are not merged by the
minimisation process, hence the next proposition holds.

\begin{proposition}\label{p.mini-k-stat-bis}
  If~$(p,R,I)$ is proper, the 0-circuits of the minimisation of~$\CRpIb$
  have a total of~$(k+1)$ states
\end{proposition}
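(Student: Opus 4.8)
The plan is to relate the $0$-circuit states of $\CRpIb$ to those of its minimisation $\Mc$ through the minimisation morphism $\phi:\CRpIb\to\Mc$ supplied by the Myhill--Nerode theorem, and to count carefully on both sides. First I would determine the $0$-circuit states of $\CRpIb$ directly: by Lemma~\ref{l.0-circ-bis} each such state is either the initial state $(0,0)$ or one of the $k$ states $(xd,\bot)$ with $x\in\ZZ[k]$, and these $k+1$ states are pairwise distinct. So $\CRpIb$ has exactly $k+1$ states lying on $0$-circuits.

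I would then push these forward through $\phi$. Because a morphism preserves transitions (condition~(\ref{eq.defi-morp-tran})) and our automata are complete, $\phi$ commutes with reading the letter $0$, i.e.\ $\phi(s\cdot 0)=\phi(s)\cdot 0$. Consequently the $\phi$-image of a $0$-circuit is again a $0$-circuit of $\Mc$, so each of the $k+1$ states above maps to a $0$-circuit state of $\Mc$. These images are moreover pairwise distinct: by Lemma~\ref{l.not-nero-equi-bis}(b) the $k+1$ states are pairwise Nerode-inequivalent, and $\phi$ merges two states exactly when they are Nerode-equivalent. Hence $\Mc$ has at least $k+1$ states on $0$-circuits.

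The remaining point --- that $\Mc$ has \emph{no further} $0$-circuit states --- is the only delicate one, since a quotient could a priori create new $0$-circuits by collapsing a tail onto a cycle. To rule this out I would use that in a complete deterministic automaton the set of $0$-circuit states equals the eventual image of the map $s\mapsto(s\cdot 0)$, that is, the value of $\{\,s\cdot 0^N\mid s\in Q\,\}$ once $N$ is large enough for this set to stabilise. Choosing $N$ large enough to stabilise this image simultaneously in $\CRpIb$ and in $\Mc$, the surjectivity of $\phi$ together with the commutation $\phi(s\cdot 0)=\phi(s)\cdot 0$ gives $\phi(\{\,s\cdot 0^N\mid s\in Q_{\CRpIb}\,\})=\{\,\bar s\cdot 0^N\mid \bar s\in Q_\Mc\,\}$; that is, $\phi$ maps the $0$-circuit states of $\CRpIb$ \emph{onto} those of $\Mc$.

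The main obstacle is precisely securing this last equality rather than a mere inclusion: the forward inclusion and the injectivity are immediate from the morphism axioms and Lemma~\ref{l.not-nero-equi-bis}(b), whereas excluding spurious $0$-circuits in the minimisation rests on the finite stabilisation of the iterated image, which is exactly what upgrades $\phi(\bigcap_N\cdots)\subseteq\bigcap_N\phi(\cdots)$ to an equality. Combining this surjectivity with the injectivity from the second paragraph, the $0$-circuit states of $\Mc$ are in bijection with those of $\CRpIb$, of which there are $k+1$.
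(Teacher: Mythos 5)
Your proof is correct and takes essentially the same route as the paper, which likewise counts the $k+1$ states on $0$-circuits of $\CRpIb$ via Lemma~\ref{l.0-circ-bis} and invokes Lemma~\ref{l.not-nero-equi-bis}(b) to conclude that the minimisation morphism does not merge them. Your eventual-image argument for the map $s\mapsto(s\cdot 0)$ additionally makes explicit the upper bound---that the quotient creates no new $0$-circuit states---a point the paper leaves implicit, and it is a valid way to close that gap.
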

%

\subsection{Ultimate equivalence class of~$\CRpIb$}

\begin{lemma}\label{l.equiv-k-ulti-equiv-bis}
  Let~$(s,t)$ and~$(s',t')$ be two states of~$\CRpIb$ such that~$s\equiv s'[k]$.
  If neither~$(s,t)$ nor~$(s',t')$ is the initial state,
  then~$(s,t)$ and~$(s',t')$ are ultimately-equivalent.
\end{lemma}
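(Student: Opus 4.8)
The plan is to treat the two coordinates of $\CRpIb=\ARpb\oplus\BIb$ separately and to show that, for every sufficiently long word $u$, both coordinates of $((s,t)\cdot u)$ and $((s',t')\cdot u)$ coincide, so that the two states are $M$-ultimately-equivalent for a suitable $M$.

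First I would reformulate the hypothesis on the initial state in terms of the $\BIb$-coordinate. The key observation is that the only reachable state of $\CRpIb$ whose second coordinate is $0$ is the initial state $(0,0)$: in $\BIb$ one has $(0\cdot u)=0$ exactly when $\val{u}=0$, that is when $u\in 0^*$, and reading such a word from $(0,0)$ keeps the first coordinate at $0$ as well, since $0\cdot 0^{\wlen{u}}=0$ in $\ARpb$. As $\CRpIb$ is trim (Lemma~\ref{l.CRpIb-props}), every state is reachable, so the assumption that neither $(s,t)$ nor $(s',t')$ is the initial state is equivalent to $t\neq 0$ and $t'\neq 0$, i.e.\ $t,t'\in\{1,\ldots,m\}\cup\{\bot\}$.

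Next I would control the $\BIb$-coordinate for long words. Let $\ell_0$ be any integer with $b^{\ell_0}>m$. From a state $i\in\{1,\ldots,m\}$, the run of any word $u$ of length at least $\ell_0$ reaches a value at least $i\,b^{\ell_0}\ge b^{\ell_0}>m$ already after its first $\ell_0$ letters, so it must have fallen into the sink $\bot$; and $\bot$ is absorbing. Hence for every $u$ with $\wlen{u}\ge\ell_0$ we have $(t\cdot u)=\bot=(t'\cdot u)$ in $\BIb$, so the second coordinates automatically agree.

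It then remains to treat the $\ARpb$-coordinate, which is precisely where the hypothesis $s\equiv s'[k]$ is used. By Lemma~\ref{l.equiv-k-ulti-equiv}, two states of $\ARpb$ congruent modulo $k$ are $j$-ultimately-equivalent, so $(s\cdot u)=(s'\cdot u)$ in $\ARpb$ whenever $\wlen{u}\ge j$. Setting $M=\max(\ell_0,j)$ and combining the two coordinate-wise equalities through Definition~\ref{d.excl-disj}, I would conclude that $((s,t)\cdot u)=((s',t')\cdot u)$ for every word $u$ with $\wlen{u}\ge M$, which is exactly the statement that $(s,t)$ and $(s',t')$ are $M$-ultimately-equivalent. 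I expect the only delicate point to be the first step, the identification of the non-initial states with those whose $\BIb$-coordinate is nonzero, since everything afterwards is a routine threshold estimate combined with the already-established Lemma~\ref{l.equiv-k-ulti-equiv}.
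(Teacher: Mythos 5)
Your proof is correct and takes essentially the same route as the paper: both arguments let the $\BIb$-coordinate collapse into the absorbing state~$\bot$ for all sufficiently long words (your explicit threshold $b^{\ell_0}>m$, together with the observation that non-initial reachable states have nonzero $\BIb$-coordinate, makes precise a bound the paper merely asserts) and then settle the $\ARpb$-coordinate using $s\equiv s'~[k]$ and $d\mid b^j$. The only cosmetic difference is that you invoke Lemma~\ref{l.equiv-k-ulti-equiv} for the first coordinate, whereas the paper re-derives inline the same congruence-plus-Chinese-remainder computation that underlies that lemma's proof.
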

\begin{proof}
  Since by hypothesis,~$(s,t)$ and~$(s',t')$ are not initial, there exists a bound~$m$
  such that for every word~$u$ longer that~$m$,
  the states~$(s,t)\cdot u$ and~$(s,t')\cdot u$ belong to the~$\bot$-scc.
  Without loss of generality, we may assume that~$m\geq j$.
  Let~$u$ be a word longer than~$m$.
  Then,
  \begin{equation*}
      (s,t) \cdot u = (sb^{\wlen{u}}+\val{u},\bot) \text{ and } (s',t') \cdot u = (s'b^{\wlen{u}}+\val{u},\bot)~.
  \end{equation*}
  Since~$s$ and~$s'$ are congruent modulo~$k$, and since~$k$ is coprime with b, it holds:
  \begin{equation}\label{eq.equiv-k-ulti-equiv-1}
    sb^{\wlen{u}}+\val{u}\equiv s'b^{\wlen{u}}+\val{u}~[k]~.
  \end{equation}
  Moreover, since~$\wlen{u} \geq j$,~ and since~$d$ divises~$b^j$,
  \begin{equation}\label{eq.equiv-k-ulti-equiv-2}
    sb^{\wlen{u}}+\val{u}\equiv \val{u} \equiv s'b^{\wlen{u}}+\val{u}~[d]~.
  \end{equation}
  Finally, since~$d$ and~$k$ are coprime, (\ref{eq.equiv-k-ulti-equiv-1}) and (\ref{eq.equiv-k-ulti-equiv-2}) yield
  \begin{equation*}
      sb^{\wlen{u}}+\val{u}\equiv s'b^{\wlen{u}}+\val{u}~[p]~,
  \end{equation*}
  hence~$(s,t) \cdot u = (s',t') \cdot u$.
\end{proof}

\begin{lemma}
  The initial state of~$\CRpIb$ is not ultimately equivalent to any other state.
\end{lemma}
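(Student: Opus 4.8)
The plan is to separate the initial state $(0,0)$ from every other state by reading a long block of $0$'s, and to control the whole argument through the $\BIb$-component of the runs. Concretely, for each threshold $m'$ I will exhibit one word of length at least $m'$ — namely $0^\ell$ for a suitable $\ell\ge m'$ — on which the runs from $(0,0)$ and from any given other state reach distinct states.

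The first and decisive step is to observe that the initial state $(0,0)$ is the only accessible state of $\CRpIb$ whose $\BIb$-component is $0$. By the transition rules of $\BIb$, the only transition entering state $0$ is the self-loop $0\rpath{0}0$ (indeed $ib+a=0$ forces $i=a=0$), so the run of a word $u$ from $0$ in $\BIb$ reaches $0$ if and only if $u\in 0^*$, i.e.\@ $\val{u}=0$; and reading a word of $0^*$ keeps the $\ARpb$-component at $0$ as well, since $0\rpath{0}0$ in $\ARpb$. Hence any accessible state $(s,t)\neq(0,0)$ necessarily has $t\in\{1,\dots,m\}\cup\{\bot\}$, in particular $t\neq 0$.

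Next I fix such a state $(s,t)\neq(0,0)$, so that $t\neq 0$, and compare the runs of $0^\ell$. The self-loop on the initial state gives $(0,0)\cdot 0^\ell=(0,0)$, whose $\BIb$-component is $0$. On the other side, the $\BIb$-component of $(s,t)\cdot 0^\ell$ is the state reached by $0^\ell$ from $t$ in $\BIb$: it is $\bot$ if $t=\bot$, and it equals $t\,b^\ell$ while $t\,b^\ell\le m$ and $\bot$ thereafter if $t\in\{1,\dots,m\}$. Since $t\ge 1$ and $b\ge 2$, we have $t\,b^\ell>m$ as soon as $b^\ell>m$; thus in all cases the $\BIb$-component of $(s,t)\cdot 0^\ell$ is $\bot\neq 0$ once $\ell>\log_b m$. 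Consequently $(s,t)\cdot 0^\ell\neq(0,0)\cdot 0^\ell$ for every such $\ell$, and taking $\ell=\max(m',\lceil\log_b m\rceil+1)$ yields a separating word of length at least $m'$. As $m'$ was arbitrary, $(0,0)$ and $(s,t)$ are not $m'$-ultimately-equivalent for any $m'$, hence not ultimately-equivalent.

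The only genuinely delicate point is the key fact of the second paragraph, that $(0,0)$ is the \emph{unique accessible} state with null $\BIb$-component; this is what forces $t\neq 0$ and thereby guarantees that reading $0$'s drives the $\BIb$-component irreversibly to $\bot$. I expect this to be the main obstacle precisely because accessibility is essential here: the non-accessible states $(s,0)$ with $s\equiv 0~[k]$ and $s\neq 0$ are, on the contrary, $j$-ultimately-equivalent to the initial state (both runs keep $\BIb$ at $0$, and their $\ARpb$-components differ by $s\,b^{\wlen{u}}\equiv 0~[p]$ for $\wlen{u}\ge j$, using $s\equiv 0~[k]$ and $d\mid b^j$). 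So the statement is true only once we restrict, as we do throughout, to the accessible part of $\CRpIb$.
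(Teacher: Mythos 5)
Your proof is correct and takes essentially the same route as the paper's: the paper's one-line argument observes that the initial state of~$\CRpIb$ is reachable only from itself and then uses the words of~$0^*$ as separating witnesses, and your analysis of the $\BIb$-component (that $(0,0)$ is the unique accessible state with component~$0$, and that reading zeroes drives any other state's component irreversibly to~$\bot$) is exactly a detailed justification of that observation. Your closing remark on why accessibility is essential makes explicit a point the paper leaves implicit, but it does not change the approach.
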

\begin{proof}
  The only state from whom the initial state may be reached is the initial state itself.
  Moreover, as the initial state bears a loop labelled by~$0$, the words of~$0^*$
  are witnesses of the fact that no state is ultimately equivalent to the initial state.
%
%
%
\end{proof}

\section{Characterisation of automata accepting impurely periodic sets}

\begin{theorem}\label{t.ip}
  Let~$b>1$ be a base and~$\Ac$ be a minimal automaton over~$\intalph$.
  We write~$(\ell+1)$ for the number of states in~$\Ac$ that belong to~$0$-circuits.
  The automaton~$\Ac$ accepts by value an \textbf{im}purely periodic set of integers if and only if
  the following conditions are met.
  \begin{enumerate}
    \renewcommand{\theenumi}{\alph{enumi}}
    \item \sublabel{t.ip-pseu-morp}
      There exists a pseudo-morphism~$\phi:\Ac\rightarrow \ARpb[?][\ell]$.
    \item \sublabel{t.ip-ulti-equi}
      The initial state excluded, the equivalence relation induced by~$\phi$ is a refinement
      of the ultimate-equivalence relation.
    \item
      The initial state bears a self-loop labelled by the digit~$0$ and features no
      other incoming transitions.
    \sublabel{t.ip-init}
  \end{enumerate}
\end{theorem}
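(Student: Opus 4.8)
The plan is to mirror the two-directional proof of Theorem~\ref{t.pp}, systematically replacing every structural fact about~$\ARpb$ by its counterpart for~$\CRpIb=\ARpb\oplus\BIb$. The pivotal identity to establish in both directions is~$\ell=k$, the greatest divisor of the period coprime with~$b$: since Proposition~\ref{p.mini-k-stat-bis} says the minimisation of~$\CRpIb$ carries exactly~$(k+1)$ states on~$0$-circuits, whenever~$\Ac$ is (isomorphic to) such a minimisation the quantity~$(\ell+1)$ of the statement equals~$(k+1)$. Consequently the target of the pseudo-morphism of condition~\ref{t.ip-pseu-morp} is~$\ARpb[?][k]$, exactly as in the purely periodic case.

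For the forward direction I would start from a \emph{proper} parameter~$(p,R,I)$ of the impurely periodic set accepted by~$\Ac$, so that~$I\neq\emptyset$ and~$\Ac$ is isomorphic to the minimisation of~$\CRpIb$. Condition~\ref{t.ip-pseu-morp} comes from Lemma~\ref{l.cara-pseu-morp}: if~$(\ARpb[?][k]\cdot u)\neq(\ARpb[?][k]\cdot u')$ then~$\val{u}\not\equiv\val{u'}~[k]$, so the first ($\ARpb$-)components of the states reached by~$u,u'$ in~$\CRpIb$ are incongruent modulo~$k$, hence these states are not Nerode-equivalent by the contrapositive of Proposition~\ref{p.not-nero-equi-bis}, giving~$(\Ac\cdot u)\neq(\Ac\cdot u')$. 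For condition~\ref{t.ip-init} I would observe that in the trim automaton~$\CRpIb$ the state~$(0,0)$ is reached only by words of~$0^*$ (the~$\BIb$-component returns to~$0$ only along~$0$'s, and that already forces the~$\ARpb$-component to be~$0$), so its unique incoming transition is the self-loop~$(0,0)\rpath{0}(0,0)$; since~$(0,0)$ is the only preimage of~$i_\Ac$ under the minimisation morphism (Lemma~\ref{l.not-nero-equi-bis}, item~\ref{l.CRpIb-init-not-Nero-equi*}), lifting the incoming transitions of~$i_\Ac$ shows~$i_\Ac$ bears only that self-loop. Finally, condition~\ref{t.ip-ulti-equi} follows as in Theorem~\ref{t.pp}: for non-initial~$s,s'$ with~$\phi(s)=\phi(s')$ choose words~$u,u'$ reaching them; their lifts in~$\CRpIb$ are non-initial (the singleton preimage of~$i_\Ac$) with first components congruent modulo~$k$, so Lemma~\ref{l.equiv-k-ulti-equiv-bis} makes them ultimately-equivalent and Lemma~\ref{l.ulti-equi-comm-morp} transports this through the minimisation morphism to~$s,s'$.

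For the backward direction I would first reproduce the periodicity argument of Theorem~\ref{t.pp}, taking for~$m$ the maximal ultimate-equivalence bound over the \emph{non-initial}~$\phi$-classes. Factoring~$u=vw$, $u'=v'w$ with~$\wlen{w}=m$, congruence of values modulo~$\ell b^m$ yields~$\val{v}\equiv\val{v'}~[\ell]$, hence~$\phi(\Ac\cdot v)=\phi(\Ac\cdot v')$ by condition~\ref{t.ip-pseu-morp}. This is exactly where condition~\ref{t.ip-init} intervenes: it guarantees that~$i_\Ac$ is reached only by words of~$0^*$, so as soon as~$\val{u},\val{u'}\geq b^m$ the prefixes~$v,v'$ carry a nonzero digit and land on non-initial states, to which condition~\ref{t.ip-ulti-equi} applies and forces~$(\Ac\cdot u)=(\Ac\cdot u')$. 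Thus~$\val{L(\Ac)}$ is eventually periodic and admits a proper parameter~$(p,R,I)$; it then remains to prove~$I\neq\emptyset$, i.e.\ by Lemma~\ref{l.ip<->I-empt} that the set is not purely periodic.

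This last impurity step is the genuinely new ingredient and the main obstacle. I would argue by contradiction: were~$\val{L(\Ac)}$ purely periodic with proper period~$p'$, then~$\Ac$ would be the minimisation of some~$\ARpb[R'][p']$. If~$p'=1$ the minimal automaton is a single state bearing a self-loop for \emph{every} digit, so the nonzero self-loops already violate condition~\ref{t.ip-init}. If~$p'>1$, then~$\ARpb[R'][p']$ is strongly connected (Property~\ref{pp.sc}) with at least two states, a property inherited by its quotient~$\Ac$; strong connectivity of an automaton with at least two states forces~$i_\Ac$ to receive an incoming transition from some \emph{other} state, again contradicting condition~\ref{t.ip-init}. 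Hence~$I\neq\emptyset$ and the accepted set is impurely periodic. The recurring care needed to isolate the initial state — in the derivation of condition~\ref{t.ip-init}, in the periodicity argument, and in this final contradiction — is precisely what makes the impure case subtler than the pure one.
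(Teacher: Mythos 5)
Your proof is correct, and for the forward direction and the periodicity part of the backward direction it coincides with the paper's own argument: the paper likewise derives conditions \ref{t.ip-pseu-morp*} and \ref{t.ip-ulti-equi*} by replaying Theorem~\ref{t.pp} with Propositions~\ref{p.mini-k-stat-bis} and~\ref{p.not-nero-equi-bis} and Lemmas~\ref{l.equiv-k-ulti-equiv-bis} and~\ref{l.ulti-equi-comm-morp}, obtains condition~\ref{t.ip-init*} from the verification that~$\CRpIb$ satisfies it (your explicit analysis of the incoming transitions of~$(0,0)$ is exactly the paper's ``simple verification'') together with Lemma~\ref{l.CRpIb-init-not-Nero-equi}, which guarantees that minimisation preserves the incoming transitions of the initial state, and runs the same factorisation~$u=vw$,~$u'=v'w$ with~$\val{u},\val{u'}\geq b^m$ so that~$v,v'$ carry a nonzero digit and avoid the initial state.

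The one place where you genuinely diverge is the impurity step,~$I\neq\emptyset$. The paper proves it directly: it sets~$s=\Ac\cdot \rep{\ell b^m}$, notes that~$s\neq i_\Ac$ by condition~\ref{t.ip-init*} since~$\ell\geq 1$, uses minimality of~$\Ac$ to find a word~$w$ separating~$s$ from~$i_\Ac$, and concludes that~$\val{w}$ is a mismatch because~$\val{w}$ and~$\val{\rep{\ell b^m}w}$ are congruent modulo~$\ell b^m$. You instead argue by contradiction: were the accepted set purely periodic with proper period~$p'$, then for~$p'=1$ the minimal automaton is a single state whose self-loops on nonzero digits violate condition~\ref{t.ip-init*}, while for~$p'>1$ the automaton~$\Ac$, as the minimisation of~$\ARpb[R'][p']$, inherits strong connectivity from Property~\ref{pp.sc} (morphisms preserve paths and the minimisation morphism is surjective) and has at least two states (otherwise the set would be~$\emptyset$ or~$\N$, of proper period~$1$), forcing an incoming transition to~$i_\Ac$ from another state, again contradicting~\ref{t.ip-init*}. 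Both arguments are sound. The paper's is shorter and constructive, exhibiting an explicit element of~$I$ and giving it for free inside the already-established period~$\ell b^m$; yours is purely structural, reuses Property~\ref{pp.sc}, and makes visible that condition~\ref{t.ip-init*} by itself already excludes every purely periodic set, at the modest cost of a case split on~$p'$ and of checking that strong connectivity and the two-state lower bound pass to the quotient. (One cosmetic point: your bound~$m$ should be taken over the~$\phi$-classes \emph{with the initial state removed} — the class~$\phi^{-1}(0)$ may contain~$i_\Ac$ alongside other states — but this is clearly what you intend and does not affect the argument.)
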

\begin{proof}[Proof of forward direction]
  Conditions \ref{t.ip-pseu-morp*} and \ref{t.ip-ulti-equi*} are obtained
  much like it was done in Theorem~\ref{t.pp}.
  We simply apply Propositions~\ref{p.mini-k-stat-bis} and \ref{p.not-nero-equi-bis}
  instead of~\ref{p.mini-k-stat} and \ref{p.not-nero-equi}.
  Since~$\Ac$ is minimal and accepts by value an impurely periodic set, there exists a parameter~$(p,R,I)$
  such that~$\Ac$ is the minimisation of~$\CRpIb$.
  A simple verification yields that Condition~\ref{t.ip-init*} is satisfied by~$\CRpIb$.
  Besides, it follows from Lemma~\ref{l.CRpIb-init-not-Nero-equi} that
  the minimisation process does not merge any state of~$\CRpIb$ with the initial state.
  As a result, the incoming transitions to the initial state are the same in~$\Ac$ and~$\CRpIb$.
\end{proof}

\begin{proof}[Proof of backward direction]
  There are finitely many ultimate-equivalence classes.
Hence there exists an integer $m$ such that, if two states $s$ and $s'$ are ultimately equivalent, then they are $m$-ultimately-equivalent.

  Note also that since~$\Ac$ is complete, Condition~\ref{t.ip-init*} implies that~$\ell\geq 1$.
  Let~$u,u'$ be two words whose respective values are congruent modulo~$\ell b^m$
  and greater than~$b^m$.
  Thus, there are words~$v,v',w,w'$, $\wlen{w}=\wlen{w'}=m$, satisfying~$u=vw$,~$u'=v'w'$
  and such that~$v,v'$ both possess a non-zero digit.
  In particular, neither~$\Ac\cdot v$ nor~$\Ac\cdot v'$ is the initial state.
  With exactly the same proof as was given in Theorem~\ref{t.pp}, it may then be shown
  that~$\Ac\cdot u = \Ac \cdot u'$.
  In other words,~$\Ac$ accepts an ultimately periodic set of integers~$S$ of period~$\ell b^m$.
  (In general, this period is not the smallest one, which would be~$\ell d$
  for some~$d$ dividing~$b^m$.)
  We moreover write~$I$ the set of mismatches (existence and unicity ensured by
  Lemma~\ref{l.exis-uniq-I}).
  Let us show that it is not purely periodic, or equivalently that~$I$ is not empty
  (Lemma~\ref{l.ip<->I-empt}).
  We denote by~$s$ the state reached by the run of the word~$\rep{\ell b^m}$, \emph{i.e.},~$s=\Ac\cdot \rep{\ell b^m}$.
  Since~$\ell \geq 1$, this word possesses a non-zero digit, hence~$s$ is not the initial state
  of~$\Ac$ (Condition \ref{t.ip-init*}).
  Since~$\Ac$ is minimal,~$s$ and~$i_\Ac$ are not Nerode-equivalent. Hence there exists
  a word~$w$ such that exactly one of the states in~$\{s\cdot w,~i_\Ac\cdot w\}$ is final.
  Since~$\val{w}$ and~$\val{\rep{\ell b^m}w}$ are obviously congruent modulo~$\ell b^m$,~$\val{w}$
  is a mismatch: it belongs to~$I$.
\end{proof}

As stated below, Theorem~\ref{t.ip} gives an algorithm to decide whether an automaton accepts an ultimately
periodic set of integers.
It is the same as the one from Section~\ref{s.comp-algo} with an
additional Step (5) at the end.
It consists in verifying that
Condition~\ref{t.ip-init} holds.

\begin{corollary}\label{c.ip}
  Let~$b$ be a base and~$\Ac$ be a $n$-state deterministic automaton over~$\intalph$.
  It is decidable in~$O(b n \log n)$ time whether~$\Ac$ accepts by value an impurely periodic set of integers.
\end{corollary}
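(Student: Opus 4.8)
The plan is to establish the $O(b\,n\log n)$ time bound by reducing the impurely-periodic decision problem to the same algorithmic primitives already analysed in Section~\ref{s.comp-algo}, and then to account for the one extra verification introduced by Condition~\ref{t.ip-init*}. The correctness of the decision procedure is exactly the content of Theorem~\ref{t.ip}, so the work here is purely a complexity analysis. First I would invoke Theorem~\ref{t.ip} to reduce the problem to checking its three conditions \ref{t.ip-pseu-morp*}, \ref{t.ip-ulti-equi*} and \ref{t.ip-init*} against the minimisation of the input automaton~$\Ac$. This mirrors the situation for Corollary~\ref{c.pp}, where Theorem~\ref{t.pp} played the analogous role; the essential difference is that we now compute $(\ell+1)$ as the number of states on $0$-circuits (rather than~$\ell$) and that the ultimate-equivalence refinement in \ref{t.ip-ulti-equi*} is required only after excluding the initial state.

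Next I would walk through the steps exactly as in Section~\ref{s.comp-algo}. Minimising~$\Ac$ and making it complete costs $O(b\,n\log n)$ by a standard Hopcroft-style minimisation; counting the states lying on $0$-circuits and extracting the integer~$\ell$ from the count $(\ell+1)$ is an $O(bn)$ graph traversal; constructing~$\ARpb[?][\ell]$ and then building, if it exists, the pseudo-morphism~$\phi:\Ac\rightarrow\ARpb[?][\ell]$ both run in $O(bn)$ time, since a (pseudo-)morphism between deterministic automata is determined by a single traversal of the larger automaton. The ultimate-equivalence classes of~$\Ac$ are computed in $O(b\,n\log n)$ by Proposition~\ref{p.ue-com}, so verifying that $\phi$ refines ultimate-equivalence---now only on the non-initial states---is absorbed into that same bound. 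The dominant terms are the minimisation and the ultimate-equivalence computation, both $O(b\,n\log n)$.

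Finally, the additional Step~$(5)$ that distinguishes the impure case from Corollary~\ref{c.pp} is the verification of Condition~\ref{t.ip-init*}: that the initial state carries a self-loop labelled~$0$ and has no other incoming transition. This is a local check performed by scanning, for each state and each of the~$b$ letters, whether its target is the initial state; it therefore runs in $O(bn)$ time and does not affect the asymptotic total. Summing all steps yields the claimed $O(b\,n\log n)$ bound. I do not expect any genuine obstacle here, since every subroutine has already been costed in Section~\ref{s.comp-algo} and the only new ingredient, the initial-state check of Condition~\ref{t.ip-init*}, is clearly sub-dominant; the one point warranting care is simply confirming that restricting the refinement test in \ref{t.ip-ulti-equi*} to non-initial states does not require recomputing the ultimate-equivalence relation and so stays within the single $O(b\,n\log n)$ call to Proposition~\ref{p.ue-com}.
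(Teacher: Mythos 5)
Your proposal is correct and follows essentially the same route as the paper: the paper likewise derives Corollary~\ref{c.ip} by running the algorithm of Section~\ref{s.comp-algo} (justified by Theorem~\ref{t.ip} in place of Theorem~\ref{t.pp}) with one additional final step checking Condition~\ref{t.ip-init*}, which is a local $O(bn)$ scan dominated by the $O(bn\log n)$ minimisation and ultimate-equivalence computations. Your added observation that excluding the initial state from the refinement test of Condition~\ref{t.ip-ulti-equi*} requires no recomputation is a harmless elaboration of what the paper leaves implicit.
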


Since an eventually periodic set is either purely or impurely periodic, Theorem
\ref{t.main} is a direct consequence of Corollaries~\ref{c.pp} and~\ref{c.ip}.
%

  \small
  \bibliographystyle{plainurl}
  \bibliography{bibliography}

\begin{thebibliography}{10}

\bibitem{AlloucheRampersadShallit2009}
Jean-Paul Allouche, Narad Rampersad, and Jeffrey Shallit.
\newblock Periodicity, repetitions, and orbits of an automatic sequence.
\newblock {\em Theoret. Comput. Sci}, 410:2795--2803, 2009.

\bibitem{AlloucheShallit2003}
Jean-Paul Allouche and Jeffrey Shallit.
\newblock {\em {Automatic Sequences: Theory, Applications, Generalizations}}.
\newblock Cambridge University Press, 2003.

\bibitem{Beal&Crochemore2007}
Marie-Pierre B{\'e}al and Maxime Crochemore.
\newblock Minimizing local automata.
\newblock In M.~Fossorier G.~Caire, editor, {\em IEEE International Symposium
  on Information Theory}, pages 1376--1380, 2007.

\bibitem{BellCharlierFraenkelRigo2009}
Jason Bell, Emilie Charlier, Aviezri~S. Fraenkel, and Michel Rigo.
\newblock A decision problem for ultimately periodic sets in nonstandard
  numeration systems.
\newblock {\em {IJAC}}, 19(6):809--839, 2009.

\bibitem{BertRigo10-b}
Val{\'{e}}rie Berth{\'{e}} and Michel Rigo, editors.
\newblock {\em Combinatorics, Automata and Number Theory}.
\newblock Number 135 in Encyclopedia Math. Appl. Cambridge University Press,
  2010.

\bibitem{BoigelotBrusten2009}
Bernard Boigelot and Julien Brusten.
\newblock A generalization of {C}obham's theorem to automata over real numbers.
\newblock {\em Theor. Comput. Sci.}, 410(18):1694--1703, 2009.

\bibitem{BoigelotJodogneWolper2005}
Bernard Boigelot, S\'ebastien Jodogne, and Pierre Wolper.
\newblock An effective decision procedure for linear arithmetic over the
  integers and reals.
\newblock {\em ACM Trans. Comput. Log.}, 6(3):614--633, 2005.

\bibitem{BruyereHansel1995}
V.~Bruy{\`e}re and G.~Hansel.
\newblock Recognizable sets of numbers in nonstandard bases.
\newblock In R.~Baeza-Yates, E.~Goles, and P.~V. Poblete, editors, {\em LATIN
  '95: Theoretical Informatics}, volume 911 of {\em Lect. Notes Comput. Sci.},
  pages 167--179. Springer, 1995.

\bibitem{BruyereHanselMichauxVillemaire}
V.~Bruy{\`e}re, G.~Hansel, C.~Michaux, and R.~Villemaire.
\newblock Logic and $p$-recognizable sets of integers.
\newblock {\em Bull. Belg. Math. Soc.}, 1:191--238, 1994.
\newblock Corrigendum, {\it Bull.\ Belg.\ Math.\ Soc.} {\bf 1} (1994), 577.

\bibitem{CharlierRampersadShallit2012}
Emilie Charlier, Narad Rampersad, and Jeffrey Shallit.
\newblock Enumeration and decidable properties of automatic sequences.
\newblock {\em Int. J. Found. Comput. Sci.}, 23(5):1035--1066, 2012.

\bibitem{Cobham1969}
Alan Cobham.
\newblock On the base-dependence of sets of numbers recognizable by finite
  automata.
\newblock {\em Mathematical Systems Theory}, 3(2):186--192, 1969.

\bibitem{Durand2013}
Fabien Durand.
\newblock Decidability of the {HD0L} ultimate periodicity problem.
\newblock {\em {RAIRO} - Theor. Inf. and Applic.}, 47(2):201--214, 2013.

\bibitem{Honkala1986}
Juha Honkala.
\newblock A decision method for the recognizability of sets defined by number
  systems.
\newblock {\em ITA}, 20(4):395--403, 1986.

\bibitem{Leroux2005}
J{\'e}r{\^o}me Leroux.
\newblock A polynomial time {P}resburger criterion and synthesis for number
  decision diagrams.
\newblock In {\em Logic in Computer Science 2005 (LICS 2005)}, pages 147--156.
  IEEE Comp. Soc. Press, 2005.

\bibitem{MarsaultSakarovitch2013}
Victor Marsault and Jacques Sakarovitch.
\newblock {U}ltimate {P}eriodicity of b-{R}ecognisable {S}ets: {A}
  {Q}uasilinear {P}rocedure.
\newblock In Marie{-}Pierre B{\'{e}}al and Olivier Carton, editors, {\em
  Developments in Language Theory - 17th International Conference ({DLT}
  2013)}, number 7907 in Lect. Notes Comput. Sci., pages 362--373. Springer,
  2013.

\bibitem{Mitrofanov2011}
Ivan Mitrofanov.
\newblock A proof for the decidability of {HD0L} ultimate periodicity (in
  russian).
\newblock Preprint {arXiv:1110.4780}, 2011.

\bibitem{Muchnik2003}
Andrei~A. Muchnik.
\newblock The definable criterion for definability in {P}resburger arithmetic
  and its applications.
\newblock {\em Theor. Comput. Sci.}, 290(3):1433--1444, 2003.
\newblock English translation of a prior article with the same name in Russian,
  {M}oscow's {I}nstitute of New Technologies, 1991.

\bibitem{Sakarovitch2009}
Jacques Sakarovitch.
\newblock {\em Elements of Automata Theory}.
\newblock Cambridge University Press, 2009.
\newblock Corrected English translation of \textit{{\'E}l{\'e}ments de
  th{\'e}orie des automates}, {V}uibert, 2003.

\bibitem{Semenov1977}
Alexei~L. Semenov.
\newblock Presburgerness of predicates regular in two number systems.
\newblock {\em Siberian Mathematical Journal}, 18(2):289--300, 1977.
\newblock English translation from Russian Translated from Sibirskii
  Matematicheskii Zhurnal, 18(2), pp. 403–418, 1977.

\bibitem{Tarjan1972}
Robert~E. Tarjan.
\newblock Depth-first search and linear graph algorithms.
\newblock {\em {SIAM} J. Comput.}, 1(2):146--160, 1972.

\end{thebibliography}
\end{document}